\newtheorem{theorem}{Theorem}
\newtheorem{corollary}{Corollary}[theorem]
\newtheorem{lemma}[theorem]{Lemma}
\newtheorem{definition}{Definition}
\newtheorem{remark}{Remark}
\newtheorem{example}{Example}
\newcommand{\set}[1]{\mathcal{#1}}
\renewcommand{\sf}[1]{\mathsf{#1}}
\newcommand{\bm}[1]{\boldsymbol{#1}}
\newcommand{\markov}{\textnormal{\mbox{$\multimap\hspace{-0.73ex}-\hspace{-2ex}-$ }}}
\newcommand{\cln}[3]{(#1\succeq #2\ |\ #3)}
\newcommand{\step}[2]{\stackrel{\textnormal{(#1)}}{#2}}
\begin{document}

\title{Source Coding Problems with Conditionally Less Noisy Side Information}

\author{Roy Timo, Tobias J. Oechtering and Mich\`{e}le Wigger %and Sergio Verd\'{u}
\thanks{R. Timo is a Research Fellow with the Institute for Telecommunications Research at the University of South Australia. R.~Timo was a visiting Associate Research Scholar with the Department of Electrical Engineering at Princeton University while the work in this paper was undertaken (e-mail: roy.timo@unisa.edu.au, rtimo@princeton.edu). R. Timo is supported by the Australian Research Council Discovery Grant DP120102123.}

\thanks{Tobias J. Oechtering is with the ACCESS Linnaeus Center, KTH Royal Institute of Technology (e-mail: oech@kth.se).}

\thanks{Mich\`{e}le Wigger is with the Communications and Electrical Department, Telecom ParisTech, (e-mail: michele.wigger@telecom-paristech.fr). M.~Wigger is partly supported by the city of Paris under the programme ``Emergences.''}

%\thanks{Sergio Verd\'{u} is with the Department of Electrical Engineering at Princeton University (e-mail: verdu@princeton.edu)}

\thanks{Some of the material in this paper was presented at the IEEE Information Theory Workshop (ITW), Lausanne, Switzerland, September, 2012.}
}

\maketitle

\begin{abstract}
A computable expression for the rate-distortion (RD) function proposed by Heegard and Berger has eluded information theory for nearly three decades. Heegard and Berger's single-letter achievability bound is well known to be optimal for \emph{physically degraded} side information; however, it is not known whether the bound is optimal for arbitrarily correlated side information (general discrete memoryless sources). In this paper, we consider a new setup in which the side information at one receiver is \emph{conditionally less noisy} than the side information at the other. The new setup includes degraded side information as a special case, and it is motivated by the literature on degraded and less noisy broadcast channels. Our key contribution is a converse proving the optimality of Heegard and Berger's achievability bound in a new setting. The converse rests upon a certain \emph{single-letterization} lemma, which we prove using an information theoretic telescoping identity {recently presented by Kramer}. We also generalise the above ideas to two different successive-refinement problems. 
\end{abstract}

%%%%
%%%% 
%%%%
%%%% Section I
\clearpage
\section{Introduction}\label{Sec:Intro}

\IEEEPARstart{W}{yner} and Ziv's seminal 1976 paper~\cite{Wyner-Jan-1976-A} extended rate-distortion (RD) theory to include side information at the receiver. Nearly a decade later, Heegard and Berger~\cite{Heegard-Nov-1985-A} extended the problem setup of~\cite{Wyner-Jan-1976-A} to include multiple receivers with side information: an example of which, and the principal subject of this paper, is shown in Fig.~\ref{Fig:SourceCoding}. The RD function of this problem, however, has eluded complete characterisation in the sense that matching (computable~\cite[p.~259]{Csiszar-1981-B}) achievability and converse bounds have yet to be obtained for general discrete memoryless sources\footnote{Matsuta and Uyematsu~\cite{Matsuta-Jul-2012-C} recently presented matching achievability and converse bounds for Heegard and Berger's RD function using an information-spectrum approach; these bounds, however, are not computable.}.

The best single-letter achievability bound for two receivers is due to Heegard and Berger~\cite[Thm.~2]{Heegard-Nov-1985-A}, and the best bound for three or more receivers is due to Timo, Chan and Grant~\cite[Thm.~2]{Timo-Aug-2011-A}. Both bounds hold for arbitrary discrete memoryless sources under average per-letter distortion constraints. Matching converses have been obtained for some special cases, with each proof being constructed on a case by case basis, e.g.,~\cite{Sgarro-Mar-1977-A,Heegard-Nov-1985-A,Timo-Nov-2010-A,Watanabe-Aug-2011-C}. A special case of note is when the side information is \emph{physically degraded} in the sense that the side information at one receiver is a noisy version of the side information at the other. Heegard and Berger exploited this degraded stochastic structure in their converse~\cite[pp.~733-734]{Heegard-Nov-1985-A} to prove the optimality of their achievability bound.

In this paper, we consider a new setup in which the side information at one receiver is \emph{conditionally less noisy} than the side information at the other. The setup includes physically degraded side information as a special case, and it is motivated by similar, but apparently unrelated, literature on degraded and less noisy broadcast channels~\cite{Korner-1975-C,El-Gamal-2011-B}. Our key contribution is a new converse that proves the optimality of Heegard and Berger's achievability bound in a new setting (conditionally less noisy sources with a deterministic-distortion function at one receiver). The converse rests upon a certain \emph{single-letterization} lemma, which we prove using an information-theoretic telescoping identity recently presented by Kramer in~\cite[Sec.~G]{Kramer-Dec-2011-A}. 

Elements of Heegard-Berger's problem have appeared in many guises throughout the information theory literature. Special cases of the problem include the almost lossless setup of~\cite{Sgarro-Mar-1977-A}, the complementary side information setup of~\cite{Kimura-Apr-2008-A,Timo-Nov-2010-A}, and the product side information setup of~\cite{Watanabe-Aug-2011-C}. Generalisations of the problem include the Wyner-Ziv successive-refinement work of~\cite{Steinberg-Aug-2004-A,Tian-Aug-2007-A,Tian-Dec-2008-A} and the joint source-channel coding setup of~\cite{Tuncel-Apr-2006-A,Nayak-Apr-2010-A1,Gao-Sep-2011-A}. Other variations of the problem have been investigate with causal side information~\cite{Maor-Jan-2008-A,Timo-Jun-2009-C} and common reconstructions~\cite{Ahmadi-Jul-2012-C}. The converse methods presented in this paper may be applicable to these and other problems, particularly to those with existing results on physically degraded side information. Indeed, to conclude the paper, we apply our converse methods to obtain new results for two successive-refinement problems with side information. 

\emph{Paper Outline:} The remainder of the paper is divided into three sections: Section~\ref{Sec:Single-Letter-Lemmas} presents the single-letterization lemma that will be key to our main results (converses); Section~\ref{Sec:Heegard-Berger} presents a new converse for Heegard and Berger's RD problem shown in Fig.~\ref{Fig:SourceCoding}; and Section~\ref{Sec:Successive-Refinement} presents new converses for two successive-refinement problems with side information (physically degraded side information~\cite{Steinberg-Aug-2004-A,Tian-Aug-2007-A} and scalable side information~\cite{Tian-Dec-2008-A}).

%\emph{Paper Outline:} The rest of the paper is divided into three sections. In Section~\ref{Sec:Single-Letter-Lemmas}, we state two lemmas that will be key to proving our main results (converses). Section~\ref{Sec:Heegard-Berger} concerns the RD problem shown in Fig.~\ref{Fig:SourceCoding}. We give a formal definition of the RD function in Section~\ref{Sec:HB-ProblemStatement},  and review Heegard and Berger's results for physically degraded side information in Section~\ref{Sec:HB-Existing-Results}. In Section~\ref{Sec:HB-CLN}, we first introduce conditionally less noisy side information and show that it generalises degraded side information; we then give a new converse for the special case where the distortion function at Receiver~1 is deterministic. In Section~\ref{Sec:Successive-Refinement}, we apply the converse ideas developed in Section~\ref{Sec:Heegard-Berger} to the problem of successive-refinement with receiver side information: Section~\ref{Sec:Successive-Refinement-Deg} generalises the physically degraded setups of Steinberg and Merhav~\cite{Steinberg-Aug-2004-A} as well as Tian and Diggavi~\cite{Tian-Aug-2007-A}; and Section~\ref{Sec:Scalable} generalises the scalable side information setup of Tian and Diggavi~\cite{Tian-Dec-2008-A}.

\emph{Notation:} All random variables in this paper are discrete and finite and denoted by uppercase letters, e.g., $X$. The alphabet of a random variable is written in matching calligraphic font, e.g. $\set{X}$ is the alphabet of $X$. The $n$-fold Cartesian product of an alphabet is denoted by boldface font, e.g. $\bm{\set{X}}$ is the $n$-fold product of $\set{X}$. If a random vector $(X,Y,Z)$ forms a Markov chain in the same order ($X$ is conditionally independent of $Z$ given $Y$), then we write $X \markov Y \markov Z$. The symbol $\oplus$ denotes modulo-two addition. 

%%%%
%%%%
%%%%
%%%% Section II

\section{A Lemma}\label{Sec:Single-Letter-Lemmas}

This section concerns a single-letterization (or, entropy-characterisation) problem: express the difference of two $n$-letter conditional mutual informations with a single-letter expression. The lemma in this section is used to prove our converse results.

Consider a tuple of random variables $(R,S_1,S_2,T,L)$ with an arbitrary joint distribution. Let
\begin{equation}
(\bm{R},\bm{S_1},\bm{S_2},\bm{T},\bm{L}) 
\triangleq 
(R_1,S_{1,1},S_{2,1},T_1,L_1),
(R_2,S_{1,2},S_{2,2},T_2,L_2),
\ldots,
(R_n,S_{1,n},S_{2,n},T_n,L_n)
\end{equation}
denote an $n$-tuple of $n$ independent and identically distributed (i.i.d.) tuples of $(R,S_1,S_2,T,L)$. Further, suppose that $J$ is jointly distributed with the $n$-tuple $(\bm{R},\bm{S_1},\bm{S_2},\bm{T},\bm{L})$ and 
\begin{equation}
J\ \markov (\bm{R},\bm{L})\ \markov (\bm{S_1},\bm{S_2}, \bm{T})
\end{equation}
forms a Markov chain. Consider the following difference of $n$-letter conditional mutual informations:
\begin{equation}
I(J; \bm{S_2}|\bm{L}) - I(J; \bm{S_1}|\bm{L}).
\end{equation}
We wish to know whether this difference can be expressed in a \emph{single-letter} form in the sense of Csisz\'{a}r and K\"{o}rner~\cite[p.~259]{Csiszar-1981-B}. The next lemma answers this question in the affirmative.

\begin{lemma}\label{Lem:Conditional-Mathis-Lemma}
Let $(J,\bm{R}, \bm{S_1}, \bm{S_2},\bm{T},\bm{L})$ be defined as above. There exists an auxiliary random variable $W$, jointly distributed with $(R,S_1,S_2,T,L)$ and with alphabet $\set{W}$,  such that
\begin{equation}\label{Eqn:Mathis-Conditional-Card}
|\set{W}| \leq |\set{R}||\set{L}|,
\end{equation}
\begin{equation}\label{Eqn:Mathis-Conditional-MI-Equality}
I(J; \bm{S_2}|\bm{L}) - I(J; \bm{S_1}|\bm{L}) = n\big( I(W;S_2|L) - I(W;S_1|L) \big)
\end{equation}
and 
\begin{equation}
W\ \markov (R,L)\ \markov (S_1,S_2, T) \label{Eqn:MC0}
\end{equation}
forms a Markov chain. If, in addition, $L$ is a function of $R$, then the chain in~\eqref{Eqn:MC0} can be replaced by 
\begin{equation}
W \markov R \markov (S_1,S_2,T)
\end{equation}
and the cardinality bound in~\eqref{Eqn:Mathis-Conditional-Card} can be tightened to
\begin{equation}
|\set{W}| \leq |\set{R}|.
\end{equation}
\end{lemma}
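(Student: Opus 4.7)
The plan is to apply Kramer's telescoping identity to decompose the $n$-letter mutual-information difference into a sum of per-letter contributions, and then absorb a uniform time-sharing variable into the single-letter auxiliary $W$. Setting $A_i \triangleq (S_{2,1},\ldots,S_{2,i},S_{1,i+1},\ldots,S_{1,n})$ for $i = 0, 1, \ldots, n$, so that $A_0 = \bm{S_1}$ and $A_n = \bm{S_2}$, the telescoping identity together with the chain rule yields
\begin{equation*}
I(J; \bm{S_2}|\bm{L}) - I(J; \bm{S_1}|\bm{L}) = \sum_{i=1}^n \left[ I(J; S_{2,i} \mid \bm{L}, V_i) - I(J; S_{1,i} \mid \bm{L}, V_i) \right],
\end{equation*}
where $V_i \triangleq (S_{2,1},\ldots,S_{2,i-1}, S_{1,i+1},\ldots,S_{1,n})$.

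Next, I would introduce the per-position auxiliary $W_i \triangleq (J, V_i, L_1, \ldots, L_{i-1}, L_{i+1}, \ldots, L_n)$ and verify that each summand above equals $I(W_i; S_{2,i} \mid L_i) - I(W_i; S_{1,i} \mid L_i)$. Expanding via the chain rule, $I(W_i; S_{2,i} \mid L_i)$ splits as $I(V_i, L_1, \ldots, L_{i-1}, L_{i+1}, \ldots, L_n; S_{2,i} \mid L_i) + I(J; S_{2,i} \mid V_i, \bm{L})$, and similarly for $S_{1,i}$. The key observation is that, by the i.i.d.\ assumption, $V_i$ and $\{L_j : j \neq i\}$ are functions of the position-$j \neq i$ tuples and hence jointly independent of the position-$i$ tuple $(R_i, L_i, S_{1,i}, S_{2,i}, T_i)$; so the leading mutual informations vanish and the desired equality follows. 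Introducing $Q \sim \textnormal{Unif}[n]$ independent of all other variables and setting $W \triangleq (W_Q, Q)$, $R \triangleq R_Q$, $L \triangleq L_Q$, $S_1 \triangleq S_{1,Q}$, $S_2 \triangleq S_{2,Q}$, $T \triangleq T_Q$, the usual $Q$-averaging (noting that the marginal of $(R, L, S_1, S_2, T)$ matches the single-letter law by i.i.d.) gives
\begin{equation*}
n\bigl[I(W; S_2 \mid L) - I(W; S_1 \mid L)\bigr] = \sum_{i=1}^n \bigl[I(W_i; S_{2,i} \mid L_i) - I(W_i; S_{1,i} \mid L_i)\bigr],
\end{equation*}
which is exactly \eqref{Eqn:Mathis-Conditional-MI-Equality}.

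For the Markov chain \eqref{Eqn:MC0}, I would combine the hypothesis $J \markov (\bm{R}, \bm{L}) \markov (\bm{S_1}, \bm{S_2}, \bm{T})$ with the i.i.d.\ product factorization to conclude that, given $(R_i, L_i)$, the tuple consisting of $J$ and all position-$j \neq i$ random variables is independent of $(S_{1,i}, S_{2,i}, T_i)$; since $W_i$ is a function of this tuple and $Q$ is independent of everything, $W \markov (R, L) \markov (S_1, S_2, T)$ follows. The cardinality bound \eqref{Eqn:Mathis-Conditional-Card} is a standard Carath\'{e}odory--Fenchel argument: $I(W; S_2 \mid L) - I(W; S_1 \mid L)$ depends on the joint distribution only through the conditional kernel $p(r, l \mid w)$, so preserving the marginal $p(r, l)$ requires $|\set{R}||\set{L}| - 1$ constraints while the functional adds one more, giving $|\set{W}| \leq |\set{R}||\set{L}|$. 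When $L$ is a function of $R$, conditioning on $(R, L)$ coincides with conditioning on $R$, so the Markov chain collapses to $W \markov R \markov (S_1, S_2, T)$, and the marginal constraint reduces to $|\set{R}| - 1$, giving $|\set{W}| \leq |\set{R}|$. I expect the main technical subtlety to be the careful verification that the overhead terms in the $W_i$ expansion vanish, since this is precisely the step that lets the per-position auxiliary fully absorb the telescoping contribution.
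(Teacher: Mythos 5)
Your proof is correct and lands on exactly the same per-position auxiliary $W_i = (J, S_{2,1}^{i-1}, S_{1,i+1}^n, L_1^{i-1}, L_{i+1}^n)$, the same time-sharing construction with $\tilde{W}=(W_Q,Q)$, the same use of i.i.d.\ structure to verify the Markov chain, and the same support-lemma cardinality argument as the paper. Where the two proofs genuinely differ is in how the telescoping is arranged. The paper expands $I(J;\bm{S_2},\bm{L})$ with the plain chain rule, expands $I(J;\bm{S_1},\bm{L})$ separately via Kramer's shift-based identity $\sum_i I(A_1^i;B_{i+1}^n)=\sum_i I(A_1^{i-1};B_i^n)$, and then relies on two awkward cross-terms cancelling when the expansions are subtracted. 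You instead form the one-coordinate-at-a-time interpolation $A_i=(S_{2,1}^i,S_{1,i+1}^n)$ between $A_0=\bm{S_1}$ and $A_n=\bm{S_2}$, telescope the difference $I(J;\bm{S_2}|\bm{L})-I(J;\bm{S_1}|\bm{L})=\sum_i\bigl[I(J;A_i|\bm{L})-I(J;A_{i-1}|\bm{L})\bigr]$ directly, and peel off the $i$-th coordinate with the chain rule; the cancellation the paper has to engineer appears automatically. This is a cleaner and more symmetric route. One small mislabel: what you actually use is this elementary interpolation telescoping, $\sum_i\bigl[f(i)-f(i-1)\bigr]=f(n)-f(0)$, not Kramer's shift identity as stated in your opening sentence — though the two are closely related in spirit, and indeed both lead to the same $W_i$. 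You also correctly identified the one genuinely delicate step, namely that i.i.d.\ structure forces the ``overhead'' terms $I(V_i,L_1^{i-1},L_{i+1}^n;S_{2,i}\mid L_i)$ and its $S_1$-counterpart to vanish; and the remainder of your argument (the $Q$-averaging, the Markov chain $W\ \markov (R,L)\ \markov (S_1,S_2,T)$, the $|\set{R}||\set{L}|-1$ marginal constraints plus one functional constraint, and the collapse to $|\set{R}|$ when $L$ is a function of $R$) matches the paper's Appendix~\ref{App:Proof-Mathis-Lemmas}.
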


The proof of Lemma~\ref{Lem:Conditional-Mathis-Lemma}, which is given in Appendix~\ref{App:Proof-Mathis-Lemmas}, makes use of an information-theoretic telescoping identity recently presented by Kramer in~\cite[Sec.~G]{Kramer-Dec-2011-A}. 

%%%%
%%%%
%%%%
%%%% Section III

\section{The Heegard-Berger Problem}\label{Sec:Heegard-Berger}

This section is devoted to Heegard and Berger's RD problem shown in Fig.~\ref{Fig:SourceCoding}. Finding a computable expression for this RD function is a classic, longstanding, open problem in information theory. The section is arranged as follows: we recall the RD function's operational definition in Section~\ref{Sec:HB-ProblemStatement}, we review Heegard and Berger's existing results for degraded side information in Section~\ref{Sec:HB-Existing-Results}, and we state our new results in Section~\ref{Sec:HB-CLN}. 

%%
%% Section III.A

\subsection{Operational Definition of the RD Function}\label{Sec:HB-ProblemStatement}

Consider a tuple of random variables $(X,Y_1,Y_2)$ with an arbitrary joint distribution on $\set{X} \times \set{Y}_1 \times \set{Y}_2$. Let $(\bm{X},\bm{Y_1},\bm{Y_2})$ denote a string of $n$-i.i.d. random vectors $(X,Y_1,Y_2)$, and let $\bm{\set{X}}$, $\bm{\set{Y}_1}$, $\bm{\set{Y}_2}$ denote the $n$-fold Cartesian products of $\set{X}$, $\set{Y}_1$ and $\set{Y}_2$ respectively. Consider the setup of Fig.~\ref{Fig:SourceCoding}: the Transmitter observes $\bm{X}$, Receiver~1 observes $\bm{Y_1}$ and Receiver~2 observes~$\bm{Y_2}$. The string $\bm{X}$ is to be compressed by the Transmitter and reconstructed by both receivers using a block code. The RD function is the smallest rate at which $\bm{X}$ can be compressed, while allowing the receivers to reconstruct $\bm{X}$ to within specified average distortions. 

An $n$-block code for the setup shown in Fig.~\ref{Fig:SourceCoding} consists of three (possibly stochastic) maps. We denote these maps by
\begin{equation}
f : \bm{\set{X}} \longrightarrow \set{M}
\end{equation}
and
\begin{align}
g_j &: \set{M} \times \bm{\set{Y}_j} \longrightarrow \bm{\hat{\set{X}}_j},\quad j = 1,2,
\end{align}
where $\set{M}$ is a finite index set with cardinality $|\set{M}|$ depending on $n$,  $\hat{\set{X}}_j$ is the reconstruction alphabet of Receiver~$j$ and $\bm{\hat{\set{X}}_j}$ its $n$-fold Cartesian product. The Transmitter sends $M \triangleq f(\bm{X})$ and Receiver~$j$ reconstructs $\bm{\hat{X}_j}$ $\triangleq$ $g_j(M,\bm{Y_j})$.

\begin{figure}
\begin{center}
\includegraphics[width=0.45\columnwidth]{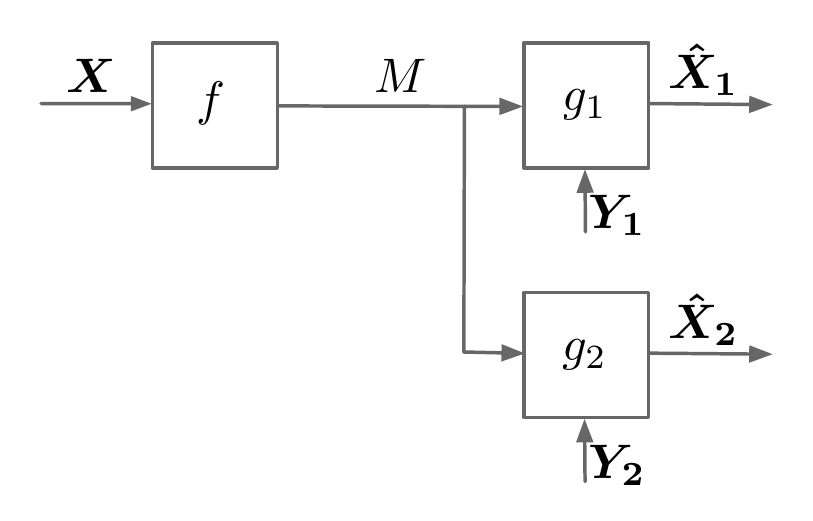}
\end{center}
\caption{Rate distortion with side information at two receivers.}
\label{Fig:SourceCoding}
\end{figure}

Let
\begin{align}
\delta_j &: \set{X} \times \hat{\set{X}}_j \longrightarrow [0,\infty), \quad j = 1,2,
\end{align}
be bounded per-letter distortion functions. For simplicity, and without loss of generality, we assume that $\delta_1$ and $\delta_2$ are \emph{normal}~\cite[p.~185]{Yeung-2008-B}; that is, for each $x$ in $\set{X}_j$ there exists some $\hat{x}$ in $\hat{\set{X}}_j$ such that $\delta_j(x,\hat{x}) = 0$.

\begin{definition}\label{Def:Achievable-Rate}
A rate $R$ is said to be $(D_1,D_2)$-\emph{achievable} if for each $\epsilon > 0$ there exists an $n$-block code $(f,g_1,g_2)$, for some sufficiently large blocklength $n$, satisfying 
\begin{align}\label{Eqn:Achievable-Rate}
R + \epsilon &\geq \frac{1}{n} \log |\set{M}|  
\end{align}
and
\begin{align}
D_j + \epsilon &\geq \mathbb{E}  \frac{1}{n} \sum_{i=1}^n \delta_j(X_i,\hat{X}_{j,i}), \quad j = 1,2.
\end{align}
\end{definition}

\begin{definition}[RD Function]\label{Def:RD-Function}
\begin{equation}
R(D_1,D_2) \triangleq \min \big\{ R > 0 : R  \text{ is $(D_1,D_2)$-achievable} \big\}, \quad D_1 \geq 0,\ D_2 \geq 0.
\end{equation}
\end{definition}

%%
%% Section III.B

\subsection{Existing Results}\label{Sec:HB-Existing-Results}

Computable \emph{single-letter}~\cite{Csiszar-1981-B} expressions for the RD function have been found in some special cases, see~\cite{Heegard-Nov-1985-A,Timo-Nov-2010-A,Watanabe-Aug-2011-C}. The achievability proofs of these cases all follow from a result by Heegard and Berger~\cite{Heegard-Nov-1985-A}, which we review in the next lemma. The converses, in contrast, are derived on a case-by-case basis. 

\begin{lemma}[Achievability]\label{Lem:HB}
The RD function is bound from above by~\cite[Thm.~2]{Heegard-Nov-1985-A}
\begin{equation}
\begin{split}
R(D_1,D_2) \leq \min_{(A,B,C)} \Big\{ \max\big\{I(X;C|Y_1),I(X;C|Y_2)\big\} + I(X;A|C,Y_1) + I(X;B|C,Y_2) \Big\},\label{Eqn:RD1D2RHS}
\end{split}
\end{equation}
where minimisation is taken over all auxiliary random variables $(A,B,C)$, jointly distributed with the source $(X,Y_1,$ $Y_2)$, such that the following is true:
\begin{itemize}
\item[(i)] the auxiliary random variables are conditionally independent of the side information given $X$,
\begin{equation}\label{Eqn:HB-Achievability-Markov-Chain}
(A,B,C)\ \markov X\ \markov (Y_1,Y_2);
\end{equation}
\item[(ii)] the cardinalities of the alphabets of $C$, $A$ and $B$ are respectively bound by 
\begin{subequations}\label{Eqn:HB-Achievability-Card-Bounds}
\begin{align}
|\hspace{0.4mm}\set{C}\hspace{0.4mm}| &\leq |\set{X}| + 3 \\
|\set{A}| &\leq |\set{C}| |\set{X}| + 1\\
|\hspace{0.2mm}\set{B}\hspace{0.3mm}| &\leq |\set{C}| |\set{X}| + 1
\end{align}
(these cardinality bounds are new, see Appendix~\ref{App:HBcardinality} for our proof);
\end{subequations}
\item[(iii)] there exist deterministic maps
\begin{subequations}\label{Eqn:Reconstruction}
\begin{align}
\phi_1 &: \set{A} \times \set{C} \times \set{Y}_1 \longrightarrow \hat{\set{X}}_1\\
\phi_2 &: \set{B} \times \set{C} \times \set{Y}_2 \longrightarrow \hat{\set{X}}_2
\end{align}
\end{subequations}
with 
\begin{subequations}\label{Eqn:distortions}
\begin{align}
D_1 &\geq \mathbb{E}\ \delta_1\big(X,\phi_1(A,C,Y_1)\big)\\
D_2 &\geq \mathbb{E}\ \delta_2\big(X,\phi_2(B,C,Y_2)\big).
\end{align}
\end{subequations}
\end{itemize}
\end{lemma}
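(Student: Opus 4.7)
The plan is to use a three-layer Wyner--Ziv coding scheme: transmit a common description $C$ that both receivers can decode (using binning against the \emph{worse} of the two side informations), and then transmit private refinements $A$ and $B$ to Receivers~1 and 2 respectively via further Wyner--Ziv binning against $Y_1$ and $Y_2$. I would fix a joint distribution on $(X,Y_1,Y_2,A,B,C)$ satisfying (i) and reconstruction maps $\phi_1,\phi_2$ satisfying (iii), and pick a small $\epsilon > 0$.

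For codebook generation, generate $2^{n(I(X;C)+\epsilon)}$ independent sequences $\bm{C}(m_C)$ i.i.d.\ from the marginal $P_C$, and bin them uniformly into $2^{n(R_0+\epsilon)}$ bins with $R_0 = \max\{I(X;C|Y_1), I(X;C|Y_2)\}$. For each $\bm{C}(m_C)$, generate $2^{n(I(X;A|C)+\epsilon)}$ conditionally i.i.d.\ sequences $\bm{A}(m_C,m_A) \sim \prod_i P_{A|C}(\cdot|C_i(m_C))$ and bin them into $2^{n(I(X;A|C,Y_1)+2\epsilon)}$ bins; analogously generate and bin the $\bm{B}$ codebook with $2^{n(I(X;B|C,Y_2)+2\epsilon)}$ bins. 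To encode $\bm{X}$, find indices $(m_C,m_A,m_B)$ such that $(\bm{X},\bm{C}(m_C),\bm{A}(m_C,m_A),\bm{B}(m_C,m_B))$ are jointly $\epsilon$-typical under $P_{XABC} = P_X P_{ABC|X}$; by the (conditional) covering lemma, such indices exist with probability tending to one, since the codebook sizes exceed the respective covering thresholds $I(X;C)$, $I(X;A|C)$, and $I(X;B|C)$. The Transmitter sends the three bin indices, giving total rate $R_0 + I(X;A|C,Y_1) + I(X;B|C,Y_2) + O(\epsilon)$, which matches the RHS of~\eqref{Eqn:RD1D2RHS}.

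For decoding, Receiver $j \in \{1,2\}$ first recovers $\bm{C}$ by searching in its $C$-bin for a sequence jointly typical with $\bm{Y_j}$; since $R_0 \geq I(X;C|Y_j) = I(C;X,Y_j) - I(C;Y_j)$ and the Markov chain $C \markov X \markov Y_j$ (implied by~\eqref{Eqn:HB-Achievability-Markov-Chain}) gives $I(C;X,Y_j) = I(C;X)$, the Wyner--Ziv packing lemma guarantees vanishing error. Receiver~1 then decodes $\bm{A}$ by searching in its $A$-bin for a sequence jointly typical with $(\bm{C},\bm{Y_1})$, which succeeds because the bin rate is $I(X;A|C,Y_1)$; Receiver~2 decodes $\bm{B}$ analogously. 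Both receivers then apply $\phi_j$ componentwise to produce $\hat{X}_{j,i} = \phi_j(A_i \text{ or } B_i, C_i, Y_{j,i})$. By typicality of $(\bm{X},\bm{A},\bm{B},\bm{C},\bm{Y_1},\bm{Y_2})$, the empirical distortion concentrates around $\mathbb{E}\,\delta_j(X,\phi_j(\cdot,C,Y_j)) \leq D_j$, establishing~\eqref{Eqn:distortions}.

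For the cardinality bounds~\eqref{Eqn:HB-Achievability-Card-Bounds}, I would apply the Eggleston/support lemma in stages. First fix the conditional $P_{AB|CX}$ and view the outer distribution $P_{C|X}$ as being optimized: we must preserve $|\set{X}|-1$ marginals of $X$, the two conditional entropies $H(X|C,Y_1)$ and $H(X|C,Y_2)$ (hence $\max\{I(X;C|Y_1),I(X;C|Y_2)\}$), and the two refinement-plus-distortion terms, which is $|\set{X}|+3$ scalar functionals in total; the support lemma then yields $|\set{C}| \leq |\set{X}|+3$. Next, with $C$ fixed at each value $c$, reduce $P_{A|C=c,X}$ preserving $|\set{X}|$ conditional marginals of $X$ given $C=c$ plus the distortion term, giving $|\set{A}| \leq |\set{C}||\set{X}|+1$; $|\set{B}|$ is handled symmetrically. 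The main obstacle I anticipate is making the first cardinality reduction compatible with the \emph{nonsmooth} $\max$ functional in the rate expression: one must argue that the support lemma preserves both terms inside the $\max$ separately, not merely their maximum, which forces two constraints and accounts for the $+3$ (rather than $+2$) in the bound on $|\set{C}|$.
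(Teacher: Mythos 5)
Your random-binning scheme for the rate expression is the standard Heegard--Berger argument, which the paper does not reprove but simply cites to \cite[Thm.~2]{Heegard-Nov-1985-A}; that part is fine, up to one small caveat: asking the encoder to find $(m_C,m_A,m_B)$ with the full quadruple $(\bm{X},\bm{C},\bm{A},\bm{B})$ jointly typical requires a mutual-covering argument (the $A$- and $B$-codebooks are generated independently given $\bm{C}$), whereas it suffices --- and is cleaner --- to cover $(\bm{X},\bm{C})$ by $\bm{A}$ and by $\bm{B}$ in two \emph{separate} steps, since $A$ and $B$ never appear together in any rate or distortion expression (equivalently, one may assume $A\markov (C,X)\markov B$ without loss of generality).

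The genuinely new content of this lemma is the cardinality bound, and here your accounting does not close. The list of quantities you propose to preserve when reducing $C$ --- the $|\set{X}|-1$ marginals of $X$, the two conditional entropies $H(X|C,Y_1)$ and $H(X|C,Y_2)$, the two refinement rates, and the two distortions --- comprises $|\set{X}|+5$ functionals, not $|\set{X}|+3$; and folding each refinement rate into its distortion as a single ``refinement-plus-distortion'' functional does not work, because the support lemma preserves only the chosen functional's expectation, so preserving a sum does not control the rate term and the distortion constraint individually. The paper gets to $+3$ by a different combination: writing branch $j=1$ of the max as $H(X|Y_1)-H(X|A,C,Y_1)+I(X;B|C,Y_2)$ (note $H(X|C,Y_1)$ \emph{cancels} inside each branch) and branch $j=2$ analogously, it suffices to preserve the two functionals whose expectations are $I(X;B|Y_2,C)-H(X|A,C,Y_1)$ and $I(X;A|Y_1,C)-H(X|B,C,Y_2)$, together with the marginal of $X$ (which fixes $H(X|Y_1)$ and $H(X|Y_2)$) and the two distortion functionals --- exactly four extra functionals. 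Your closing intuition that both branches of the max must be preserved separately is correct, but without this cancellation the count is $|\set{X}|+5$. Two further details you omit: before reducing $\set{A}$ and $\set{B}$ one must impose $A\markov(C,X)\markov B$ so that the two reductions do not interfere (legitimate because nothing in the objective couples $A$ and $B$), and the second-stage reduction of $\set{A}$ must preserve $H(X|A,C,Y_1)$ as an explicit functional in addition to the $(C,X)$ joint distribution and the distortion --- your stage-two list ($|\set{X}|$ conditional marginals per $c$ plus the distortion) happens to produce the number $|\set{C}||\set{X}|+1$ but omits the rate term; the correct count is $(|\set{C}||\set{X}|-1)+1+1$.
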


The next definition and theorem review a special case for which the upper bound of Lemma~\ref{Lem:HB} is tight.

\begin{definition}\label{Def:DegradedSI}
The side information is said to be \emph{physically degraded} if  
\begin{equation}\label{Eqn:Physically-Degraded}
X \markov Y_2 \markov Y_1. 
\end{equation}
\end{definition}

\begin{theorem}\label{Thm:DegradedSI}
If the side information is physically degraded, then~\cite[Thm.~3]{Heegard-Nov-1985-A}
\begin{equation}
R(D_1,D_2) = \min_{(B,C)}\Big\{ I(X;C|Y_1) +  I(X; B |C ,Y_2) \Big\},
\end{equation}
where the minimisation is taken over all auxiliary $(B,C)$, jointly distributed with $(X,Y_1,Y_2)$, such that 
\begin{enumerate}
\item[(i)] the auxiliary random variables are conditionally independent of the side information given $X$,
\begin{equation}
(B,C) \markov X \markov (Y_1,Y_2);
\end{equation}
\item[(ii)] there exist deterministic maps
\begin{align}
\phi_1 &: \set{C} \times \set{Y}_1 \longrightarrow \hat{\set{X}}_1\\
\phi_2 &: \set{B} \times \set{C} \times \set{Y}_2 \longrightarrow \hat{\set{X}}_2
\end{align}
with 
\begin{align}
D_1 &\geq \mathbb{E}\ \delta_1\big(X,\phi_1(C,Y_1)\big)  \\
D_2 &\geq \mathbb{E}\ \delta_2\big(X,\phi_2(B,C,Y_2)\big).
\end{align}
\end{enumerate}
\end{theorem}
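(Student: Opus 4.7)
For achievability, I would specialise Lemma~\ref{Lem:HB} by taking $A$ to be a deterministic constant. The key observation is that, under the physical degradation $X\markov Y_2\markov Y_1$, the Markov chain $(B,C)\markov X\markov(Y_1,Y_2)$ from Lemma~\ref{Lem:HB}(i) extends to the long chain $C\markov X\markov Y_2\markov Y_1$, which in turn implies $C\markov Y_2\markov Y_1$. Consequently $H(C|Y_1)\geq H(C|Y_2)$, while $H(C|X,Y_1)=H(C|X,Y_2)=H(C|X)$ by the same Markov chains, so $I(X;C|Y_1)\geq I(X;C|Y_2)$. The maximum in Lemma~\ref{Lem:HB} therefore collapses to $I(X;C|Y_1)$, giving exactly the upper bound in Theorem~\ref{Thm:DegradedSI}.

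For the converse I would follow the classical Heegard-Berger argument. Given any $n$-block code $(f,g_1,g_2)$ meeting the distortion constraints within $\epsilon$, and setting $M=f(\bm{X})$, the goal is to produce single-letter auxiliaries $(C_i,B_i)$ with $(B_i,C_i)\markov X_i\markov (Y_{1,i},Y_{2,i})$ for which
\[
n(R+\epsilon)\;\geq\;H(M)\;\geq\;\sum_{i=1}^{n}\bigl[\,I(X_i;C_i|Y_{1,i})\,+\,I(X_i;B_i|C_i,Y_{2,i})\,\bigr].
\]
Starting from $H(M)\geq H(M|\bm{Y_1})=I(M;\bm{X}|\bm{Y_1})$, I would expand by the chain rule and invoke the Csisz\'ar sum identity, with the natural identifications $C_i:=(M,Y_1^{i-1},Y_{2,i+1}^n)$ and $B_i:=(C_i,Y_2^{i-1})$, exploiting the component-wise degradation $X_i\markov Y_{2,i}\markov Y_{1,i}$ to cancel cross-terms and to shift conditioning between $\bm{Y_1}$ and $\bm{Y_2}$ as required. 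A standard time-sharing argument --- introducing $Q$ uniform on $\{1,\ldots,n\}$ independent of $(\bm{X},\bm{Y_1},\bm{Y_2})$ and setting $C=(Q,C_Q)$, $B=(Q,B_Q)$, $X=X_Q$, $Y_j=Y_{j,Q}$ --- then yields single-letter auxiliaries with the desired Markov structure, while convexity of expectation together with the distortion constraints furnish deterministic reconstructions $\phi_1,\phi_2$ of the required form.

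The main obstacle is the single-letterisation bookkeeping: choosing $(C_i,B_i)$ so that the Csisz\'ar sum identity produces exactly the two-term lower bound in the claimed form, while simultaneously preserving the Markov chain $(B,C)\markov X\markov(Y_1,Y_2)$ on the time-shared auxiliaries. Physical degradation plays a double role in the proof --- it cancels cross-terms during this single-letterisation, and it collapses the maximum in Lemma~\ref{Lem:HB} on the achievability side. Weakening degradation to conditional less-noisiness (the setting of the paper's main new result) is precisely where the classical Csisz\'ar-sum bookkeeping breaks down, motivating the new single-letterisation machinery of Lemma~\ref{Lem:Conditional-Mathis-Lemma}.
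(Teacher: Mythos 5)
The paper does not prove Theorem~\ref{Thm:DegradedSI}; it states the result and cites it directly from Heegard and Berger~\cite[Thm.~3]{Heegard-Nov-1985-A}, so there is no proof in the paper to compare against and your reconstruction must be judged on its own.

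Your achievability argument is correct and complete: under $X\markov Y_2\markov Y_1$ together with $C\markov X\markov(Y_1,Y_2)$ one does get $C\markov Y_2\markov Y_1$, hence $I(X;C|Y_1)\geq I(X;C|Y_2)$, and the maximum in Lemma~\ref{Lem:HB} collapses exactly as you say. This matches the specialisations the paper uses elsewhere (Corollary~\ref{Cor:DeterministicDegradedSI}, achievability of Theorem~\ref{Thm:LossyCoding}).

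Your converse sketch, however, has a concrete gap beyond the ``bookkeeping'' you acknowledge. With $C_i=(M,Y_1^{i-1},Y_{2,i+1}^n)$, the actual reconstruction $\hat{X}_{1,i}$ --- the $i$-th coordinate of $g_1(M,\bm{Y_1})$ --- in general depends on $Y_{1,i+1}^n$, and is therefore \emph{not} a function of $(C_i,Y_{1,i})$. Condition~(ii) of the theorem requires a deterministic $\phi_1:\set{C}\times\set{Y}_1\to\hat{\set{X}}_1$ meeting the distortion budget, and your proposal tacitly assumes $\hat{X}_{1,i}$ already has that form; it does not. The standard repair uses degradation a third time: pass $Y_{2,i+1}^n\subseteq C_i$ through the degrading channel $p(y_1|y_2)$ to synthesise a fresh $\tilde{Y}_{1,i+1}^n$, so that $(\bm{X},Y_1^{i},\tilde{Y}_{1,i+1}^n)$ has the same law as $(\bm{X},\bm{Y_1})$; then $g_1(M,Y_1^{i},\tilde{Y}_{1,i+1}^n)_i$ is a \emph{stochastic} function of $(C_i,Y_{1,i})$ with the same expected distortion, and the Bayes-optimal deterministic estimator of $X_i$ given $(C_i,Y_{1,i})$ can only improve it. (For $B_i$ no such step is needed, since $(B_i,C_i,Y_{2,i})$ already contains all of $(M,\bm{Y_2})$.) Your summary attributes degradation's role in the converse solely to ``cancelling cross-terms,'' which understates how it is used; without the simulation step the converse does not close. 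Your final paragraph explaining why conditional less-noisiness breaks this classical argument, and why Lemma~\ref{Lem:Conditional-Mathis-Lemma} is needed, is accurate.
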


The Markov chain in~\eqref{Eqn:Physically-Degraded}, which defines physically degraded side information, enables a crucial step in Heegard and Berger's converse of Theorem~\ref{Thm:DegradedSI}, see~\cite[pp.~733-734]{Heegard-Nov-1985-A}. The goal of the next section is to broaden the scope of Theorem~\ref{Thm:DegradedSI} by replacing the Markov chain~\eqref{Eqn:Physically-Degraded} with a more general condition. Our main results, however, will fall slightly short of this goal: we will need to restrict attention to the setting where Receiver~1 requires an almost lossless copy of a function of $X$. More specifically, we will require that $D_1 = 0$ and $\delta_1$ is deterministic in the following sense. 

\begin{definition}\label{Def:DeterministicDistortion}
$\delta_1$ is said to be \emph{deterministic}~\cite{El-Gamal-Nov-1982-A,Tian-Dec-2008-A}  if there is an alphabet $\tilde{\set{X}}$ with $\hat{\set{X}}_1 = \tilde{\set{X}}$ and a deterministic map 
\begin{equation}
\psi : \set{X} \longrightarrow \tilde{\set{X}} 
\end{equation}
such that 
\begin{equation}
\delta_1(x,\hat{x}) \triangleq \left\{
\begin{array}{ll}
0 & \text{ if } \hat{x} = \psi(x)\\
1 & \text{ otherwise}.
\end{array}
\right.
\end{equation}
\end{definition}

For later discussions, we need to specialise Theorem~\ref{Thm:DegradedSI} to deterministic $\delta_1$. Let 
\begin{equation}
\tilde{X} \triangleq \psi(X).
\end{equation}
Define
\begin{equation}\label{Eqn:Wyner-Ziv-1}
S(D_2) \triangleq  \min_B\ I(X;B|\tilde{X},Y_2),\quad D_2 \geq 0,
\end{equation}
where the minimisation is taken over all auxiliary $B$, jointly distributed with $(X,Y_1,Y_2)$, such that \begin{enumerate}
\item[(i)] the auxiliary random variable $B$ is conditionally independent of the side information $(Y_1,Y_2)$ given $X$,
\begin{equation}
B\ \markov X \ \markov (Y_1,Y_2);
\end{equation}
\item[(ii)] the cardinality of the alphabet of $B$ is bound by
\begin{equation}
|\set{B}| \leq |\set{X}| + 1;
\end{equation}
\item[(iii)] there exists deterministic 
\begin{equation}
\phi_2 : \set{B} \times \tilde{\set{X}} \times \set{Y}_2 \longrightarrow \hat{\set{X}}_2
\end{equation}
with
\begin{equation}
D_2 \geq \mathbb{E}\ \delta_2\big(X,\phi_2(B,\tilde{X},Y_2)\big).
\end{equation}
\end{enumerate}
The function $S(D_2)$ is non-increasing, convex and continuous in $D_2$, see~\cite[Thm.~A2]{Wyner-Jan-1976-A}. The next corollary is proved in Appendix~\ref{App:DeterministicDegradedSI}.

\begin{corollary}\label{Cor:DeterministicDegradedSI}
If the side information is physically degraded and $\delta_1$ is deterministic, then 
\begin{equation}
R(0,D_2) = H(\tilde{X}|Y_1) + S(D_2).
\end{equation}
\end{corollary}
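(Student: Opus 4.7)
The plan is to derive the corollary directly from Theorem~\ref{Thm:DegradedSI}, specialised to $D_1 = 0$ and deterministic $\delta_1$. The key observation is that under the deterministic distortion with $D_1 = 0$, the reconstruction map must satisfy $\phi_1(C, Y_1) = \tilde{X}$ almost surely, so $\tilde{X}$ is a function of $(C, Y_1)$. This allows us to split off the rate $H(\tilde{X}|Y_1)$ cleanly.

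For the upper bound (achievability), I would simply evaluate the Heegard--Berger bound of Theorem~\ref{Thm:DegradedSI} at the choice $C = \tilde{X}$. Since $\tilde{X}$ is a function of $X$, the Markov chain $(B,\tilde{X}) \markov X \markov (Y_1,Y_2)$ reduces to $B \markov X \markov (Y_1,Y_2)$, which is exactly the constraint appearing in the definition of $S(D_2)$. The first term becomes $I(X;\tilde{X}|Y_1) = H(\tilde{X}|Y_1)$, the map $\phi_1$ is the identity on $\tilde{X}$, and the second term $I(X;B|\tilde{X},Y_2)$ then infimises to $S(D_2)$ over admissible $B$, yielding $R(0,D_2) \le H(\tilde{X}|Y_1) + S(D_2)$.

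For the lower bound (converse), I would start from the expression in Theorem~\ref{Thm:DegradedSI} and manipulate the two terms as follows. Since $\tilde{X}$ is a function of $(C,Y_1)$ and of $X$,
\begin{equation*}
I(X;C|Y_1) = I(\tilde{X};C|Y_1) + I(X;C|\tilde{X},Y_1) = H(\tilde{X}|Y_1) + I(X;C|\tilde{X},Y_1).
\end{equation*}
Next I would use the physical degradation $X \markov Y_2 \markov Y_1$ together with $C \markov X \markov (Y_1,Y_2)$ to show the chain $(C,\tilde{X}) \markov Y_2 \markov Y_1$ holds, which via the usual Mrs.~Gerber/less-noisy style calculation (``given $X$, $C$ is independent of $Y_j$'') gives $I(X;C|\tilde{X},Y_1) \ge I(X;C|\tilde{X},Y_2)$. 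Combining with the second term by the chain rule yields
\begin{equation*}
I(X;C|\tilde{X},Y_2) + I(X;B|C,Y_2) \;\ge\; I(X;B,C|\tilde{X},Y_2),
\end{equation*}
where I have dropped the non-negative term $I(\tilde{X};B|C,Y_2)$ that appears when one splits $I(X;B|C,Y_2) = I(\tilde{X};B|C,Y_2) + I(X;B|\tilde{X},C,Y_2)$.

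Finally, I would identify $B' \triangleq (B,C)$ as a valid test channel in the definition of $S(D_2)$: the Markov chain $B' \markov X \markov (Y_1,Y_2)$ is preserved, and the distortion constraint $D_2 \ge \mathbb{E}\,\delta_2(X,\phi_2(B,C,Y_2))$ can be rewritten in the form $D_2 \ge \mathbb{E}\,\delta_2(X,\phi_2'(B',\tilde{X},Y_2))$ by ignoring the $\tilde{X}$ input. The cardinality restriction on $\set{B}$ in the definition of $S(D_2)$ is immaterial at this stage, since any admissible $B'$ gives $I(X;B'|\tilde{X},Y_2) \ge S(D_2)$ (the cardinality-tight minimiser exists by Fenchel--Eggleston--Carath\'{e}odory). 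Assembling the inequalities yields $R(0,D_2) \ge H(\tilde{X}|Y_1) + S(D_2)$. The main subtlety is the degradation step for the conditional-on-$\tilde{X}$ quantities, since it requires verifying that inserting the function $\tilde{X}$ into the conditioning does not break the Markov structure inherited from $X \markov Y_2 \markov Y_1$; once that is in hand, the rest is chain-rule bookkeeping.
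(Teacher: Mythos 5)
Your proof is correct, and the achievability half coincides with the paper's (set $C=\tilde{X}$ in Theorem~\ref{Thm:DegradedSI}). Your converse, however, takes a genuinely different route: the paper does \emph{not} argue from the single-letter formula but gives a short operational ($n$-letter) converse, bounding $H(M)\geq I(\bm{\tilde{X}};M|\bm{Y_1})+I(\bm{X};M|\bm{\tilde{X}},\bm{Y_1},\bm{Y_2})$, using Fano's inequality, the Markov chain $M \markov (\bm{\tilde{X}},\bm{Y_2}) \markov \bm{Y_1}$ implied by physical degradation, and the Wyner--Ziv-style single-letterisation steps of Appendix~\ref{Sec:ConverseProof} to produce $H(\tilde{X}|Y_1)+S(D_2+\epsilon)$. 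You instead take Theorem~\ref{Thm:DegradedSI} as given and lower-bound its minimand: from $D_1=0$ and deterministic $\delta_1$ you get $H(\tilde{X}|C,Y_1)=0$, split $I(X;C|Y_1)=H(\tilde{X}|Y_1)+I(X;C|\tilde{X},Y_1)$, and use degradedness to pass from conditioning on $Y_1$ to $Y_2$; note that your inequality $I(X;C|\tilde{X},Y_1)\geq I(X;C|\tilde{X},Y_2)$ is, after cancelling $H(C|X)$ via $C\markov X\markov(Y_1,Y_2)$, exactly the statement $I(C;Y_2|\tilde{X})\geq I(C;Y_1|\tilde{X})$, i.e.\ an instance of Lemma~\ref{Lem:LessNoisySpecialCases}~(i) with $W=C$, $L=\tilde{X}$, which indeed follows from $(C,\tilde{X})\markov Y_2\markov Y_1$. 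Absorbing $(B,C)$ into a single auxiliary $B'$ for $S(D_2)$ is fine, including your remark that the cardinality bound in the definition of $S(D_2)$ is without loss of generality (standard Wyner--Ziv support-lemma fact). What each approach buys: yours is shorter and purely single-letter, but it leans on the full strength of Heegard and Berger's converse for Theorem~\ref{Thm:DegradedSI}; the paper's operational proof is independent of that converse and deliberately rehearses the Fano/Wyner--Ziv/telescoping skeleton that it then generalises to the conditionally less noisy setting in Lemma~\ref{Lem:Converse} and Theorem~\ref{Thm:LossyCoding}, which is why the authors present it that way.
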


It will be useful to further specialise Corollary~\ref{Cor:DeterministicDegradedSI} to the following two-source with component Hamming distortion functions. This specialisation is central to our understanding of how Corollary~\ref{Cor:DeterministicDegradedSI} can be generalised. 

\begin{definition}
We say that $(X,Y_1,Y_2)$ is a \emph{two-source} if 
\begin{equation}
\set{X} \triangleq \set{X}_1 \times \set{X}_2\quad \text{and}\quad X \triangleq (X_1,X_2),
\end{equation}
where $\set{X}_1$ and $\set{X}_2$ are finite alphabets. In addition, we say that $\delta_1$ and $\delta_2$ are \emph{component Hamming distortion functions} if
\begin{equation}
\hat{\set{X}}_j = \set{X}_j
\end{equation}
and
\begin{equation}
\delta_j(x,\hat{x}) = \left\{
\begin{array}{ll}
0 & \text{ if } \hat{x} = x\\
1 & \text{ otherwise}
\end{array}
\right.
\end{equation}
for $j = 1,2$.
\end{definition}

\begin{corollary}\label{Cor:LosslessDegradedSI} 
Consider a two-source $(X_1,X_2,Y_1,Y_2)$ with component Hamming distortion functions. If the side information is physically degraded, i.e., 
\begin{equation}\label{Eqn:Two-Source-DegradedSI}
(X_1,X_2) \markov Y_2 \markov Y_1,
\end{equation}
then~\cite{Heegard-Nov-1985-A,Timo-Aug-2011-A} 
\begin{equation}\label{Eqn:Cor:LosslessDegradedSI-Rate}
R(0,0) = H(X_1|Y_1) + H(X_2|X_1,Y_2).
\end{equation}
\end{corollary}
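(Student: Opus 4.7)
My plan is to derive this corollary as a direct specialisation of Corollary~\ref{Cor:DeterministicDegradedSI}. Under the component Hamming setup, Receiver~1's distortion $\delta_1$ is precisely the deterministic distortion function of Definition~\ref{Def:DeterministicDistortion} with respect to the projection $\psi(x_1,x_2) \triangleq x_1$; consequently $\tilde{X} = \psi(X_1,X_2) = X_1$. Since the side information is physically degraded by hypothesis~\eqref{Eqn:Two-Source-DegradedSI}, Corollary~\ref{Cor:DeterministicDegradedSI} applies and yields
\begin{equation}
R(0,0) = H(X_1|Y_1) + S(0),
\end{equation}
so the problem reduces to evaluating $S(0)$ as defined in~\eqref{Eqn:Wyner-Ziv-1}.

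Next, I would simplify the objective of $S(\cdot)$ using $\tilde{X}=X_1$ and the chain rule: for any admissible $B$,
\begin{equation}
I(X;B|\tilde{X},Y_2) = I(X_1,X_2;B|X_1,Y_2) = I(X_2;B|X_1,Y_2).
\end{equation}
Hence $S(0) = \min_{B} I(X_2;B|X_1,Y_2)$, where the minimisation is over auxiliary $B$ with $B \markov X \markov (Y_1,Y_2)$ for which there exists a deterministic $\phi_2$ achieving $0 \geq \mathbb{E}\,\delta_2(X,\phi_2(B,X_1,Y_2))$.

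For the converse direction on $S(0)$, I would observe that because $\delta_2$ is the Hamming distortion on $X_2$ and we demand zero expected distortion, any admissible $B$ forces $X_2 = \phi_2(B,X_1,Y_2)$ almost surely, and therefore $H(X_2|B,X_1,Y_2)=0$. Consequently,
\begin{equation}
I(X_2;B|X_1,Y_2) = H(X_2|X_1,Y_2) - H(X_2|B,X_1,Y_2) = H(X_2|X_1,Y_2),
\end{equation}
which gives $S(0) \geq H(X_2|X_1,Y_2)$. For the matching upper bound, the choice $B \triangleq X_2$ trivially satisfies the Markov chain $B \markov X \markov (Y_1,Y_2)$ and the zero-distortion reconstruction $\phi_2(B,X_1,Y_2)\triangleq B$, and it attains $I(X_2;B|X_1,Y_2)=H(X_2|X_1,Y_2)$. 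Combining the two bounds yields $S(0)=H(X_2|X_1,Y_2)$, and substituting into the expression for $R(0,0)$ completes the proof.

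I do not expect a substantive obstacle here: once one observes that the component Hamming setup for Receiver~1 matches Definition~\ref{Def:DeterministicDistortion} exactly, the remainder is a short calculation. The only point requiring care is checking that $B=X_2$ is an admissible choice with the alphabet-cardinality bound $|\set{B}|\le |\set{X}|+1$ of~\eqref{Eqn:Wyner-Ziv-1} (it is, since $|\set{X}_2|\le|\set{X}_1||\set{X}_2|=|\set{X}|$), and noting that the chain rule step relies only on $X_1$ being a deterministic function of $X$.
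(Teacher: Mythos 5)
Your proof is correct, but it follows a genuinely different route from the one the paper actually gives for Corollary~\ref{Cor:LosslessDegradedSI}. You derive the result as a direct specialisation of Corollary~\ref{Cor:DeterministicDegradedSI}: identify the component Hamming distortion at Receiver~1 as the deterministic distortion with $\psi(x_1,x_2)=x_1$ so that $\tilde{X}=X_1$, read off $R(0,0)=H(X_1|Y_1)+S(0)$, and then evaluate $S(0)=H(X_2|X_1,Y_2)$ by the two-sided argument (zero Hamming distortion at Receiver~2 forces $H(X_2|B,X_1,Y_2)=0$ for every admissible $B$, and $B=X_2$ is admissible, including the cardinality check). This is a clean and logically tight deduction and, arguably, the ``intended'' logical descent given the paper's chain of corollaries. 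By contrast, the paper deliberately avoids this shortcut and instead gives a \emph{self-contained} direct proof: the converse is an explicit chain of inequalities starting from $R+\epsilon \geq \tfrac{1}{n}\log|\set{M}|$, splitting $I(\bm{X_1},\bm{X_2},\bm{Y_1},\bm{Y_2};M)$ via the chain rule and applying Fano twice, and the achievability is a two-step Slepian--Wolf/Cover binning argument. The paper's reason for going this way is pedagogical rather than economical: the explicit proof isolates exactly which step uses the Markov chain $X_1\markov Y_2\markov Y_1$ (only the inequality $H(X_1|Y_1)\geq H(X_1|Y_2)$ in achievability) and which step uses $X_2\markov(X_1,Y_2)\markov Y_1$ (step (c) in the converse). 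This observation is precisely what motivates replacing the first condition by an entropy inequality and the second by the conditionally-less-noisy condition, which is the whole point of Section~\ref{Sec:HB-CLN}. So your derivation is more efficient, while the paper's direct proof is better suited to the role the corollary plays in the narrative; both are valid.
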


The last corollary can be directly proved in a simple way that nicely adds motivation to the possibility of a more general converse.

\emph{Proof Outline (Converse):}  If $R$ is achievable, then for each $\epsilon > 0$ and sufficiently large $n$ there exists an $n$-block code $(f,g_1,g_2)$ for which the following is true: 
\begin{align}
%
%
%\notag
R + \epsilon &\geq \frac{1}{n} \log |\set{M}|\\
%
%
%\notag
&\geq \frac{1}{n} H(M)\\
%
%
%\notag
&\geq \frac{1}{n} I(\bm{X_1},\bm{X_2},\bm{Y_1},\bm{Y_2};M)\\
%
%
%\notag
&= \frac{1}{n} \Big( I(\bm{X_1},\bm{Y_1};M) 
+ I(\bm{X_2},\bm{Y_2};M|\bm{X_1},\bm{Y_1}) \Big) \\
%
%
%\notag
&\geq \frac{1}{n} \Big( I(\bm{X_1};M|\bm{Y_1}) 
+ I(\bm{X_2};M|\bm{X_1},\bm{Y_1},\bm{Y_2}) \Big) \\
%
%
%\notag
&\step{a}{\geq} \frac{1}{n} \Big( H(\bm{X_1} | \bm{Y_1})  + H(\bm{X_2}|\bm{X_1},\bm{Y_1},\bm{Y_2}) 
 - n\varepsilon(n,\epsilon) \Big)\\
 %
 %
 %\notag
&\step{b}{=} H(X_1 | Y_1)  + H(X_2|X_1,Y_1,Y_2) - \varepsilon(n,\epsilon)\\
 \label{Eqn:Converse:Two-Source-Hamming-Degraded}
&\step{c}{=} H(X_1 | Y_1)  + H(X_2 | X_1, Y_2) - \varepsilon(n,\epsilon).
\end{align}
The justification for steps (a), (b) and (c) is as follows.
\begin{enumerate}
\item[(a)] $\bm{\hat{X}_1}$ and $\bm{\hat{X}_2}$ are determined by $(M,\bm{Y_1})$ and $(M,\bm{Y_2})$ respectively, so (a) follows by Fano's inequality~\cite[Sec.~2.2]{El-Gamal-2011-B}. Here the function $\varepsilon(n,\epsilon)$ can be chosen so that $\varepsilon(n,\epsilon)   \rightarrow 0$ as $\epsilon \rightarrow 0$.
\item[(b)] $(\bm{X_1},\bm{X_2},\bm{Y_1},\bm{Y_2})$ is i.i.d.
\item[(c)] The side information is physically degraded and consequently $X_2\ \markov (X_1,Y_2)\ \markov Y_1$.
\end{enumerate}

\emph{Proof outline (achievability):} Suppose that we use the Slepian-Wolf / Cover random-binning argument to send $\bm{X_1}$ losslessly to Receiver~1 at rate $R'$ close to $H(X_1|Y_1)$. The side information is physically degraded, so we have
\begin{equation}\label{Eqn:Slepian-Wolf-Motivation}
R' \geq H(X_1|Y_1) \geq H(X_1|Y_2).
\end{equation}

A close inspection of the random binning proof, e.g.~\cite{El-Gamal-2011-B}, reveals that~\eqref{Eqn:Slepian-Wolf-Motivation} also suffices for Receiver~2 to reliably decode~$\bm{X_1}$. Now, assuming $\bm{X_1}$ is successfully decoded by Receiver~2, we can send $\bm{X}_2$ to Receiver~2 at a rate $R''$ close to $H(X_2|X_1,Y_2)$ using $(\bm{X_1},\bm{Y_2})$ as side information. The total rate $R = R'+R''$ is close to $H(X_1|Y_1) + H(X_2|X_1,Y_2)$. \hfill $\blacksquare$

\medskip

We notice that the Markov chain in~\eqref{Eqn:Two-Source-DegradedSI} is equivalent to 
\begin{subequations}\label{Eqn:Cor:LosslessDegradedSI:MarkovChains}
\begin{equation}\label{Eqn:Cor:LosslessDegradedSI:MarkovChains:A}
X_1\ \markov Y_2\ \markov Y_1
\end{equation}
and
\begin{equation}\label{Eqn:Cor:LosslessDegradedSI:MarkovChains:B}
X_2\ \markov (X_1,Y_2)\ \markov Y_1.
\end{equation}
\end{subequations}
The chain~\eqref{Eqn:Cor:LosslessDegradedSI:MarkovChains:A} is a sufficient, but not necessary, condition for the inequalities in~\eqref{Eqn:Slepian-Wolf-Motivation} and hence the above achievability argument. In contrast, the chain~\eqref{Eqn:Cor:LosslessDegradedSI:MarkovChains:B} is essential for equality (c) in~\eqref{Eqn:Converse:Two-Source-Hamming-Degraded} and hence the converse argument. The generality of the achievability argument juxtaposed against the more restrictive converse argument suggests that~\eqref{Eqn:Cor:LosslessDegradedSI-Rate} might hold for a broader class of two-sources. We show that this is indeed the case in the next subsection; specifically, we will see that~\eqref{Eqn:Cor:LosslessDegradedSI-Rate} still holds when the Markov chain~\eqref{Eqn:Cor:LosslessDegradedSI:MarkovChains:A} is replaced by $H(X_1|Y_1) \geq H(X_1|Y_2)$ and the chain~\eqref{Eqn:Cor:LosslessDegradedSI:MarkovChains:B} is replaced by a more general ``conditionally less noisy'' condition.

\begin{remark}
\begin{enumerate}
\item[]
\item[(i)] $R(D_1,D_2)$ depends on the joint distribution of $(X,Y_1,Y_2)$ only via the marginal distributions of $(X,Y_1)$ and $(X,Y_2)$.  
\item[(ii)] The side information is said to be \emph{stochastically degraded} if the joint distribution of $(X,Y_1,Y_2)$ is such that there exists some physically degraded side information $(X',Y_1',Y_2')$ with marginals $(X',Y_1')$ and $(X',Y_2')$ matching those of $(X,Y_1)$ and $(X,Y_2)$. By Remark~1~(i), Theorem~\ref{Thm:DegradedSI} and Corollaries~\ref{Cor:DeterministicDegradedSI} and~\ref{Cor:LosslessDegradedSI} also hold for stochastically degraded side information. 
\item[(iii)] The function $S(D_2)$, which is defined in~\eqref{Eqn:Wyner-Ziv-1}, is the Wyner-Ziv RD function~\cite[Eqn.~(15)]{Wyner-Jan-1976-A} for a source $\bm{X}$ with side information $(\bm{\tilde{X}},\bm{Y_2})$. 
\item[(iv)] The asserted upper bound for $R(D_1,D_2)$ in~\cite[Thm.~2]{Heegard-Nov-1985-A} is incorrect for the case of three or more receivers~\cite{Timo-Aug-2011-A}.
\end{enumerate}
\end{remark}

\subsection{New Results}\label{Sec:HB-CLN}

Suppose that $L$ is an auxiliary random variable that is jointly distributed with the source $(X,Y_1,Y_2)$.

\begin{definition}\label{Def:ConditionallyLessNoisy}
We say that $Y_2$ is \emph{conditionally less noisy than $Y_1$ given} $L$, abbreviated as $\cln{Y_2}{Y_1}{L}$, if 
\begin{equation}
I(W;Y_2|L) \geq I(W;Y_1|L)
\end{equation}
holds for every auxiliary $W$, jointly distributed with $(X,Y_1,Y_2,L)$, for which
\begin{equation}\label{Eqn:MClessnoisy}
W \markov (X,L) \markov (Y_1,Y_2).
\end{equation}
\end{definition}

The next lemma and example collectively show that Definition~\ref{Def:ConditionallyLessNoisy} is broader than Definition~\ref{Def:DegradedSI}. The lemma is proved in Appendix~\ref{App:LessNoisySpecialCases}.

\clearpage

\begin{lemma}\label{Lem:LessNoisySpecialCases}\ 
\begin{enumerate}
\item[(i)] If the side information $(X,Y_1,Y_2)$ is physically degraded and the auxiliary random variable $L$ satisfies the Markov chain
\begin{equation}\label{Eqn:LessNoisySpecialCases-MCinLemma1} 
L\ \markov X\ \markov (Y_1,Y_2),
\end{equation}
then $\cln{Y_2}{Y_1}{L}$.
\item[(ii)] If a two-source $(X_1,X_2,Y_1,Y_2)$ satisfies 
\begin{equation}\label{Eqn:LessNoisySpecialCases-MCinLemma2} 
X_2\ \markov X_1\ \markov Y_1
\end{equation}
and $L = X_1$, then $\cln{Y_2}{Y_1}{X_1}$.
\end{enumerate}
\end{lemma}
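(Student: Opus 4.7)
For part (i), my plan is to strengthen the given hypotheses into the single Markov chain $W \markov (Y_2,L) \markov Y_1$, from which the desired inequality follows by a routine chain-rule step. First I would verify that the joint law factors as
\begin{equation}
p(w,x,l,y_1,y_2) \;=\; p(x,l)\,p(w\mid x,l)\,p(y_2\mid x)\,p(y_1\mid y_2),
\end{equation}
by combining $W \markov (X,L) \markov (Y_1,Y_2)$ with $L \markov X \markov (Y_1,Y_2)$ to drop $L$ and $W$ from the conditional of $(Y_1,Y_2)$, and then using $X \markov Y_2 \markov Y_1$ to split the $(Y_1,Y_2)$-conditional. From this factorisation, $p(y_1\mid w,l,x,y_2) = p(y_1\mid y_2)$, and marginalising $x$ yields $p(y_1\mid w,l,y_2)=p(y_1\mid l,y_2)$, i.e.\ $W \markov (Y_2,L) \markov Y_1$. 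Then $I(W;Y_1\mid Y_2,L)=0$, and the two chain-rule expansions of $I(W;Y_1,Y_2\mid L)$ give
\begin{equation}
I(W;Y_2\mid L) - I(W;Y_1\mid L) \;=\; I(W;Y_2\mid Y_1,L) - I(W;Y_1\mid Y_2,L) \;=\; I(W;Y_2\mid Y_1,L) \;\geq\; 0,
\end{equation}
which is exactly $\cln{Y_2}{Y_1}{L}$.

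For part (ii), the situation collapses because $L=X_1$ already shields $W$ from $Y_1$. Any admissible $W$ satisfies $W \markov (X_1,X_2) \markov (Y_1,Y_2)$ by~\eqref{Eqn:MClessnoisy}, and combining this with the hypothesis $X_2 \markov X_1 \markov Y_1$ gives $p(y_1\mid w,x_1,x_2) = p(y_1\mid x_1,x_2) = p(y_1\mid x_1)$. Marginalising over $x_2$ yields $W \markov X_1 \markov Y_1$, so $I(W;Y_1\mid X_1)=0$ and therefore $I(W;Y_2\mid X_1) \geq 0 = I(W;Y_1\mid X_1)$, which is the definition of $\cln{Y_2}{Y_1}{X_1}$.

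The only non-trivial bit of bookkeeping is in part (i): one must be careful that the two Markov hypotheses on $W$ and $L$ combine correctly with the physical degradation $X\markov Y_2\markov Y_1$ to push $W$ and $L$ to the ``left'' of $Y_2$ in a single chain. Once this factorisation is written out explicitly, everything else is a one-line mutual-information identity. No appeal to Lemma~\ref{Lem:Conditional-Mathis-Lemma} or any telescoping argument is needed here — the single-letterisation machinery will only be invoked later, when proving the converse in Section~\ref{Sec:HB-CLN}.
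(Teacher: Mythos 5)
Your proposal is correct and follows essentially the same route as the paper's proof in Appendix~\ref{App:LessNoisySpecialCases}: in part~(i) the pivotal step in both is deriving the Markov chain $W \markov (Y_2,L) \markov Y_1$ from the three hypotheses and then applying a one-line mutual-information manipulation, and in part~(ii) both arguments amount to showing $I(W;Y_1\mid X_1)=0$. The only difference is presentational — the paper phrases the final inequality in part~(i) via $H(W\mid L,Y_2)=H(W\mid L,Y_1,Y_2)\leq H(W\mid L,Y_1)$ whereas you expand $I(W;Y_1,Y_2\mid L)$ by the chain rule in two orders, and in part~(ii) the paper passes through $I(W;Y_1\mid X_1,X_2)$ where you argue directly on conditional densities; these are equivalent once the relevant Markov chains are in hand, and your explicit factorisations in fact spell out the step the paper leaves terse.
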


The next example describes a two-source, where the side information is \emph{not} degraded, but~\eqref{Eqn:LessNoisySpecialCases-MCinLemma2} holds and therefore $\cln{Y_2}{Y_1}{X_1}$.

\begin{example}\label{Ex:CLNSource}
Let $X_2$, $Y_2$, and $Z$ be independent Bernoulli random variables with 
\begin{align}
\mathbb{P}[X_2 = 0] &= 1 - \mathbb{P}[X_2 = 1] = p,\quad p \in (0,1/2),\\
\mathbb{P}[Y_2 = 0] &= 1 - \mathbb{P}[Y_2 = 1] = q,\quad q \in (0,1/2),\\
{\mathbb{P}[Z = 0] }&{= 1 - \mathbb{P}[Z = 1] = r,\quad r \in (0,1/2).}
\end{align}
Let
\begin{equation}
X_1 = X_2 \oplus Y_2 
\end{equation}
and 
\begin{equation}
Y_1=X_1\oplus Z. 
\end{equation}
We have 
\begin{equation}
X_2\ \markov X_1\ \markov Y_1,
\end{equation}
so assertion~(ii) of Lemma~\ref{Lem:LessNoisySpecialCases} implies $\cln{Y_2}{Y_1}{X_2}$. In contrast, $(X_1,X_2)$ is not conditionally independent of $Y_1$ given $Y_2$ and, therefore, the side information is not physically degraded. 
\end{example}

The next lemma gives a lower bound for the RD function. Its proof uses the single-letterization Lemma~\ref{Lem:Conditional-Mathis-Lemma} and is the subject of Appendix~\ref{Sec:ConverseProof}. Our main result in this section, Theorem~\ref{Thm:LossyCoding}, follows directly thereafter.

\begin{lemma}[Converse]\label{Lem:Converse}
If $\delta_1$ is deterministic, then the following is true. 
\begin{enumerate}
\item[(i)] For arbitrarily distributed $(X,Y_1,Y_2)$, we have 
\begin{equation}
R(0,D_2) \geq  H(\tilde{X}|Y_1) + S(D_2) + \min \big\{ I(W;Y_2|\tilde{X}) - I(W;Y_1|\tilde{X}) \big\},
\end{equation}
where the minimisation is taken over all auxiliary $W$, jointly distributed with $(X,Y_1,Y_2)$, such that 
\begin{equation}
W \markov X \markov (Y_1,Y_2),
\end{equation}
and 
\begin{equation}
|\set{W}| \leq |\set{X}|.
\end{equation}
\item[(ii)] If $(X,Y_1,Y_2)$ satisfies $\cln{Y_2}{Y_1}{\tilde{X}}$, then 
\begin{equation}
R(0,D_2) \geq H(\tilde{X}|Y_1) + S(D_2).
\end{equation}
\end{enumerate}
\end{lemma}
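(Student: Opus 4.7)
The plan is to start from an $(0,D_2)$-achievable rate $R$ with an $n$-block code $(f,g_1,g_2)$, $M \triangleq f(\bm{X})$, and decompose $H(M)$ into three pieces matching $H(\tilde{X}|Y_1)$, $S(D_2)$, and a residual single-letter gap handled by Lemma~\ref{Lem:Conditional-Mathis-Lemma}. The first move is
\[
H(M) \ \geq\ H(M|\bm{Y_1}) \ =\ H(\bm{\tilde{X}}|\bm{Y_1}) + H(M|\bm{\tilde{X}},\bm{Y_1}) - H(\bm{\tilde{X}}|M,\bm{Y_1}).
\]
Since $\delta_1$ is deterministic with $D_1 = 0$, Receiver~1 recovers $\bm{\tilde{X}}$ almost losslessly from $(M,\bm{Y_1})$, so a per-letter Fano argument bounds the last term by $n\varepsilon_n$; the first term equals $nH(\tilde{X}|Y_1)$ by the i.i.d.\ hypothesis; and the middle term satisfies $H(M|\bm{\tilde{X}},\bm{Y_1}) \geq I(\bm{X};M|\bm{\tilde{X}},\bm{Y_1})$.

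Next I would split the remaining mutual information. Because $M$ is a function of $\bm{X}$ and $\bm{\tilde{X}} = \psi(\bm{X})$, so that $H(M|\bm{X},\bm{\tilde{X}},\bm{Y_j}) = 0$, chain-rule manipulations give the identity
\[
I(\bm{X};M|\bm{\tilde{X}},\bm{Y_1}) \ =\ I(\bm{X};M|\bm{\tilde{X}},\bm{Y_2}) + \bigl[ I(M;\bm{Y_2}|\bm{\tilde{X}}) - I(M;\bm{Y_1}|\bm{\tilde{X}}) \bigr].
\]
The first term on the right is handled by a Wyner-Ziv-style converse: take the auxiliary $B_i \triangleq (M, Y_{2,1}, \ldots, Y_{2,i-1}, Y_{2,i+1}, \ldots, Y_{2,n})$, verify $B_i \markov X_i \markov (Y_{1,i},Y_{2,i})$ using i.i.d.\ source triples, note that $\hat{X}_{2,i}$ is a function of $(B_i, Y_{2,i})$, introduce a uniform time-sharing variable $Q$, and conclude $\tfrac{1}{n} I(\bm{X};M|\bm{\tilde{X}},\bm{Y_2}) \geq S(D_2+\epsilon) \to S(D_2)$ by continuity of $S$. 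The bracketed gap is precisely the $n$-letter difference single-letterized by Lemma~\ref{Lem:Conditional-Mathis-Lemma}: applying it with $J = M$, $R = X$, $L = \tilde{X} = \psi(X)$, $S_1 = Y_1$, $S_2 = Y_2$ (so $L$ is a function of $R$), and using that $M \markov \bm{X} \markov (\bm{Y_1},\bm{Y_2})$ holds trivially, one obtains $W$ satisfying $W \markov X \markov (Y_1,Y_2)$ and $|\set{W}| \leq |\set{X}|$ for which the gap equals $n\bigl[I(W;Y_2|\tilde{X}) - I(W;Y_1|\tilde{X})\bigr]$. Reassembling, dividing by $n$, and letting $\epsilon,\varepsilon_n \downarrow 0$ gives part~(i), where the particular $W$ produced by the lemma can be weakened to a minimization.

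Part~(ii) then follows immediately: when $\cln{Y_2}{Y_1}{\tilde{X}}$ holds, every admissible $W$ in part~(i) also satisfies $W \markov (X,\tilde{X}) \markov (Y_1,Y_2)$ (because $\tilde{X}=\psi(X)$), so by Definition~\ref{Def:ConditionallyLessNoisy} the bracketed term is non-negative and can be dropped from the lower bound. I expect the only delicate step to be the decomposition in the second paragraph---arranging the chain-rule identity so that the residual difference matches \emph{exactly} the $n$-letter form that Lemma~\ref{Lem:Conditional-Mathis-Lemma} single-letterizes, while simultaneously choosing the Wyner-Ziv auxiliary in a form compatible with $S(D_2)$'s definition. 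Once that decomposition is in place, Fano's inequality and Lemma~\ref{Lem:Conditional-Mathis-Lemma} are off-the-shelf.
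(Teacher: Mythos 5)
Your proposal is correct and follows the paper's proof almost step for step: the same Fano argument yielding $H(\tilde{X}|Y_1)$, the same rearrangement of $I(\bm{X};M|\bm{\tilde{X}},\bm{Y_1})$ to expose $I(\bm{X};M|\bm{\tilde{X}},\bm{Y_2})$ plus the residual $n$-letter gap, the same Wyner--Ziv auxiliary $B_i$ giving $S(D_2)$, and the same invocation of Lemma~\ref{Lem:Conditional-Mathis-Lemma} with $R=X$, $L=\tilde{X}$, $S_1=Y_1$, $S_2=Y_2$, $J=M$. The only cosmetic difference is that you decompose $H(M|\bm{Y_1})$ by an equality before bounding, whereas the paper writes $H(M|\bm{Y_1}) \geq I(\bm{X},\bm{\tilde{X}};M|\bm{Y_1})$ first; these are interchangeable.
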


It is worth highlighting that in the minimisation 
\begin{equation}\label{Eqn:Converse-Min-Arbitrary-Sources}
\min \big\{ I(W;Y_2|\tilde{X}) - I(W;Y_1|\tilde{X})\big\}
\end{equation} 
it is always possible to choose $W$ to be constant and \eqref{Eqn:Converse-Min-Arbitrary-Sources} must therefore be non-positive. Assertion (ii) of the lemma follows immediately from assertion (i) upon invoking Definition~\ref{Def:ConditionallyLessNoisy} with the auxiliary random variable $L = \tilde{X}$.  

The next theorem gives a single-letter expression for $R(D_1,D_2)$ in a new setting, and it is the main result of this section. The theorem is a direct consequence of the achievability of Lemma~\ref{Lem:HB} and the converse of Lemma~\ref{Lem:Converse}~(ii). 

\begin{theorem}\label{Thm:LossyCoding}
If $\delta_1$ is deterministic, 
\begin{equation}\label{Eqn:Thm-LossyCoding-Conditions}
\cln{Y_2}{Y_1}{\tilde{X}} \text{ and } H(\tilde{X}|Y_1) \geq H(\tilde{X}|Y_2),
\end{equation}
then
\begin{equation}\label{Eqn:Thm-LossyCoding-Rate}
R(0,D_2) = H(\tilde{X}|Y_1) + S(D_2).
\end{equation}
\end{theorem}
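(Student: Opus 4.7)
The plan is to combine the achievability of Lemma~\ref{Lem:HB} with the converse of Lemma~\ref{Lem:Converse}~(ii). The converse side is immediate: under the hypothesis $\cln{Y_2}{Y_1}{\tilde{X}}$, part~(ii) of Lemma~\ref{Lem:Converse} directly yields $R(0,D_2) \geq H(\tilde{X}|Y_1) + S(D_2)$. So the only remaining task is to exhibit a choice of auxiliaries $(A,B,C)$ in Lemma~\ref{Lem:HB} that attains the same rate from above.

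For achievability, the natural guess is $C := \tilde{X}$, $A$ a constant (degenerate) random variable, and $(B,\phi_2)$ taken to be a minimiser in the definition of $S(D_2)$ in~\eqref{Eqn:Wyner-Ziv-1}. The Markov condition~\eqref{Eqn:HB-Achievability-Markov-Chain} holds because $C$ is a deterministic function of $X$, $A$ is trivial, and $B$ satisfies $B\markov X\markov (Y_1,Y_2)$ by construction. For Receiver~1's reconstruction, set $\phi_1(A,C,Y_1) := C = \tilde{X}$; because $\delta_1$ is deterministic, this attains distortion exactly zero. Plugging these choices into the objective of~\eqref{Eqn:RD1D2RHS} gives
\begin{equation*}
I(X;C|Y_j) = H(\tilde{X}|Y_j),\quad j=1,2, \qquad I(X;A|C,Y_1) = 0, \qquad I(X;B|C,Y_2) = S(D_2),
\end{equation*}
where the last equality uses that $C=\tilde{X}$ together with the definition~\eqref{Eqn:Wyner-Ziv-1}. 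The hypothesis $H(\tilde{X}|Y_1)\geq H(\tilde{X}|Y_2)$ then resolves the maximum in favour of $H(\tilde{X}|Y_1)$, yielding $R(0,D_2) \leq H(\tilde{X}|Y_1)+S(D_2)$ and matching the converse.

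I do not expect an obstacle inside the proof of Theorem~\ref{Thm:LossyCoding} itself; it is a routine verification that the canonical Heegard-Berger coding scheme with $C=\tilde{X}$ and a trivial $A$-layer is already rate-optimal once the less-noisy ordering on $(Y_1,Y_2)$ given $\tilde{X}$ and the marginal inequality on $H(\tilde{X}|Y_j)$ together dispose of the ``max'' in~\eqref{Eqn:RD1D2RHS}. All of the real conceptual work has been pushed into Lemma~\ref{Lem:Converse}~(ii), whose proof in turn relies on the single-letterization Lemma~\ref{Lem:Conditional-Mathis-Lemma} applied with $L=\tilde{X}$ and $R=X$ — that is where the telescoping-identity argument will be needed, and it is the genuinely hard step of the whole development.
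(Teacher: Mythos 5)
Your proposal is correct and follows the same route as the paper: invoke Lemma~\ref{Lem:Converse}~(ii) for the lower bound, and achieve it in Lemma~\ref{Lem:HB} by taking $C=\tilde{X}$, $A$ constant, and $B$ a minimiser in $S(D_2)$, with $H(\tilde{X}|Y_1)\geq H(\tilde{X}|Y_2)$ disposing of the maximum. The paper's own proof is stated more tersely, but your verification of the Markov condition, the distortion constraint, and the evaluation of each mutual-information term is exactly the intended calculation.
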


\begin{proof}
The achievability of~\eqref{Eqn:Thm-LossyCoding-Rate} follows from Lemma~\ref{Lem:HB} where we set $C =\tilde{X}$ and $A =$ constant. % and invoking the definition of $S(D_2)$ in . 
The converse follows by Lemma~\ref{Lem:Converse}.
\end{proof}

The next corollary generalises Corollary~\ref{Cor:LosslessDegradedSI} to the conditionally less noisy setting.

\begin{corollary}\label{Cor:LossyCoding}
Consider a two-source and component Hamming distortion functions. If
\begin{equation}\label{Eqn:Cor:LossyCoding}
\cln{Y_2}{Y_1}{X_1}\ \text{ and }\ H(X_1|Y_1) \geq H(X_1|Y_2),
\end{equation}
then 
\begin{equation}
R(0,0) = H(X_1|Y_1) + H(X_2|X_1,Y_2).
\end{equation}
\end{corollary}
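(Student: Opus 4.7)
The plan is to derive Corollary~\ref{Cor:LossyCoding} as a direct specialisation of Theorem~\ref{Thm:LossyCoding}. For a two-source $(X_1,X_2,Y_1,Y_2)$ with component Hamming distortion $\delta_1$ on the $X_1$-coordinate, the projection $\psi:(x_1,x_2)\mapsto x_1$ exhibits $\delta_1$ as a deterministic distortion in the sense of Definition~\ref{Def:DeterministicDistortion}, so that $\tilde{X}=X_1$. Under this identification, the two hypotheses of~\eqref{Eqn:Cor:LossyCoding} are precisely the two hypotheses~\eqref{Eqn:Thm-LossyCoding-Conditions} of Theorem~\ref{Thm:LossyCoding}, so the theorem supplies
\begin{equation*}
R(0,D_2) = H(X_1|Y_1) + S(D_2)
\end{equation*}
for every $D_2 \geq 0$, where $S$ is the function defined in~\eqref{Eqn:Wyner-Ziv-1}.

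The remaining step is to evaluate $S(0)$ for the component Hamming distortion $\delta_2$ on $X_2$. Because $X=(X_1,X_2)$ and $\tilde{X}=X_1$,
\begin{equation*}
S(0) = \min_{B}\, I(X_2;B\,|\,X_1,Y_2),
\end{equation*}
minimised over $B$ with $B\markov (X_1,X_2)\markov (Y_1,Y_2)$ for which some $\phi_2$ recovers $X_2$ losslessly from $(B,X_1,Y_2)$. The zero-Hamming constraint forces $H(X_2|B,X_1,Y_2)=0$, so every feasible $B$ achieves $I(X_2;B|X_1,Y_2)=H(X_2|X_1,Y_2)$; the choice $B=X_2$ is feasible (it trivially meets the cardinality bound $|\set{B}|\leq |\set{X}|+1$) and attains this value. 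Hence $S(0)=H(X_2|X_1,Y_2)$, and substituting into the previous display yields the claimed identity $R(0,0)=H(X_1|Y_1)+H(X_2|X_1,Y_2)$.

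There is no substantive obstacle inside the corollary itself: both the reduction to Theorem~\ref{Thm:LossyCoding} and the evaluation $S(0)=H(X_2|X_1,Y_2)$ are routine. The step most worth a line of explicit justification is the latter, because it is where the Wyner-Ziv-style rate function $S(D_2)$ collapses to the familiar lossless conditional entropy that also appears in Corollary~\ref{Cor:LosslessDegradedSI}; all the genuinely delicate work sits upstream in Lemma~\ref{Lem:Converse} and Theorem~\ref{Thm:LossyCoding}, which in turn rest on the single-letterisation Lemma~\ref{Lem:Conditional-Mathis-Lemma} and Kramer's telescoping identity.
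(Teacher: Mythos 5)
Your proof is correct and follows exactly the route the paper takes: the paper's proof of Corollary~\ref{Cor:LossyCoding} consists precisely of the two identifications $\tilde{X}=X_1$ and $S(0)=H(X_2|X_1,Y_2)$ inside Theorem~\ref{Thm:LossyCoding}. You merely spell out the short evaluation of $S(0)$ (via $H(X_2|B,X_1,Y_2)=0$ and feasibility of $B=X_2$) that the paper leaves implicit.
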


\begin{proof}
In Theorem~\ref{Thm:LossyCoding}, we have $\tilde{X} = X_1$ and
\begin{equation}
S(0) = H(X_2|X_1,Y_2).
\end{equation}
\end{proof}

\begin{example}\label{Exa:BECBSC}
Let $X_1$ and $Z$ be independent Bernoulli random variables with
\begin{equation}
\mathbb{P}[X_1=0] = \mathbb{P}[X_1=1] = \frac{1}{2}
\end{equation}
and
\begin{equation}
\mathbb{P}[Z=0] = 1- \mathbb{P}[Z=1] = \frac{1}{3}.
\end{equation}
Let 
\begin{equation}
X_2=X_1 \oplus Z.
\end{equation}
Let $Y_2$ and $Y_1$ be the outcomes of passing $X_1$ through a BEC$(2/3)$ and a BSC$(1/4)$ respectively, see Fig.~\ref{Fig:BECBSCExample}. We have $\cln{Y_2}{Y_1}{X_1}$ from condition (ii) of Lemma~\ref{Lem:LessNoisySpecialCases}. 
Moreover, 
\begin{equation}
H(X_1|Y_2)= \frac{2}{3}
\end{equation}
is smaller than 
\begin{equation}
H(X_1|Y_1)=H_b(1/4) \approx 0.8113,
\end{equation}
where 
\begin{equation}
H_\text{b}(\alpha) \triangleq - \alpha \log_2 \alpha - (1-\alpha) \log_2 (1-\alpha)
\end{equation}
is the binary entropy function; therefore, we may apply Corollary~\ref{Cor:LossyCoding} to get
\begin{equation}
R(0,0) = H_\text{b}(1/4)+ H_\text{b}(1/3).
\end{equation}
We notice that since $2/3>2/4$  the side information $Y_2$ and $Y_1$ is not physically or stochastically degraded with respect to $X_1$~\cite[p.~121]{El-Gamal-2011-B}, \cite{Nair-Sep-2010-A}, and hence with respect to $X=(X_1,X_2)$.
\end{example}

\medskip

\begin{remark}
\begin{enumerate}
\item[]
\item[(i)] Theorem~\ref{Thm:LossyCoding} includes Corollary~\ref{Cor:DeterministicDegradedSI} for physically degraded side information as a special case, since  
\begin{equation}
X\ \markov Y_2\ \markov Y_1
\end{equation} 
and
\begin{equation}
\tilde{X}\ \markov X\ \markov (Y_1,Y_2)
\end{equation}
implies~\eqref{Eqn:Thm-LossyCoding-Conditions} by Lemma~\ref{Lem:LessNoisySpecialCases}~(i) and the data processing lemma.
\item[(ii)] It appears that our approach to proving Lemma~\ref{Lem:Converse} (ii) does not readily generalise to an arbitrary distortion function, $\delta_1$. An apparent difficulty follows from the use of a Wyner-Ziv style converse argument to construct the $S(D_2)$ term using $(\bm{\tilde{X}},\bm{Y_1})$ as side information. The argument needs $(\bm{\tilde{X}},\bm{Y_1})$ to be i.i.d. and, if $\delta_1$ is arbitrary, this need not be the case. 
\item[(iii)] Theorem~\ref{Thm:LossyCoding} employs the conditionally less noisy definition for the special case where $L$ is a deterministic function of the source $X$. In this case, we can remove $L$ from the Markov chain in~\eqref{Eqn:MClessnoisy}.
\item[(iv)] If $L=\emptyset$, then Definition~\ref{Def:ConditionallyLessNoisy} reduces to the \emph{less noisy} concept for information-theoretic security for source coding recently introduced by Villard and Piantanida~\cite{Villard-May-2012-A}. Thus, our definition is more broad. In fact, in Example~\ref{Ex:CLNSource} and when the parameter $r$ is sufficiently small (or large) compared to $p$ so that 
\begin{equation}
H(X_1|Y_1)< H(X_2),
\end{equation}
the side information $Y_2$ is \emph{conditionally less noisy} than $Y_1$ given $X_2$, but it is not \emph{less noisy}. To see this, select $W=X_1$, so that 
\begin{equation}
I(W;Y_1)=H(X_1)- H(X_1|Y_1)
\end{equation}
and
\begin{align}
I(W;Y_2) & = H(X_1) - H(X_1|Y_2) \\
& = H(X_1) - H(X_2).
\end{align}
\end{enumerate}
\end{remark}

\begin{figure}
\begin{center}
\includegraphics[width=0.6\columnwidth]{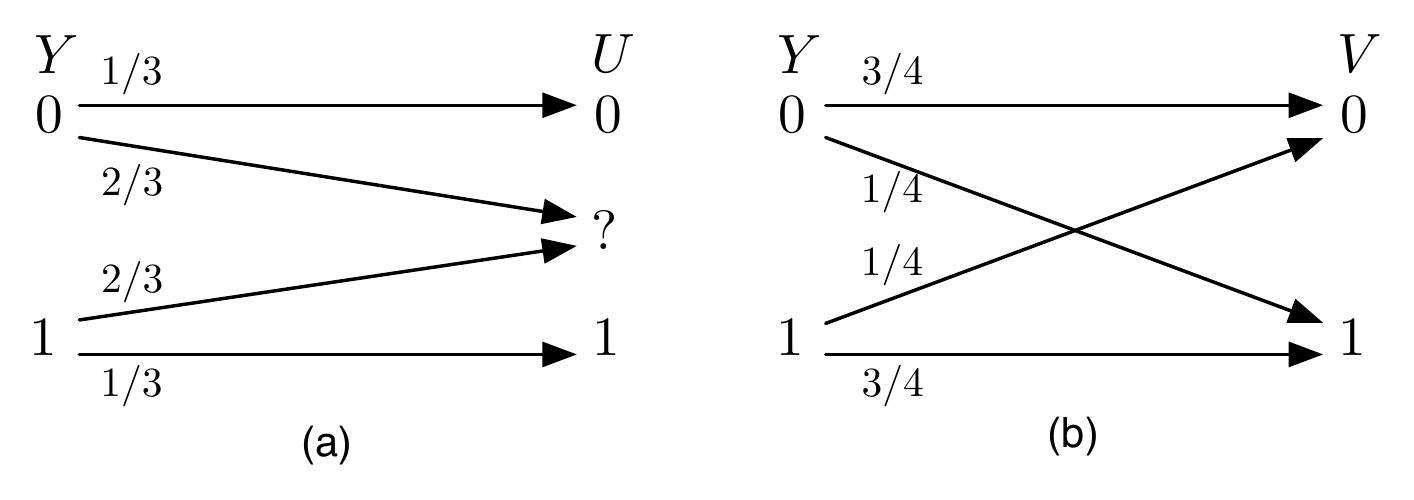}
\caption{Binary channels defining the side information in Example~\ref{Exa:BECBSC}: (a) Binary Erasure Channel (BEC) with erasure probability $2/3$; and (b) Binary Symmetric Channel (BSC) with crossover probability $1/4$. $Y$ and $\tilde{X}$ are Bernoulli $(1/2)$ and $(1/3)$, and $X = Y \oplus \tilde{X}$.} 
\label{Fig:BECBSCExample}
\end{center}
\end{figure}

%%%%
%%%%
%%%%
%%%%

\clearpage

\section{Successive Refinement with Side Information}\label{Sec:Successive-Refinement}

The method used in Appendix~\ref{Sec:ConverseProof} to prove Lemma~\ref{Lem:Converse} can, with appropriate modification, yield useful converses for various generalisations of Heegard and Berger's RD problem. In this section, we extend the setup of Fig.~\ref{Fig:SourceCoding} to two different successive-refinement problems with receiver side information.

\subsection{Problem Formulation}\label{Sec:Problem formulation}

Consider a tuple of random variables $(X, Y_1,Y_2,Y_3)$ with an arbitrary joint distribution. Let  $(\bm{X}, \bm{Y_1},$ $ \bm{Y_2},\bm{Y_3})$ denote a string of $n$-i.i.d. random vectors $(X, Y_1,Y_2,Y_3)$. A successive-refinement $n$-block code for the setup shown in Fig.~\ref{Fig:SR} consists of four (possibly stochastic) maps 
\begin{equation}
f : \bm{\set{X}} \longrightarrow \set{M}_1 \times \set{M}_2 \times \set{M}_3
\end{equation} 
and 
\begin{align}
g_1 & : \set{M}_1 \times \bm{\set{Y}_1} \longrightarrow \bm{\hat{\set{X}}_1}\\
g_2 & : \set{M}_1 \times \set{M}_2 \times \bm{\set{Y}_2} \longrightarrow \bm{\hat{\set{X}}_2}\\
g_3 & : \set{M}_1 \times \set{M}_2 \times \set{M}_3 \times \bm{\set{Y}_3} \longrightarrow \bm{\hat{\set{X}}_3},
\end{align}
where $\set{M}_1$, $\set{M}_2$ and $\set{M}_3$ are finite sets. The Transmitter sends $(M_1,M_2,M_3) \triangleq f(\bm{X})$ over the noiseless channels, as shown in Fig.~\ref{Fig:SR}. Receiver~1 reconstructs $
\bm{\hat{X}_1} \triangleq g_1(M_1, \bm{Y_1})$, Receiver~2 reconstructs $
\bm{\hat{X}_2} \triangleq g_2(M_1,M_2,\bm{Y_2})$ and Receiver~3 reconstructs $
\bm{\hat{X}_3} \triangleq g_3(M_1,M_2,M_3,\bm{Y_3})$.

\begin{figure}[t]
\begin{center}
\includegraphics[width=0.45\columnwidth]{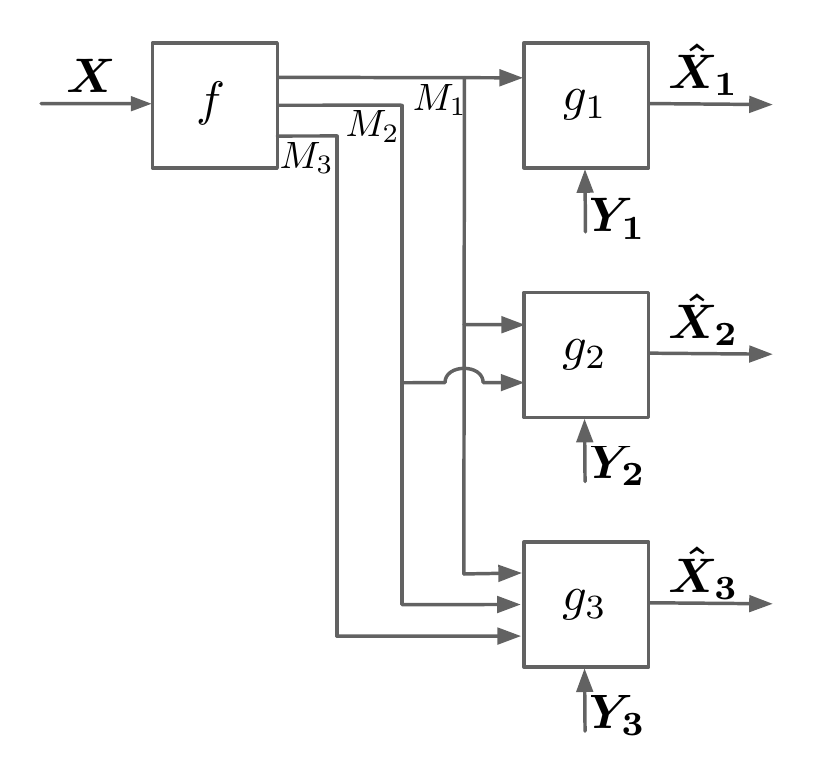}
\caption{Three-stage successive refinement with side information at the receivers.}
\label{Fig:SR}
\end{center}
\end{figure}

\begin{definition}
A rate tuple $(R_1,R_2,R_3)$ is said to be \emph{achievable with distortions} $(D_1,D_2,D_3)$ if for each $\epsilon > 0$ there exists an $n$-block code $(f,g_1,g_2,g_3)$, for some sufficiently large blocklength $n$, satisfying 
\begin{equation}
R_j + \epsilon \geq \frac{1}{n} \log |\set{M}_j|
\end{equation}
and 
\begin{equation}
D_j + \epsilon \geq \mathbb{E} \frac{1}{n} \sum_{i=1}^n \delta_j(X_j,\hat{X}_{j,i})
\end{equation}
for $j = 1,2,3$.
\end{definition}

\begin{definition}[RD Region]
\begin{equation}
\set{R}(D_1,D_2,D_3) \triangleq \big\{(R_1,R_2,R_3) \text{ achievable with distortions }  (D_1,D_2,D_3)\big\},\end{equation}
for $D_1 \geq 0$, $D_2 \geq 0$ and $D_3 \geq 0$.
\end{definition}

\subsection{Three Stages with $Y_3$ better than $Y_2$ better than $Y_1$ (abhinc $X \markov Y_3 \markov Y_2 \markov Y_1$)}
\label{Sec:Successive-Refinement-Deg}

In this subsection, we assume that Receiver~3 obtains the best side information and Receiver~1 the worst.  Tian and Diggavi~\cite{Tian-Aug-2007-A} modelled such a relation with physically degraded side information, i.e., $X \markov Y_3 \markov Y_2 \markov Y_1$, and they derived the corresponding RD region. The goal here is to broaden their result to a conditionally less noisy setup.

We will need the following achievable RD region that holds for arbitrarily distributed side information. The region is  distilled from a more general achievability result in~\cite{Timo-Aug-2011-A}, see Appendix~\ref{App:SR-SSC-Achievability-Gen}. 

Let $\set{R}_\text{in}(D_1,D_2,D_3)$ denote the set of all rate tuples $(R_1,R_2,R_3)$ for which there exist auxiliary random variables $(A_1,A_2,A_3)$, jointly distributed with the source $(X,Y_1,Y_2,Y_3)$, such that the following is true:
\begin{enumerate}
\item[(i)] the auxiliary random variables are conditionally independent of the side information given $X$,
\begin{equation}
(A_1,A_2,A_3)\ \markov X\ \markov (Y_1,Y_2,Y_3);
\end{equation}
\item[(ii)] the cardinalities of the alphabets of $A_1$, $A_2$ and $A_3$ are respectively bound by\footnote{Reference~\cite{Timo-Aug-2011-A} does not provide cardinality constraints. The bounds in~\eqref{Eqn:SR-Achievability-Gen-Card} follow by the standard convex cover method.}
\begin{subequations}\label{Eqn:SR-Achievability-Gen-Card}
\begin{align}
|\set{A}_1| &\leq |\set{X}| + 6\\
|\set{A}_2| &\leq |\set{X}|\ |\set{A}_1| + 4\\
|\set{A}_3| &\leq |\set{X}|\ |\set{A}_1|\ |\set{A}_2| + 1;
\end{align}
\end{subequations}
\item[(iii)] there exist (deterministic) maps for each $j = 1,2,3$
\begin{subequations}
\begin{equation}
\phi_j : \set{A}_j \times \set{Y}_j \longrightarrow \hat{\set{X}}_j
\end{equation}
\end{subequations}
with 
\begin{subequations}
\begin{equation}
D_j \geq \mathbb{E}\ \delta_j\big(X,\phi_j(A_j,Y_j)\big);
\end{equation}
\end{subequations}
\item[(iv)] the rate tuple $(R_1,R_2,R_3)$ satisfies
\begin{subequations}\label{Eqn:Lem:SR-Achievability-Gen-Rates}
\begin{align}
R_1 &\geq I(X;A_1|Y_1),\\
R_1 + R_2 &\geq \max_{j=1,2} I(X;A_1|Y_j) + I(X;A_2|A_1,Y_2)\\
\notag
R_1 + R_2 + R_3 &\geq \max_{j=1,2,3} I(X;A_1|Y_j) + \max_{j=2,3} I(X;A_2|A_1,Y_j) \\ 
&\hspace{45mm} + I(X;A_3|A_1,A_2,Y_3).
\end{align}
\end{subequations}
\end{enumerate}

\begin{lemma}\label{Lem:SR-Achievability-Gen}
The rates in $\set{R}_\text{in}(D_1,D_2,D_3)$ are all achievable; that is, 
\begin{equation} 
\set{R}_\text{in}(D_1,D_2,D_3) \subseteq \set{R}(D_1,D_2,D_3).
\end{equation}
\end{lemma}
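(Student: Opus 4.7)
The plan is to specialise the general multi-receiver source-coding achievability of~\cite{Timo-Aug-2011-A} to the three-stage successive-refinement setting of Fig.~\ref{Fig:SR} and to establish the cardinality bounds~\eqref{Eqn:SR-Achievability-Gen-Card}. The coding scheme uses nested superposition codebooks followed by Wyner-Ziv-style binning: for each source block $\bm{X}$ the encoder picks an $\bm{A_1}$ codeword jointly typical with $\bm{X}$, then an $\bm{A_2}$ codeword (from a sub-codebook conditioned on the chosen $\bm{A_1}$) jointly typical with $(\bm{X},\bm{A_1})$, and finally an $\bm{A_3}$ codeword jointly typical with $(\bm{X},\bm{A_1},\bm{A_2})$. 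Each layer's codebook is then independently partitioned into bins of nominal rates $R_1$, $R_2$, and $R_3$, and the bin indices of the selected codewords become the messages $M_1$, $M_2$, and $M_3$. Receiver~$j$ retrieves $\bm{A_j}$ (and, when necessary, the lower-layer codewords it needs for joint decoding) by searching the announced bins for a unique jointly-typical tuple with its side information $\bm{Y_j}$, and reconstructs $\bm{\hat{X}_j}$ by applying $\phi_j$ component-wise.

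The rate constraints will follow from a standard joint-typicality packing analysis at each receiver. The only error event at Receiver~1 is a wrong $\bm{A_1}$ codeword in bin $M_1$ that is jointly typical with $\bm{Y_1}$; this yields $R_1\geq I(X;A_1|Y_1)$. At Receiver~2, joint decoding of $(\bm{A_1},\bm{A_2})$ must avoid simultaneous errors across the two layers, which produces the sum bound $R_1+R_2\geq I(X;A_1,A_2|Y_2)=I(X;A_1|Y_2)+I(X;A_2|A_1,Y_2)$; after combining with the Receiver~1 constraint, this collapses into the $\max_{j=1,2}$ expression of~\eqref{Eqn:Lem:SR-Achievability-Gen-Rates}, because when $I(X;A_1|Y_1)$ dominates $I(X;A_1|Y_2)$ the joint bound is already implied, and otherwise $R_2$ compensates. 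Receiver~3 contributes analogous joint error events over the three layers, yielding $R_1+R_2+R_3\geq I(X;A_1,A_2,A_3|Y_3)$ and $R_2+R_3\geq I(X;A_2,A_3|A_1,Y_3)$; these in turn collapse, after absorbing the weaker constraints from Receivers~1 and~2, into the double-$\max$ expression for the third sum-rate bound.

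The cardinality bounds will be obtained by the standard convex cover / Carath\'eodory argument: one first reduces $|\set{A}_1|$ while preserving $P_X$, the three expected distortions, and the mutual-information quantities in the rate constraints that depend only on $A_1$; then reduces $|\set{A}_2|$ keeping $P_{X,A_1}$ fixed and preserving the distortion at Receivers~2,~3 along with the conditional informations involving $A_2$; and finally reduces $|\set{A}_3|$ keeping $P_{X,A_1,A_2}$ fixed. I expect the main technical obstacle to be verifying that the joint-typicality error events for each receiver combine, after absorption of the weaker constraints into stronger ones, into precisely the $\max$ expressions in~\eqref{Eqn:Lem:SR-Achievability-Gen-Rates} rather than strictly tighter conjunctive bounds; once this book-keeping is done, the covering-lemma success at the encoder, the asymptotic vanishing of each error-probability term, and the Carath\'eodory reductions are essentially routine.
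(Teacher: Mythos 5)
Your plan — specialise the achievability result of Timo, Chan and Grant to this successive-refinement setting — is exactly the route the paper takes, but you then propose to re-derive the coding scheme and error analysis, whereas the paper's proof is essentially a one-line algebraic substitution. The paper restates \cite[Thm.~1]{Timo-Aug-2011-A} as Theorem~\ref{Thm:TCG-Achievability}, with seven auxiliaries $(U_{123},U_{12},U_{13},U_{23},U_1,U_2,U_3)$ and rate constraints already written in ``$\min$'' form, and then sets $U_{12}, U_{13}$ to constants, $U_{123}=U_1=A_1$, $U_{23}=U_2=A_2$, $U_3=A_3$. After this substitution the rate constraints in~\eqref{Eqn:TCG-Achievability-Rates} collapse by pure algebra: terms involving the constant auxiliaries vanish, $I(X;A_1)-\min_j I(A_1;Y_j)$ becomes $\max_j I(X;A_1|Y_j)$ via the Markov chain $A_1\markov X\markov Y_j$, and similarly for the $A_2$ and $A_3$ terms. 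There is no new joint-typicality error analysis to do, and in particular no ``book-keeping'' subtlety about whether the error events combine into a $\max$ rather than a conjunction — the $\max$ form is already built into the cited theorem.

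This matters because the concern you flagged as ``the main technical obstacle'' — that the per-receiver packing constraints might yield a strictly tighter conjunctive bound than the $\max$ expressions — is a genuine issue if one tries to re-derive the scheme from a naive superposition-plus-binning construction, and it is resolved in \cite{Timo-Aug-2011-A} by a more careful code construction than the one you sketch. By invoking Theorem~\ref{Thm:TCG-Achievability} as a black box, the paper avoids reopening that question. Also note that for this lemma the cardinality bounds~\eqref{Eqn:SR-Achievability-Gen-Card} are part of the \emph{definition} of $\set{R}_\text{in}$ and only restrict it, so they are not needed to prove $\set{R}_\text{in}\subseteq\set{R}$; the convex-cover argument is mentioned only in a footnote to justify that including them does not shrink the inner bound, and it plays no role in the proof of the achievability inclusion itself.
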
 

The next theorem, which is due to Tian and Diggavi~\cite{Tian-Aug-2007-A}, shows that the entire RD region is subsumed by $\set{R}_\text{in}(D_1,D_2,D_3)$ whenever the side information is physically degraded as in \eqref{Eqn:Thm:SR-Degraded-SI-Markov-Chain}.

\begin{theorem}\label{Thm:SR-Degraded-SI}
If the side information is physically degraded in the sense
\begin{equation}\label{Eqn:Thm:SR-Degraded-SI-Markov-Chain}
X \markov Y_3 \markov Y_2\markov Y_1,
\end{equation}
then~\cite[Thm.~1]{Tian-Aug-2007-A}
\begin{equation}
\set{R}_\text{in}(D_1,D_2,D_3) = \set{R}(D_1,D_2,D_3).
\end{equation}
Moreover, the rate constraints in~\eqref{Eqn:Lem:SR-Achievability-Gen-Rates} simplify to 
\begin{subequations}\label{Eqn:Thm:SR-Degraded-SI-Rates}
\begin{align}
R_1 & \geq I(X;A_1|Y_1)\\
R_1 + R_2 & \geq I(X;A_1|Y_1) + I(X;A_2|A_1,Y_2) \\
R_1 + R_2 + R_3 & \geq I(X;A_1|Y_1) + I(X;A_2|A_1,Y_2) + I(X;A_3|A_1,A_2,Y_3),
\end{align}
\end{subequations}
where $A_1$, $A_2$ and $A_3$ obey the cardinality constraints in~\eqref{Eqn:SR-Achievability-Gen-Card}, see also~\cite[Thm.~1]{Tian-Aug-2007-A}.
\end{theorem}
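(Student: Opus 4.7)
The plan has two parts. First I would show that, under the degradation hypothesis~\eqref{Eqn:Thm:SR-Degraded-SI-Markov-Chain}, the rate constraints~\eqref{Eqn:Lem:SR-Achievability-Gen-Rates} defining $\set{R}_\text{in}$ collapse to the simplified form~\eqref{Eqn:Thm:SR-Degraded-SI-Rates}; combined with Lemma~\ref{Lem:SR-Achievability-Gen} this yields the achievability inclusion $\set{R}_\text{in}(D_1,D_2,D_3) \subseteq \set{R}(D_1,D_2,D_3)$. Then I would prove a matching converse using Fano's inequality and Wyner--Ziv-style auxiliary variables, with the degradation Markov chain playing the essential role in single-letterization.

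For the rate simplification, fix any admissible $(A_1,A_2,A_3)$ satisfying $(A_1,A_2,A_3) \markov X \markov (Y_1,Y_2,Y_3)$. Combining with~\eqref{Eqn:Thm:SR-Degraded-SI-Markov-Chain} yields $A_1 \markov X \markov Y_3 \markov Y_2 \markov Y_1$, so by data processing $H(A_1|Y_1) \geq H(A_1|Y_2) \geq H(A_1|Y_3)$. The Markov chain also gives $H(A_1|X,Y_j) = H(A_1|X)$ for each $j$, and hence $I(X;A_1|Y_1) \geq I(X;A_1|Y_2) \geq I(X;A_1|Y_3)$. Applying the same argument with $(A_1,A_2)$ in place of $A_1$ yields $I(X;A_2|A_1,Y_2) \geq I(X;A_2|A_1,Y_3)$. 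Substituting these monotonicities into~\eqref{Eqn:Lem:SR-Achievability-Gen-Rates} collapses the maxima (at $j=1$ in the first two lines and $j=2$ in the third), reproducing~\eqref{Eqn:Thm:SR-Degraded-SI-Rates}.

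For the converse, fix an $n$-block code with rates $R_j$ and distortions at most $D_j+\epsilon$. I would introduce the Wyner--Ziv-style auxiliaries
\begin{equation*}
A_{1,i} \triangleq (M_1,Y_1^{i-1}), \quad A_{2,i} \triangleq (M_1,M_2,Y_2^{i-1}), \quad A_{3,i} \triangleq (M_1,M_2,M_3,Y_3^{i-1}),
\end{equation*}
and a time-sharing variable $Q$ uniform on $\{1,\ldots,n\}$. The per-letter Markov chain $(A_{1,Q},A_{2,Q},A_{3,Q}) \markov (X_Q,Q) \markov (Y_{1,Q},Y_{2,Q},Y_{3,Q})$ follows from i.i.d.\ source structure because, given $X_i$, the past samples $Y_j^{i-1}$ and $\bm{X}^{\sim i}$ are independent of $(Y_{1,i},Y_{2,i},Y_{3,i})$. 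Fano's inequality produces reconstruction maps $\phi_j(A_{j,Q},Y_{j,Q})$ satisfying the distortion constraints up to a vanishing correction, and the rate bounds follow by bounding
\begin{equation*}
n(R_1+\cdots+R_k)+n\epsilon \geq H(M_1,\ldots,M_k) \geq \sum_{j=1}^{k} H(M_j \,|\, M_1,\ldots,M_{j-1}, \bm{Y_j}),
\end{equation*}
and single-letterizing each summand as $\sum_i I(X_i; A_{j,i} \mid A_{1,i},\ldots,A_{j-1,i}, Y_{j,i})$.

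The main obstacle is this last single-letterization step, which does not go through for arbitrary side information: $A_{j,i}$ contains only $Y_j^{i-1}$ while the chain-rule expansion of $H(M_j \mid M_1,\ldots,M_{j-1}, \bm{Y_j})$ naturally produces terms indexed by $Y_1^{i-1}, \ldots, Y_{j-1}^{i-1}$ as well. Here the degradation $\bm{X} \markov \bm{Y_3} \markov \bm{Y_2} \markov \bm{Y_1}$ is essential: it yields conditional independencies such as $\bm{Y_1} \perp M_2 \mid M_1, \bm{Y_2}$, allowing one to freely insert or remove past samples of the ``noisier'' side information in the conditioning without changing the mutual informations. After absorbing these extra terms and time-averaging, the bounds take the form~\eqref{Eqn:Thm:SR-Degraded-SI-Rates}, and letting $\epsilon \to 0$ finishes the converse. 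The cardinality bounds~\eqref{Eqn:SR-Achievability-Gen-Card} follow from the standard convex-cover argument applied to the resulting single-letter region.
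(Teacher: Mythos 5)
Your achievability half is essentially the paper's own treatment of Theorem~\ref{Thm:SR-Degraded-SI}: the paper proves achievability by Lemma~\ref{Lem:SR-Achievability-Gen} and by collapsing the maxima in~\eqref{Eqn:Lem:SR-Achievability-Gen-Rates} under the long chain $A \markov X \markov Y_3 \markov Y_2 \markov Y_1$, exactly as you do, and your monotonicity claims are all true. One small repair: $I(X;A_2|A_1,Y_2)\ge I(X;A_2|A_1,Y_3)$ does not follow by ``applying the same argument with $(A_1,A_2)$ in place of $A_1$'' (that only gives $I(X;A_1,A_2|Y_2)\ge I(X;A_1,A_2|Y_3)$, and differences of inequalities cannot be subtracted); instead use the conditional data-processing inequality: $(A_1,A_2)\markov Y_3\markov Y_2$ implies $A_2\markov (A_1,Y_3)\markov Y_2$, hence $I(A_2;Y_2|A_1)\le I(A_2;Y_3|A_1)$, while $H(A_2|A_1,X,Y_j)=H(A_2|A_1,X)$ for every $j$.

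The converse is where your proposal has a genuine gap. (Note that the paper does not reprove the converse either; it is attributed to Tian and Diggavi~\cite{Tian-Aug-2007-A}, Appendix~I.) First, with $A_{j,i}$ containing only the past $Y_j^{i-1}$, the reconstruction $\hat{X}_{j,i}$, being the $i$-th component of $g_j(M_1,\ldots,M_j,\bm{Y_j})$, also depends on the future $Y_{j,i+1}^n$, so it is \emph{not} a function of $(A_{j,i},Y_{j,i})$ and the distortion conditions in the definition of $\set{R}_\text{in}$ are not met by your auxiliaries; the standard fix is to include the future side information as well (compare $B_i=(M,Y_{2,1}^{i-1},Y_{2,i+1}^n)$ in the paper's converse of Lemma~\ref{Lem:Converse}, Appendix~\ref{Sec:ConverseProof}), and the distortion constraint then follows from time sharing and convexity, not from Fano's inequality, which plays no role for general distortions $(D_1,D_2,D_3)$. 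Second, and more importantly, the bound $H(M_j|M_1,\ldots,M_{j-1},\bm{Y_j})\ge\sum_i I(X_i;A_{j,i}|A_{1,i},\ldots,A_{j-1,i},Y_{j,i})$ is precisely the nontrivial step: expanding $I(\bm{X};M_j|M_1,\ldots,M_{j-1},\bm{Y_j})$ yields conditioning on the whole block $\bm{Y_j}$ and on $X^{i-1}$, not on the per-letter tuples $(A_{1,i},\ldots,A_{j-1,i},Y_{j,i})$, and reconciling the two while keeping \emph{one} common triple $(A_1,A_2,A_3)$ that simultaneously satisfies all three nested rate bounds, the Markov chain, and the reconstruction identification is the substance of Tian and Diggavi's proof. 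Your appeal to independencies such as $\bm{Y_1}\perp M_2\mid (M_1,\bm{Y_2})$ (true, by weak union applied to $\bm{Y_1}\markov\bm{Y_2}\markov\bm{X}\markov(M_1,M_2)$) and to ``freely inserting or removing'' past samples names the right ingredient but does not carry out the accounting, so the converse remains unproved as written; either execute that bookkeeping in full or, as the paper does, cite \cite{Tian-Aug-2007-A} for it.
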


The achievability part of Theorem~\ref{Thm:SR-Degraded-SI} is given by Lemma~\ref{Lem:SR-Achievability-Gen}, and the simplified rate constraints in~\eqref{Eqn:Thm:SR-Degraded-SI-Rates} follow from the Markov chain~\eqref{Eqn:Thm:SR-Degraded-SI-Markov-Chain}. The converse assertion was proved by Tian and Diggavi in~\cite[App.~I]{Tian-Aug-2007-A} and there, again, the Markov chain~\eqref{Eqn:Thm:SR-Degraded-SI-Markov-Chain} enabled a crucial step. 

We now consider Theorem~\ref{Thm:SR-Degraded-SI} with conditionally less noisy side information and, as previously, deterministic distortion functions at Receivers 1 and~2. In particular, Receivers~1 and 2 wish to reconstruct almost losslessly
\begin{equation}
\tilde{X}_1 \triangleq \psi_1(X) \quad \text{and} \quad \tilde{X}_2 \triangleq \psi_2(X),
\end{equation}
respectively,
 where $\psi_1$ and $\psi_2$ are functions of the form
\begin{equation}
\psi_j : \set{X} \longrightarrow \tilde{\set{X}}_j, \quad j=1,2.
\end{equation}
Theorem~\ref{Thm:SR-Degraded-SI}, with deterministic $\delta_1$ and $\delta_2$, simplifies as follows. Define
\begin{equation*}
S'(D_3) \triangleq \min I(X;A_3|\tilde{X}_1,\tilde{X}_2,Y_3),\quad D_3 \geq 0,
\end{equation*}
where the minimisation is taken over all auxiliary $A_3$, jointly distributed with $(X,Y_1,Y_2,Y_3)$, such that the following is true:
\begin{enumerate}
\item[(i)] the auxiliary random variable is conditionally independent of the side information given $X$,
\begin{equation}
A_3\ \markov X \ \markov (Y_1,Y_2,Y_3);
\end{equation}
\item[(ii)] the cardinality of the alphabet of $A_3$ is bound by 
\begin{equation}
|\set{A}_3| \leq |\set{X}| + 1;
\end{equation}
\item[(iii)] there exists a (deterministic) map 
\begin{equation}
\phi_3 : \set{A}_3 \times \tilde{\set{X}}_1 \times \tilde{\set{X}}_2 \times \set{Y}_3\longrightarrow \hat{\set{X}}_3
\end{equation}
with 
\begin{equation}
D_3 \geq \mathbb{E}\ \delta_3\big(X,\phi_3(A_3,\tilde{X}_1,\tilde{X}_2,Y_3)\big).
\end{equation}
\end{enumerate}

\begin{corollary}
If the side information is physically degraded as in \eqref{Eqn:Thm:SR-Degraded-SI-Markov-Chain} and $\delta_1$ and $\delta_2$ are deterministic, then $\set{R}(0,0,D_3)$ is equal to the set of all rate tuples $(R_1,R_2,R_3)$ satisfying 
\begin{subequations}\label{Eqn:Cor9rates}
\begin{align}
R_1 & \geq H(\tilde{X}_1|Y_1) \\
R_1 + R_2 & \geq H(\tilde{X}_1|Y_1) + H(\tilde{X}_2 | \tilde{X}_1, Y_2) \\
R_1 + R_2 + R_3 & \geq H(\tilde{X}_1|Y_1) + H(\tilde{X}_2 | \tilde{X}_1, Y_2) + S'(D_3).
\end{align}
\end{subequations}
\end{corollary}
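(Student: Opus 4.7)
The plan is to deduce the corollary as a direct specialisation of Theorem~\ref{Thm:SR-Degraded-SI} to the case $D_1=D_2=0$ with deterministic $\delta_1,\delta_2$. For achievability, I will substitute $A_1=\tilde{X}_1$ and $A_2=\tilde{X}_2$ into the single-letter rates~\eqref{Eqn:Thm:SR-Degraded-SI-Rates} and let $A_3$ range over the auxiliaries appearing in the definition of $S'(D_3)$. Both $\tilde{X}_1$ and $\tilde{X}_2$ are deterministic functions of $X$, so the Markov constraint $(A_1,A_2,A_3)\markov X\markov(Y_1,Y_2,Y_3)$ is automatic, and the projections $\phi_j(a,y):=a$ deliver the zero-distortion requirement. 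The substitutions collapse $I(X;\tilde{X}_1|Y_1)$ to $H(\tilde{X}_1|Y_1)$ and $I(X;\tilde{X}_2|\tilde{X}_1,Y_2)$ to $H(\tilde{X}_2|\tilde{X}_1,Y_2)$; minimising $I(X;A_3|\tilde{X}_1,\tilde{X}_2,Y_3)$ over admissible $A_3$ produces $S'(D_3)$.

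For the converse, I will start from Theorem~\ref{Thm:SR-Degraded-SI}: any $(R_1,R_2,R_3)\in\set{R}(0,0,D_3)$ is supported by auxiliaries $(A_1,A_2,A_3)$ satisfying~\eqref{Eqn:Thm:SR-Degraded-SI-Rates} with $\phi_j(A_j,Y_j)=\tilde{X}_j$ (a consequence of $D_j=0$ and the determinism of $\delta_j$, for $j=1,2$). The first bound, $R_1\geq H(\tilde{X}_1|Y_1)$, is immediate from $I(X;A_1|Y_1)\geq I(\tilde{X}_1;A_1|Y_1)=H(\tilde{X}_1|Y_1)$. The central step is a \emph{degradation-shifting} identity: the chain $X\markov Y_2\markov Y_1$ together with $\tilde{X}_1=\psi_1(X)$ and $A_1\markov X\markov(Y_1,Y_2)$ yields the Markov relation $A_1\markov(\tilde{X}_1,Y_2)\markov Y_1$, and hence
\begin{equation*}
I(X;A_1\mid\tilde{X}_1,Y_1)\ \geq\ I(X;A_1\mid\tilde{X}_1,Y_2).
\end{equation*}
Writing $I(X;A_1|Y_1)=H(\tilde{X}_1|Y_1)+I(X;A_1|\tilde{X}_1,Y_1)$, applying this shift, and then invoking the chain rule together with $\tilde{X}_2=\phi_2(A_2,Y_2)$ produces
\begin{equation*}
I(X;A_1|Y_1)+I(X;A_2|A_1,Y_2)\ \geq\ H(\tilde{X}_1|Y_1)+H(\tilde{X}_2|\tilde{X}_1,Y_2)+I(X;A_1,A_2\mid\tilde{X}_1,\tilde{X}_2,Y_2).
\end{equation*}
A second degradation shift $I(X;A_1,A_2|\tilde{X}_1,\tilde{X}_2,Y_2)\geq I(X;A_1,A_2|\tilde{X}_1,\tilde{X}_2,Y_3)$, followed by the chain-rule bound $I(X;A_1,A_2,A_3|\tilde{X}_1,\tilde{X}_2,Y_3)\geq I(X;A_3|\tilde{X}_1,\tilde{X}_2,Y_3)\geq S'(D_3)$, closes the triple-sum argument. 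The final step is justified because $A_3$ paired with $\phi_3'(a_3,\tilde{x}_1,\tilde{x}_2,y_3):=\phi_3(a_3,y_3)$ is a feasible candidate in the definition of $S'(D_3)$.

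The main obstacle is the sum-rate step. A direct per-term estimate $I(X;A_2|A_1,Y_2)\geq H(\tilde{X}_2|\tilde{X}_1,Y_2)$ is false because $A_1$ may carry information about $\tilde{X}_2$ beyond what $\tilde{X}_1$ captures, and consequently $H(\tilde{X}_2|A_1,Y_2)$ can be strictly smaller than $H(\tilde{X}_2|\tilde{X}_1,Y_2)$. The remedy is to \emph{export} the surplus information $I(X;A_1|\tilde{X}_1,Y_1)$ from receiver~1 over to receiver~2 (and, iteratively, on to receiver~3) using the degradation-shifting inequality above; once exported, the surplus recombines with $A_2$ (and $A_3$) inside a joint chain-rule identity. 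The argument requires no appeal to the more delicate single-letterisation Lemma~\ref{Lem:Conditional-Mathis-Lemma}, since physical degradation is already assumed; that lemma would become necessary if one tried to weaken~\eqref{Eqn:Thm:SR-Degraded-SI-Markov-Chain} to a conditionally-less-noisy hypothesis.
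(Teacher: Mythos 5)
Your proposal is correct, but your converse takes a genuinely different route from the paper's. The achievability halves coincide: both set $A_1=\tilde{X}_1$, $A_2=\tilde{X}_2$ in Theorem~\ref{Thm:SR-Degraded-SI} and invoke the definition of $S'(D_3)$. For the converse, however, the paper's (omitted) argument follows Appendix~\ref{App:DeterministicDegradedSI}: an operational, $n$-letter converse that starts from the messages, applies Fano's inequality and the i.i.d.\ structure, uses the degradation Markov chains at blocklength $n$, and runs a Wyner--Ziv-style per-letter step to produce $S'(D_3)$ --- essentially a simplified instance of the Lemma~\ref{Lem:SR-Converse} template that does not even need Lemma~\ref{Lem:Conditional-Mathis-Lemma}. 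You instead take Tian--Diggavi's single-letter characterisation (Theorem~\ref{Thm:SR-Degraded-SI}) as given and show by purely single-letter manipulations that $\set{R}_\text{in}(0,0,D_3)$ collapses to \eqref{Eqn:Cor9rates}; your conditional degradation-shifting inequality $I(X;A_1|\tilde{X}_1,Y_1)\geq I(X;A_1|\tilde{X}_1,Y_2)$ is indeed valid under the chains $A_1\markov X\markov(Y_1,Y_2)$ and $X\markov Y_2\markov Y_1$ (it is the conditional form of the classical degraded-side-information step), and the ``export and recombine'' chain-rule bookkeeping through $I(X;A_1,A_2|\tilde{X}_1,\tilde{X}_2,Y_2)$ and then $Y_3$ checks out; your observation that the naive per-term bound $I(X;A_2|A_1,Y_2)\geq H(\tilde{X}_2|\tilde{X}_1,Y_2)$ fails is also apt. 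What your route buys is that no Fano/Wyner--Ziv machinery needs to be re-derived; what it costs is reliance on the full strength of Tian--Diggavi's converse, so it cannot be adapted to the conditionally less noisy setting treated immediately after (which is precisely why the paper keeps the operational-converse template). Two small glosses remain, at the same level of informality as the paper itself: in the achievability direction the reconstruction map in $\set{R}_\text{in}$ only takes $(A_3,Y_3)$ as arguments, so the third auxiliary should be taken as $(A_3,\tilde{X}_1,\tilde{X}_2)$ to realise $\phi_3(A_3,\tilde{X}_1,\tilde{X}_2,Y_3)$ without changing the rate; and in the final step $I(X;A_3|\tilde{X}_1,\tilde{X}_2,Y_3)\geq S'(D_3)$ the auxiliary inherited from Theorem~\ref{Thm:SR-Degraded-SI} need not satisfy the cardinality bound $|\set{A}_3|\leq|\set{X}|+1$ in the definition of $S'$, so one should note (via the support lemma) that this bound is without loss of generality.
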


\begin{proof}
The achievability part follows directly from Theorem~\ref{Thm:SR-Degraded-SI} upon selecting the auxiliary random variables as $A_1 = \tilde{X}_1$ and $A_2 = \tilde{X}_2$ as well as recalling the definition of $S'(D_3)$. The converse can be proved following arguments similar to those used in Appendix~\ref{App:DeterministicDegradedSI} and is omitted for brevity. 
\end{proof}

The next lemma is a converse for arbitrarily distributed side information: it is a successive-refinement analogue of Lemma~\ref{Lem:Converse}. Let $\set{R}_\text{out}(D_3)$ denote the set of all rate tuples $(R_1,R_2,R_3)$ for which
\begin{align}
R_1 &\geq H(\tilde{X}_1|Y_1) \\
R_1 + R_2 &\geq H(\tilde{X}_1|Y_1) + H(\tilde{X}_2 | \tilde{X}_1, Y_2)
 + \min_W \Big\{I(W;Y_2|\tilde{X}_1) - I(W;Y_1 | \tilde{X}_1) \Big\}\\
\notag
R_1 + R_2 + R_3 & \geq H(\tilde{X}_1|Y_1) + H(\tilde{X}_2 | \tilde{X}_1, Y_2) + S'(D_3) + \min_W \Big\{I(W;Y_2|\tilde{X}_1) - I(W;Y_1 | \tilde{X}_1) \Big\}\\
&\hspace{5mm} + \min_W \Big\{I(W;Y_3|\tilde{X}_1,\tilde{X}_2) - I(W;Y_2 | \tilde{X}_1,\tilde{X}_2)\Big\},
\end{align}
where each minimisation is independently taken over all auxiliary $W$, jointly distributed with $(X,Y_1,Y_2,$ $Y_3)$, such that $|\set{W}| \leq |\set{X}|$ and $W \markov X \markov (Y_1,Y_2,Y_3)$. 

\begin{lemma}[Converse]\label{Lem:SR-Converse}
If $\delta_1$ and $\delta_2$ are deterministic, then
\begin{equation}
\set{R}_\text{out}(D_3) \supseteq \set{R}(0,0,D_3).
\end{equation}
\end{lemma}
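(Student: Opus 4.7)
My plan is to mirror the converse of Lemma~\ref{Lem:Converse} (given in Appendix~\ref{Sec:ConverseProof}), generalised to produce three rate inequalities and invoking Lemma~\ref{Lem:Conditional-Mathis-Lemma} once per side-information switch: $\bm{Y_1}\to\bm{Y_2}$ (conditioning on $\tilde{\bm{X}}_1$) and $\bm{Y_2}\to\bm{Y_3}$ (conditioning on $(\tilde{\bm{X}}_1,\tilde{\bm{X}}_2)$). I begin with
\begin{equation*}
n(R_1+R_2+R_3) \geq H(M_1,M_2,M_3) \geq H(M_1|\bm{Y_1}) + H(M_2|M_1,\bm{Y_2}) + H(M_3|M_1,M_2,\bm{Y_3}),
\end{equation*}
each step reducing entropy by additional conditioning, and then rewrite each right-hand-side entropy as $I(\bm{X};M_j|\cdots)$ using that each $M_j$ is a deterministic function of $\bm{X}$.

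Next, I extract the three target single-letter quantities one at a time while carefully tracking the residuals. From the first term, the splitting $I(\bm{X};M_1|\bm{Y_1}) = I(\tilde{\bm{X}}_1;M_1|\bm{Y_1}) + H(M_1|\tilde{\bm{X}}_1,\bm{Y_1})$ combined with Fano's inequality at Receiver~1 yields $nH(\tilde{X}_1|Y_1)$ plus the residue $H(M_1|\tilde{\bm{X}}_1,\bm{Y_1})$. From the second, I inject $\tilde{\bm{X}}_1$ into the conditioning before applying Fano at Receiver~2, producing $nH(\tilde{X}_2|\tilde{X}_1,Y_2)$ along with residues $-H(M_1|\tilde{\bm{X}}_1,\bm{Y_2})$ and $+H(M_1,M_2|\tilde{\bm{X}}_1,\tilde{\bm{X}}_2,\bm{Y_2})$. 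From the third, I further condition on $(\tilde{\bm{X}}_1,\tilde{\bm{X}}_2)$, use the chain-rule identity $I(\bm{X};M_3|M_1,M_2,\tilde{\bm{X}}_1,\tilde{\bm{X}}_2,\bm{Y_3}) = I(\bm{X};M_1,M_2,M_3|\tilde{\bm{X}}_1,\tilde{\bm{X}}_2,\bm{Y_3}) - H(M_1,M_2|\tilde{\bm{X}}_1,\tilde{\bm{X}}_2,\bm{Y_3})$, and apply a standard Wyner--Ziv converse (for source $\bm{X}$ with side information $(\tilde{\bm{X}}_1,\tilde{\bm{X}}_2,\bm{Y_3})$ and distortion $D_3$) to lower-bound the first summand by $nS'(D_3)-n\epsilon$. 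Summing, the three target quantities collect on one side, and the remaining leftover entropies telescope exactly into
\begin{equation*}
\bigl[H(M_1|\tilde{\bm{X}}_1,\bm{Y_1}) - H(M_1|\tilde{\bm{X}}_1,\bm{Y_2})\bigr] + \bigl[H(M_1,M_2|\tilde{\bm{X}}_1,\tilde{\bm{X}}_2,\bm{Y_2}) - H(M_1,M_2|\tilde{\bm{X}}_1,\tilde{\bm{X}}_2,\bm{Y_3})\bigr].
\end{equation*}

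Each bracket is a difference of $n$-letter conditional mutual informations: the first equals $I(M_1;\bm{Y_2}|\tilde{\bm{X}}_1) - I(M_1;\bm{Y_1}|\tilde{\bm{X}}_1)$, and the second is the analogous quantity for $J=(M_1,M_2)$, $L=(\tilde{\bm{X}}_1,\tilde{\bm{X}}_2)$, $(\bm{S_1},\bm{S_2})=(\bm{Y_2},\bm{Y_3})$. Applying Lemma~\ref{Lem:Conditional-Mathis-Lemma} to each bracket (in its tightened form, since $\tilde{X}_1$ and $(\tilde{X}_1,\tilde{X}_2)$ are functions of $X$) lower-bounds each by $n$ times the corresponding single-letter minimum over $W$ with $|\set{W}|\leq|\set{X}|$ and $W\markov X\markov(Y_1,Y_2,Y_3)$, which is precisely the minimisation appearing in $\set{R}_\text{out}(D_3)$. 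The bounds on $R_1$ and $R_1+R_2$ drop out as subcases: $R_1$ uses only the first extraction step; $R_1+R_2$ uses the first two, discards the non-negative $H(M_1,M_2|\tilde{\bm{X}}_1,\tilde{\bm{X}}_2,\bm{Y_2})$ term, and applies the single-letterisation only to the first bracket. The main obstacle is the bookkeeping of the extraction step: $\tilde{\bm{X}}_1$ and $(\tilde{\bm{X}}_1,\tilde{\bm{X}}_2)$ must be introduced into the conditioning at precisely the right points in the $M_2$- and $M_3$-analyses so that the residues assemble into exactly these two telescoping brackets; any other ordering leaves residual mutual-information terms that do not match the form of Lemma~\ref{Lem:Conditional-Mathis-Lemma} and therefore cannot be single-letterised.
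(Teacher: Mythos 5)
Your proof is correct and follows essentially the same route as the paper's: Fano's inequality for the first two lossless terms, a Wyner--Ziv style single-letterisation for $S'(D_3)$, and two invocations of Lemma~\ref{Lem:Conditional-Mathis-Lemma} to handle the side-information switches $\bm{Y_1}\to\bm{Y_2}$ and $\bm{Y_2}\to\bm{Y_3}$. The one genuine difference is in the bookkeeping: you start from the decomposition $H(M_1,M_2,M_3)\geq H(M_1|\bm{Y_1})+H(M_2|M_1,\bm{Y_2})+H(M_3|M_1,M_2,\bm{Y_3})$ and process each term separately, which produces residual brackets with $J=M_1$ (for the $\bm{Y_1}\leftrightarrow\bm{Y_2}$ switch) and $J=(M_1,M_2)$ (for the $\bm{Y_2}\leftrightarrow\bm{Y_3}$ switch). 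The paper instead starts from $H(M_1,M_2,M_3)\geq I(\bm{\tilde{X}_1},\bm{X};M_1,M_2,M_3|\bm{Y_1})$, applies the Markov-chain switching identity directly, and so carries the full triple $J=(M_1,M_2,M_3)$ in both residuals for the sum-rate bound (and $J=(M_1,M_2)$ for the $R_1+R_2$ bound). Both choices of $J$ satisfy the Markov hypothesis of Lemma~\ref{Lem:Conditional-Mathis-Lemma} (with $R=X$ and $L$ a function of $X$), and because the lemma's output is only used after minimising over $W$, both land on exactly the same outer region $\set{R}_\text{out}(D_3)$. Your incremental decomposition is arguably tidier since the $R_1$ and $R_1+R_2$ constraints fall out by truncating the same telescoping sum; the paper recomputes the $R_1+R_2$ bound from scratch. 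One small remark: you invoke determinism of the encoder to write $H(M_j|\cdots)=I(\bm{X};M_j|\cdots)$; the paper allows stochastic encoders, but your argument survives this because the equality can be replaced by $\geq$ throughout (or, more cleanly, you can skip the detour through $I(\bm{X};M_j|\cdots)$ entirely and expand $H(M_j|\cdots)$ directly by the entropy chain rule), so this does not affect correctness.
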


Our proof of Lemma~\ref{Lem:SR-Converse} is quite similar to that of Lemma~\ref{Lem:Converse}, and it is given in Appendix~\ref{App:SR-Converse}. The next theorem shows that the outer bound (converse) of Lemma~\ref{Lem:SR-Converse} matches the inner bound (achievability) of Lemma~\ref{Lem:SR-Achievability-Gen} for a certain conditionally less noisy setting. 

\begin{theorem}
If $\delta_1$ and $\delta_2$ are deterministic, 
\begin{equation}
 \cln{Y_2}{Y_1}{\tilde{X}_1}\quad \text{and} \quad  \cln{Y_3}{Y_2}{\tilde{X}_1,\tilde{X}_2}, \label{Eqn:CLN3}
\end{equation} 
as well as
\begin{subequations}
\label{Eqn:condentropy23}
\begin{align}
 H(\tilde{X}_1 |  Y_1) &\geq \max\big\{H(\tilde{X}_1 |  Y_2) , H(\tilde{X}_1|Y_3)\big\}, \label{Eqn:condentropy2}\\
 H(\tilde{X}_2 | \tilde{X}_1, Y_2) & \geq H(\tilde{X}_2 | \tilde{X}_1 , Y_3),\label{Eqn:condentropy3}
\end{align}
\end{subequations}
then $\set{R}(0,0,D_3)$ is equal to the set of all rate tuples $(R_1,R_2,R_3)$ satisfying \eqref{Eqn:Cor9rates}, i.e.,
\begin{subequations}
\begin{align}
R_1 & \geq H(\tilde{X}_1|Y_1) \\
R_1 + R_2 & \geq H(\tilde{X}_1|Y_1) + H(\tilde{X}_2 | \tilde{X}_1, Y_2) \\
R_1 + R_2 + R_3 & \geq H(\tilde{X}_1|Y_1) + H(\tilde{X}_2 | \tilde{X}_1, Y_2) + S'(D_3).
\end{align}
\end{subequations}
\end{theorem}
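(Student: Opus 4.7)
The plan is to match an achievability bound obtained by specialising Lemma~\ref{Lem:SR-Achievability-Gen} with the converse outer bound of Lemma~\ref{Lem:SR-Converse}, showing that under the hypotheses the two bounds collapse to exactly \eqref{Eqn:Cor9rates}. Both the ``degraded'' flavour of the conditional-entropy inequalities and the CLN conditions enter only through elementary simplifications, so the argument is essentially a bookkeeping exercise once the supporting lemmas are in place.

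For achievability, I would invoke Lemma~\ref{Lem:SR-Achievability-Gen} with the choice $A_1 = \tilde{X}_1$, $A_2 = \tilde{X}_2$, and $A_3$ taken to be the minimiser in the definition of $S'(D_3)$. Since $\tilde{X}_1$ and $\tilde{X}_2$ are deterministic functions of $X$, the Markov chain $(A_1,A_2,A_3)\markov X\markov(Y_1,Y_2,Y_3)$ is inherited from the chain $A_3\markov X\markov(Y_1,Y_2,Y_3)$, and the distortion constraints at Receivers~1 and~2 are satisfied at zero distortion by the reconstruction maps $\phi_j(A_j,Y_j)=A_j=\tilde{X}_j$ ($j=1,2$). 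The rate bounds \eqref{Eqn:Lem:SR-Achievability-Gen-Rates} then reduce via $I(X;A_1|Y_j)=H(\tilde{X}_1|Y_j)$ and $I(X;A_2|A_1,Y_j)=H(\tilde{X}_2|\tilde{X}_1,Y_j)$; the hypothesis \eqref{Eqn:condentropy2} collapses $\max_j H(\tilde{X}_1|Y_j)$ to $H(\tilde{X}_1|Y_1)$, \eqref{Eqn:condentropy3} collapses $\max_{j=2,3}H(\tilde{X}_2|\tilde{X}_1,Y_j)$ to $H(\tilde{X}_2|\tilde{X}_1,Y_2)$, and the final term $I(X;A_3|\tilde{X}_1,\tilde{X}_2,Y_3)$ matches $S'(D_3)$ by the choice of $A_3$. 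This yields exactly \eqref{Eqn:Cor9rates}.

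For the converse, I would first apply Lemma~\ref{Lem:SR-Converse} to obtain $\set{R}(0,0,D_3)\subseteq\set{R}_\text{out}(D_3)$, and then show that under \eqref{Eqn:CLN3} both correction terms
\begin{equation*}
\min_{W}\bigl\{I(W;Y_2|\tilde{X}_1)-I(W;Y_1|\tilde{X}_1)\bigr\}
\quad\text{and}\quad
\min_{W}\bigl\{I(W;Y_3|\tilde{X}_1,\tilde{X}_2)-I(W;Y_2|\tilde{X}_1,\tilde{X}_2)\bigr\}
\end{equation*}
are zero. Each minimum is upper bounded by zero by selecting $W$ constant. For the lower bound, note that any $W$ feasible in $\set{R}_\text{out}(D_3)$ satisfies $W\markov X\markov(Y_1,Y_2,Y_3)$; because $\tilde{X}_1$ (respectively $(\tilde{X}_1,\tilde{X}_2)$) is a function of $X$, this chain trivially entails $W\markov(X,\tilde{X}_1)\markov(Y_1,Y_2)$ (resp.\ $W\markov(X,\tilde{X}_1,\tilde{X}_2)\markov(Y_2,Y_3)$), and then Definition~\ref{Def:ConditionallyLessNoisy} applied with $L=\tilde{X}_1$ (resp.\ $L=(\tilde{X}_1,\tilde{X}_2)$) together with \eqref{Eqn:CLN3} forces each minimand to be non-negative. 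Consequently $\set{R}_\text{out}(D_3)$ reduces to exactly the region described by \eqref{Eqn:Cor9rates}, matching the achievability bound.

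I expect the main conceptual obstacle to sit inside the supporting Lemma~\ref{Lem:SR-Converse} (where the telescoping single-letterisation of Lemma~\ref{Lem:Conditional-Mathis-Lemma} must be applied carefully at each stage of the refinement), rather than in the theorem itself. Once that machinery is available, the present proof is essentially a matching exercise: the ordering hypotheses \eqref{Eqn:condentropy23} eliminate the maxima in the achievability region, while the CLN hypotheses \eqref{Eqn:CLN3} eliminate the non-positive correction terms in the converse region, leaving the two bounds identical.
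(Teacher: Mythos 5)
Your proposal is correct and follows the same approach as the paper's (very terse) proof: achievability by specialising Lemma~\ref{Lem:SR-Achievability-Gen} with $A_1=\tilde{X}_1$, $A_2=\tilde{X}_2$ and using \eqref{Eqn:condentropy23} to collapse the maxima, and the converse by applying Lemma~\ref{Lem:SR-Converse} and invoking \eqref{Eqn:CLN3} to make the correction terms vanish. The only refinement you have added over the paper's two-sentence proof is spelling out why the Markov chain $W\markov X\markov(Y_1,Y_2,Y_3)$ implies the chain $W\markov(X,L)\markov(Y_1,Y_2)$ required by Definition~\ref{Def:ConditionallyLessNoisy} when $L$ is a function of $X$, which is a correct and useful detail to record.
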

\begin{proof}
The converse follows directly by Lemma~\ref{Lem:SR-Converse} and uses the conditionally less noisy assumptions~\eqref{Eqn:CLN3}. The achievability follows by Lemma~\ref{Lem:SR-Achievability-Gen} with  $A=\tilde{X}_1$ and $B=\tilde{X}_2$ and uses inequalities~\eqref{Eqn:condentropy23}. \end{proof}

\begin{remark}
Steinberg and Merhav~\cite{Steinberg-Aug-2004-A} were the first to consider and solve the two-stage successive refinement problem with physically degraded side information. Tian and Diggavi's work~\cite{Tian-Aug-2007-A} generalises Steinberg and Merhav's result to three or more stages with physically degraded side information. 
\end{remark}

%%%%
%%%%
%%%%
%%%%

\subsection{Two Stages with $Y_1$ better than $Y_2$ (abhinc $X \markov Y_1 \markov Y_2$)}\label{Sec:Scalable}

Reconsider the successive-refinement problem in Fig.~\ref{Fig:SR}, but now with only two receivers, Receiver~1 and 2. Moreover, suppose that the side information at Receiver~1 is better than the side information at Receiver~2. \emph{Side information scalable source coding} refers to the special case where 
\begin{equation}\label{Eqn:ScalableSI}
X\ \markov Y_1\ \markov Y_2.
\end{equation}
Notice that the roles of $Y_1$ and $Y_2$ in~\eqref{Eqn:ScalableSI} are reversed with respect to Definition~\ref{Def:DegradedSI} and Theorem~\ref{Thm:SR-Degraded-SI}. In contrast to Theorem~\ref{Thm:SR-Degraded-SI}, however, there is no known computable expression for the RD region in this setting. Tian and Diggavi give achievability and converse bounds in~\cite{Tian-Dec-2008-A}, and they show that these bounds match for degraded deterministic distortion measures. We wish to relax the Markov chain in~\eqref{Eqn:ScalableSI} to a conditionally less noisy setting and yet still recover the special case results of Tian and Diggavi.  

The next lemma gives an achievable rate region for arbitrarily distributed side information. Like in Lemma~\ref{Lem:SR-Achievability-Gen}, the rate constraints can be distilled from the rate constraints in~\cite{Timo-Aug-2011-A}, see Appendix~\ref{App:SR-SSC-Achievability-Gen}, and the cardinality bounds can be derived by the standard convex cover method. The lemma includes Tian and Diggavi's bound~\cite[Cor.~1]{Tian-Dec-2008-A} for arbitrarily distributed side information as a special case. 

Let $\set{R}^*_\text{in}(D_1,D_2)$ denote the set of all rate pairs $(R_1,R_2)$ for which there exist auxiliary random variables $(A_{12},A_1,A_2)$, jointly distributed with the source $(X,Y_1,Y_2)$, such that the following is true:
\begin{enumerate}
\item[(i)] there is a Markov chain,
\begin{equation}
(A_{12},A_1,A_2)\ \markov X\ \markov (Y_1,Y_2);
\end{equation}
\item[(ii)] the cardinalities of the alphabets of $A_{12}$, $A_1$ and $A_2$ respectively satisfy
\begin{align}
|\set{A}_{12}| &\leq |\set{X}| + 3\\
|\set{A}_1| &\leq |\set{X}|\ |\set{A}_{12}| + 1\\
|\set{A}_2| &\leq |\set{X}|\ |\set{A}_{12}| + 1;
\end{align}
\item[(iii)] there exist deterministic maps for $j = 1,2$,
\begin{equation}
\phi_j : \set{A}_j \times \set{Y}_j \longrightarrow \hat{\set{X}}_j,
\end{equation}
with
\begin{equation}
D_j \geq \mathbb{E}\ \delta_j \big(X,\phi_j(A_j,Y_j)\big);
\end{equation}
\item[(iv)] the rate pair $(R_1,R_2)$ satisfies 
\begin{subequations}\label{Eqn:SSC-Ach-1}
\begin{align}
R_1 &\geq I(X;A_{12},A_1|Y_1)\\
R_1 + R_2 &\geq \max \Big\{ I(X;A_{12}|Y_1), I(X;A_{12}|Y_2) \Big\} + I(X;A_1|A_{12},Y_1) + I(X;A_2|A_{12},Y_2).
\end{align}
\end{subequations}
\end{enumerate}

\begin{lemma}\label{Lem:SSC-Achievability-Gen}
The rate pairs in $\set{R}^*_\text{in}(D_1,D_2)$ are all achievable; that is,
\begin{equation}
\set{R}^*_\text{in}(D_1,D_2) \subseteq \set{R}(D_1,D_2).
\end{equation}
\end{lemma}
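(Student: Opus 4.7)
\emph{Proof proposal.} The plan is to build a Heegard-Berger-style layered code with a common description $A_{12}$ that is decoded at both receivers, together with Wyner-Ziv refinements $A_1$ and $A_2$ for the individual receivers, and to rate-split the bin index of $A_{12}$ across the two messages $M_1$ and $M_2$ so that Receiver~1 can decode $A_{12}$ from $M_1$ alone while Receiver~2 uses both $M_1$ and $M_2$.

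First I would fix auxiliaries $(A_{12},A_1,A_2)$ satisfying (i)--(iii) and generate $2^{n(I(X;A_{12})+\epsilon)}$ codewords $a_{12}^n$ i.i.d.\ from $p(a_{12})$; for each such $a_{12}^n$ I would superpose conditionally i.i.d.\ codebooks for $A_1$ and $A_2$ at rates $I(X;A_1|A_{12})+\epsilon$ and $I(X;A_2|A_{12})+\epsilon$. The $A_{12}$-codebook is then partitioned via a product bin index $(b_1,b_2)\in[2^{nR_{b_1}}]\times[2^{nR_{b_2}}]$ with $R_{b_1}=I(X;A_{12}|Y_1)+\epsilon$ and $R_{b_2}=(I(X;A_{12}|Y_2)-I(X;A_{12}|Y_1))^{+}+\epsilon$, so that the total bin rate equals $\max\{I(X;A_{12}|Y_1),I(X;A_{12}|Y_2)\}+O(\epsilon)$. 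The $A_1$-codewords (given $A_{12}$) are binned at rate $I(X;A_1|A_{12},Y_1)+\epsilon$ and the $A_2$-codewords (given $A_{12}$) at rate $I(X;A_2|A_{12},Y_2)+\epsilon$. On input $x^n$, the encoder performs successive joint-typicality covering to select $(a_{12}^n,a_1^n,a_2^n)$ and transmits $M_1=(b_1(a_{12}^n),b(a_1^n))$ and $M_2=(b_2(a_{12}^n),b(a_2^n))$. This yields $R_1=I(X;A_{12},A_1|Y_1)+O(\epsilon)$ and $R_1+R_2=\max\{I(X;A_{12}|Y_1),I(X;A_{12}|Y_2)\}+I(X;A_1|A_{12},Y_1)+I(X;A_2|A_{12},Y_2)+O(\epsilon)$, matching~\eqref{Eqn:SSC-Ach-1}.

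Decoding proceeds as follows. Receiver~1 finds the unique $a_{12}^n$ inside axis-1 bin $b_1$ that is jointly typical with $y_1^n$ (reliable since $R_{b_1}\geq I(X;A_{12}|Y_1)$) and then Wyner-Ziv-decodes $a_1^n$ using $(a_{12}^n,y_1^n)$. Receiver~2 uses the combined index $(b_1,b_2)$, whose total rate meets $I(X;A_{12}|Y_2)$, to decode $a_{12}^n$ given $y_2^n$, and then decodes $a_2^n$ analogously. Applying $\phi_j$ componentwise yields $\hat{X}_j^n$ whose per-letter expected distortion is bounded by $D_j$ via the law of large numbers, typicality, and the hypothesis $D_j\geq\mathbb{E}\,\delta_j(X,\phi_j(A_j,Y_j))$. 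The cardinality bounds follow from the standard support (convex-cover) lemma, applied first to $A_{12}$ (preserving $p(x)$, the mutual informations $I(X;A_{12}|Y_j)$ for $j=1,2$, and the two distortion functionals) and then to $A_1$ and $A_2$ conditioned on the value of $A_{12}$.

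The main subtlety I anticipate is the correctness of the two-axis binning of the $A_{12}$-codebook: one must verify that independent uniform binning along each axis simultaneously permits axis-1-only Wyner-Ziv decoding at Receiver~1 at rate $I(X;A_{12}|Y_1)$ and joint-index decoding at Receiver~2 at rate $\max\{I(X;A_{12}|Y_1),I(X;A_{12}|Y_2)\}$, and that the degenerate case $I(X;A_{12}|Y_1)\geq I(X;A_{12}|Y_2)$, for which $R_{b_2}=0$, is handled correctly (Receiver~2 then decodes $A_{12}$ using only the axis-1 bin, which already suffices). Alternatively, as suggested in the statement, one can invoke the general multi-receiver achievability theorem from~\cite{Timo-Aug-2011-A} and specialise it to this two-receiver, two-stage setting; the remaining task is then to match the abstract rate region to~\eqref{Eqn:SSC-Ach-1}, which is essentially bookkeeping.
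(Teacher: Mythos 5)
Your proposal is correct, but it takes a genuinely different route from the paper. The paper's proof of this lemma is a one-line specialisation of the general seven-auxiliary achievability result (Theorem~\ref{Thm:TCG-Achievability}, quoted from~\cite{Timo-Aug-2011-A}): set $D_3 = \infty$, put $U_{123}=U_{13}=U_{23}=U_3=$ constant, $U_{12}=A_{12}$, $U_1=A_1$, $U_2=A_2$, and observe that the rate inequalities~\eqref{Eqn:TCG-Achievability-Rates} collapse to~\eqref{Eqn:SSC-Ach-1}. You instead build the coding scheme from scratch: a common $A_{12}$ layer with a product bin index split across $M_1$ and $M_2$, plus Wyner--Ziv refinement layers $A_1$ and $A_2$, and verify the bin-rate bookkeeping $R_{b_1}=I(X;A_{12}|Y_1)$, $R_{b_1}+R_{b_2}=\max\{I(X;A_{12}|Y_1),I(X;A_{12}|Y_2)\}$ matches~\eqref{Eqn:SSC-Ach-1}. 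Your two-axis binning concern resolves the standard way: independent uniform binning along each axis makes the joint index a uniform bin of rate $R_{b_1}+R_{b_2}$, so Receiver~1 needs only $R_{b_1}\geq I(X;A_{12}|Y_1)$ and Receiver~2 needs $R_{b_1}+R_{b_2}\geq I(X;A_{12}|Y_2)$, and the degenerate case $R_{b_2}=0$ is handled correctly. The paper's route is more economical and reuses an existing proved region, avoiding any re-derivation of covering/packing bounds; your route is self-contained and makes the operational structure (common layer rate-split across messages, private refinements) visible, which is pedagogically useful even though it re-proves standard facts. You also correctly note the paper's specialisation route as an alternative in your final paragraph, so you clearly see both; only the primary argument differs. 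One minor point: your explicit scheme yields the corner point of the region for a fixed choice of auxiliaries, so to cover all of $\set{R}^*_\text{in}(D_1,D_2)$ you should add the standard remark that the region is the union over auxiliaries together with the trivial enlargement by increasing $R_1$ and $R_2$, which is immediate from the inequality form of~\eqref{Eqn:SSC-Ach-1}.
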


The next and final result of the paper generalises Tian and Diggavi's result~\cite[Thm.~4]{Tian-Dec-2008-A}, which holds under the Markov chain in~\eqref{Eqn:ScalableSI}, to a conditionally less noisy setting. Suppose $\delta_1$ and $\delta_2$ are deterministic, with $\tilde{X}_1 = \psi_1(X)$ and $\tilde{X}_2 = \psi_2(X)$. It is said that $\delta_2$ is a \emph{degraded version} of $\delta_1$ if 
\begin{equation}
\psi_2 = \psi' \circ \psi_{1}
\end{equation}
for some deterministic map $\psi'$. The next theorem is proved in Appendix~\ref{Proof:Thm:SSC-Det-Deg-1}.

\begin{theorem}\label{Thm:SSC-Det-Deg-1}
Suppose that $\delta_1$ and $\delta_2$ are deterministic. 
\begin{enumerate}
\item[(i)] If $\delta_2$ is a degraded version of $\delta_1$,
\begin{equation}
H(\tilde{X}_2|Y_1) \leq H(\tilde{X}_2|Y_2) \quad \text{and} \quad \cln{Y_1}{Y_2}{\tilde{X}_2},
\end{equation}
then $\set{R}^*_\text{in}(0,0) = \set{R}(0,0)$ and the rate constraints of~\eqref{Eqn:SSC-Ach-1} simplify to 
\begin{subequations}\label{Eqn:Proof-SSC-Det-Deg-1} 
\begin{align}
\label{Eqn:Proof-SSC-Det-Deg-1a}
R_1 &\geq H(\tilde{X}_1|Y_1) \\
\label{Eqn:Proof-SSC-Det-Deg-1b}
R_1 + R_2 & \geq H(\tilde{X}_2|Y_2) + H(\tilde{X}_1 | \tilde{X}_2,Y_1).
\end{align}
\end{subequations}

\item[(ii)] If $\delta_1$ is a degraded version of $\delta_2$ and
\begin{equation}
H(\tilde{X}_1|Y_1) \leq H(\tilde{X}_1|Y_2)
\end{equation}
then $\set{R}^*_\text{in}(0,0) = \set{R}(0,0)$ and the rate constraints of~\eqref{Eqn:SSC-Ach-1} simplify to  
\begin{subequations}\label{Eqn:Proof-SSC-Det-Deg-2} 
\begin{align}
R_1 &\geq H(\tilde{X}_1|Y_1)\\
R_1 + R_2 &\geq H(\tilde{X}_2|Y_2).
\end{align}
\end{subequations}
\end{enumerate}
\end{theorem}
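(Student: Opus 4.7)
The plan is to prove both parts by picking specific auxiliary random variables in Lemma~\ref{Lem:SSC-Achievability-Gen} for achievability, and by combining Fano's inequality with the single-letterization Lemma~\ref{Lem:Conditional-Mathis-Lemma} for the converse (the latter being required for only one rate constraint, in part~(i)).

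For achievability of part~(i), I would set $A_{12} = A_2 = \tilde{X}_2$ and $A_1 = \tilde{X}_1$ in Lemma~\ref{Lem:SSC-Achievability-Gen}; since $\tilde{X}_2$ is a function of $\tilde{X}_1$ (degraded distortion), the Markov and reconstruction conditions hold trivially. The first rate constraint collapses to $R_1 \geq H(\tilde{X}_1|Y_1)$, while the hypothesis $H(\tilde{X}_2|Y_1) \leq H(\tilde{X}_2|Y_2)$ makes the max in the sum-rate constraint equal to $H(\tilde{X}_2|Y_2)$ and $I(X;A_2|A_{12},Y_2) = 0$, yielding~\eqref{Eqn:Proof-SSC-Det-Deg-1b}. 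For achievability of part~(ii) I would take $A_{12} = A_1 = \tilde{X}_1$ and $A_2 = \tilde{X}_2$; here the hypothesis $H(\tilde{X}_1|Y_1) \leq H(\tilde{X}_1|Y_2)$ resolves the max, $I(X;A_1|A_{12},Y_1)=0$, and $H(\tilde{X}_1|Y_2) + H(\tilde{X}_2|\tilde{X}_1,Y_2) = H(\tilde{X}_2|Y_2)$ because $\tilde{X}_1$ is a function of $\tilde{X}_2$.

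The converse for part~(ii) is entirely Fano-based: $nR_1 + n\epsilon \geq H(M_1) \geq I(M_1;\bm{\tilde{X}_1}|\bm{Y_1}) \geq nH(\tilde{X}_1|Y_1) - n\epsilon_n$ and $n(R_1+R_2) + n\epsilon \geq H(M_1,M_2) \geq I(M_1,M_2;\bm{\tilde{X}_2}|\bm{Y_2}) \geq nH(\tilde{X}_2|Y_2) - n\epsilon_n$, without invoking any less-noisy property. In part~(i), the same Fano argument handles the bound $R_1 \geq H(\tilde{X}_1|Y_1)$, so all effort goes into the sum-rate bound. I would start from
\begin{equation*}
n(R_1+R_2) + n\epsilon \geq H(M_1,M_2) \geq H(M_1,M_2 | \bm{Y_2}) \geq H(\bm{\tilde{X}_2}|\bm{Y_2}) + H(M_1 | \bm{\tilde{X}_2}, \bm{Y_2}) - H(\bm{\tilde{X}_2}|M_1,M_2,\bm{Y_2}),
\end{equation*}
where the last (Fano) term is $O(n\epsilon_n)$ by reconstructibility at Receiver~2. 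The plan is then to replace $H(M_1|\bm{\tilde{X}_2},\bm{Y_2})$ with $H(M_1|\bm{\tilde{X}_2},\bm{Y_1})$ and bound the latter from below by $nH(\tilde{X}_1|\tilde{X}_2,Y_1) - n\epsilon_n$ via Fano applied to Receiver~1 (using that $\bm{\tilde{X}_2}$ is a deterministic function of $\bm{\tilde{X}_1}$).

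The main obstacle, and the reason Lemma~\ref{Lem:Conditional-Mathis-Lemma} is essential, is justifying that side-information swap. I would invoke Lemma~\ref{Lem:Conditional-Mathis-Lemma} with $J = M_1$, $\bm{R}=\bm{X}$, $\bm{L}=\bm{\tilde{X}_2}$, $\bm{S_1}=\bm{Y_1}$ and $\bm{S_2}=\bm{Y_2}$; the Markov chain $M_1 \markov \bm{X} \markov (\bm{Y_1},\bm{Y_2})$ holds because $M_1 = f(\bm{X})$, and since $\bm{\tilde{X}_2}$ is a deterministic function of $\bm{X}$ the tightened form of the lemma applies. This produces an auxiliary $W$ with $W \markov X \markov (Y_1,Y_2)$ and $|\set{W}| \leq |\set{X}|$ such that
\begin{equation*}
I(M_1;\bm{Y_2}|\bm{\tilde{X}_2}) - I(M_1;\bm{Y_1}|\bm{\tilde{X}_2}) = n\bigl(I(W;Y_2|\tilde{X}_2) - I(W;Y_1|\tilde{X}_2)\bigr),
\end{equation*}
and by Definition~\ref{Def:ConditionallyLessNoisy} with $L=\tilde{X}_2$ the hypothesis $\cln{Y_1}{Y_2}{\tilde{X}_2}$ forces the right-hand side to be non-positive. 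Equivalently $H(M_1|\bm{\tilde{X}_2},\bm{Y_2}) \geq H(M_1|\bm{\tilde{X}_2},\bm{Y_1})$, which closes the chain and delivers~\eqref{Eqn:Proof-SSC-Det-Deg-1b}.
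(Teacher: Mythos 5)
Your proof is correct and follows essentially the same strategy as the paper's: the same choice of auxiliary variables for achievability (the paper uses $A_1=$ constant in part~(ii) where you use $A_1=\tilde{X}_1$, but both collapse the rate constraints identically), and the same converse structure for part~(i) of splitting off $H(\tilde{X}_2|Y_2)$, swapping the conditioning from $\bm{Y_2}$ to $\bm{Y_1}$, and controlling the resulting mutual-information difference via Lemma~\ref{Lem:Conditional-Mathis-Lemma} and the conditionally-less-noisy hypothesis. The only cosmetic difference is that you invoke Lemma~\ref{Lem:Conditional-Mathis-Lemma} with $J=M_1$ whereas the paper uses $J=(M_1,M_2)$; both satisfy the required Markov chain and yield the same bound.
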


%%%%
%%%%
%%%%
%%%%

\appendices

\section{Proof of Lemma~\ref{Lem:Conditional-Mathis-Lemma}}\label{App:Proof-Mathis-Lemmas}

\subsection{Preliminaries}
The proof will make use of the following telescoping identity. For any string of arbitrarily distributed random variables, $(A_1,B_1)$, $(A_2,B_2)$, $\ldots$, $(A_n,B_n)$, we have~\cite[Sec.~G]{Kramer-Dec-2011-A}
\begin{equation}\label{Eqn:Telescope}
\sum_{i=1}^n I(A_1^i;B_{i+1}^n) = \sum_{i=1}^n I(A_1^{i-1};B_i^n),
\end{equation}
with the notational conventions 
\begin{equation}
A_j^k \triangleq A_j,A_{j+1},\ldots,A_k\quad \text{and}\quad B_j^k \triangleq B_j,B_{j+1},\ldots,B_k
\end{equation}
for $1\leq j \leq k \leq n$ as well as
\begin{equation}
I(A_1^n;B_{n+1}^n)\triangleq 0\quad \text{ and } \quad I(A_1^{-1};B_0^n)\triangleq 0.
\end{equation}
These notations are used throughout the proof.

\subsection{Proof}

We first prove \eqref{Eqn:Mathis-Conditional-MI-Equality}. Notice that 
\begin{equation}\label{Eqn:Proof-Conditional-Mathis-Lemma-Step-1}
I(J;\bm{S}_2|\bm{L}) - I(J;\bm{S}_1|\bm{L}) 
= I(J;\bm{S}_2,\bm{L}) - I(J;\bm{S}_1,\bm{L}),
\end{equation}
by the chain rule for mutual information. Expand the first mutual information term $I(J;\bm{S_2},\bm{L})$ on the right hand side of~\eqref{Eqn:Proof-Conditional-Mathis-Lemma-Step-1} as follows:
\begin{align}
%
%
%\notag
I(J;\bm{S_2},\bm{L}) &\step{a}{=}  \sum_{i=1}^n I(J; S_{2,i},L_i | S_{2,1}^{i-1}, L_1^{i-1}) \\
%
%
%\notag
& \step{b}{=}  \sum_{i=1}^n  I(J, S_{2,1}^{i-1},L_1^{i-1}; S_{2,i},L_i)\\
%
%
%\notag
\notag
& \step{c}{=}  \sum_{i=1}^n  \Big( I(J, S_{1,i+1}^n, S_{2,1}^{i-1},L_1^{i-1}, L_{i+1}^n   ; S_{2,i},L_i)\\
&\hspace{15mm} - I(S_{1,i+1}^n,L_{i+1}^n; S_{2,i},L_i | J, S_{2,1}^{i-1},L_1^{i-1})\Big) \\
& \step{d}{=}  \sum_{i=1}^n  \Big( I(W_i; S_{2,i},L_i) -  I(S_{1,i+1}^n,L_{i+1}^n; S_{2,i},L_i|J, S_{2,1}^{i-1},L_1^{i-1})\Big)\label{Eqn:ConverseTelescope1}
\end{align}
where (a) and (c) follow from the chain rule for mutual information; (b) exploits the fact that the source is i.i.d. and therefore  
\begin{equation}
H(S_{2,i},L_i | S_{2,1}^{i-1},L_1^{i-1}) = H(S_{2,i},L_i);
\end{equation}
and, finally, in (d) we define and substitute the random variable
\begin{equation}\label{Eqn:Define-Wi}
W_i \triangleq (J,S_{1,i+1}^n,S_{2,1}^{i-1},L_1^{i-1},L_{i+1}^n).
\end{equation}
 
Expand the second mutual information term $I(J;\bm{S_1},\bm{L})$ on the right hand side of~\eqref{Eqn:Proof-Conditional-Mathis-Lemma-Step-1} using the telescoping identity~\eqref{Eqn:Telescope} as follows:
\begin{align}
%
%
%\notag
I(J;\bm{S_1},\bm{L}) &\step{a}{=} \sum_{i=1}^n \Big(I(J, S_{2,1}^{i-1},L_1^{i-1};S_{1,i}^n,L_i^n) - I(J,S_{2,1}^i,L_1^i;S_{1,i+1}^n,L_{i+1}^n)\Big)\\
%
%
%\notag
\notag
&\step{b}{=}  \sum_{i=1}^n \Big( I(J,S_{2,1}^{i-1},L_1^{i-1};S_{1,i},L_i|S_{1,i+1}^n,L_{i+1}^n)\\
&\hspace{15mm} - I(S_{2,i},L_i;S_{1,i+1}^n,L_{i+1}^n | J,S_{2,1}^{i-1},L_1^{i-1})\Big) \\
%
%
%\notag
\notag
&\step{c}{=} \sum_{i=1}^n \Big( I(J,S_{1,i+1}^n,S_{2,1}^{i-1},L_1^{i-1},L_{i+1}^n;S_{1,i},L_i) \\
&\hspace{15mm} - I(S_{2,i},L_i; S_{1,i+1}^n,L_{i+1}^n | J,S_{2,1}^{i-1},L_1^{i-1}) \Big)\\
\label{Eqn:ConverseTelescope2}
& \step{d}{=}\sum_{i=1}^n \Big( I(W_i;S_{1,i},L_i) 
- I(S_{2,i},L_i;S_{1,i+1}^n,L_{i+1}^n | J,S_{2,1}^{i-1},L_1^{i-1}) \Big),
\end{align} 
where (a) invokes the telescoping identity~\eqref{Eqn:Telescope} and the chain rule for mutual information; (b) again uses the chain rule for mutual information; (c) exploits the i.i.d. source and hence 
\begin{equation}
H(S_{1,i},L_i| S_{1,1}^{i-1},L_1^{i-1}) = H(S_{1,i},L_i);
\end{equation}
and, finally, in (d) we substitute $W_i \equiv (J,S_{1,i+1}^n,S_{2,1}^{i-1},L_1^{i-1},L_{i+1}^n)$.

Subtract~\eqref{Eqn:ConverseTelescope2} from~\eqref{Eqn:ConverseTelescope1} to obtain 
\begin{equation}\label{Eqn:Conversesum}
I(J;\bm{S_2},\bm{L}) - I(J;\bm{S_1},\bm{L})  =   \sum_{i=1}^n I(W_i;S_{2,i},L_i) -  I(W_i;S_{1,i},L_i). 
\end{equation}

We now \emph{single-letterize} the quantity on the right hand side of~\eqref{Eqn:Conversesum}. To this end, we introduce a time-sharing random variable: let $Q$ be uniform on $\{1,2,\ldots,n\}$ and independent of the tuple $(\bm{R},\bm{S_1},\bm{S_2},\bm{T},$ $\bm{L})$. Dividing~\eqref{Eqn:Conversesum} by $n$, we have 
\begin{align}
%
%\notag
\frac{1}{n} \Bigg( \sum_{i=1}^n & I(W_i;S_{2,i},L_i) -   I(W_i;S_{1,i},L_i)\Bigg)\\
%
%\notag
&\step{a}{=} 
\frac{1}{n} \sum_{i=1}^n \Big(I(W_i;S_{2,i},L_i|Q= i)-    I(W_i;S_{1,i},L_i|Q=i)\Big)\\
%
%\notag
&\step{b}{=} I(W_Q;S_{2,Q},L_Q|Q)-    I(W_Q;S_{1,Q},L_Q|Q) \\
%
%\notag
&\step{c}{=}I(W_Q,Q;S_{2,Q},L_Q) - I(W_Q, Q; S_{1,Q},L_Q) \\
\label{Eqn:ConverseTimeSharingStep1}
&\step{d}{=} I(\tilde{W};S_2,L)-I(\tilde{W};S_1,L),
\end{align} 
where in (a) we use that $Q$ is independent of $(S_{1,i},S_{2,i},L_i,W_i)$; in (b) that $Q$ is uniformly distributed; in (c) that $(\bm{S_1},\bm{S_2},\bm{L})$ is i.i.d. and independent of $Q$, and therefore 
\begin{equation}
H(S_{1,Q},L_Q|Q) = H(S_{1,Q},L_Q);
\end{equation}
and, finally, in (d) we define and substitute
\begin{equation}
\tilde{W} = (W_Q,Q),\ S_1=S_{1,Q},\ S_2 = S_{2,Q},\text{ and } L = L_Q.
\end{equation}

From~\eqref{Eqn:Conversesum} and \eqref{Eqn:ConverseTimeSharingStep1}, we have
\begin{equation}
I(J;\bm{S_2},\bm{L}) - I(J;\bm{S_1},\bm{L}) = n\big(I(\tilde{W};S_2,L) - I(\tilde{W};S_1,L) \big).
\end{equation}
We also notice that 
\begin{equation}\label{Eqn:Mathis-Markov-Chain}
W_i\ \markov (R_i,L_i)\ \markov (S_{1,i},S_{2,i},T_i),
\end{equation}
forms a Markov chain for all $i = 1,2,\ldots,n$. Each of the $n$ Markov chains in~\eqref{Eqn:Mathis-Markov-Chain} follows from the definition of $W_i$, the $n$-letter chain
\begin{equation}
J\ \markov (\bm{R},\bm{L})\ \markov (\bm{S_1},\bm{S_2},\bm{T}),
\end{equation}
and the fact that $(\bm{R}, \bm{S}_1, \bm{S_2},\bm{T},\bm{L})$ is i.i.d. Now define 
\begin{equation}
R=R_Q \quad \text{and}\quad T =T_Q.
\end{equation}
Using the independence of $Q$ from $(\bm{R}, \bm{T}, \bm{S_1}, \bm{S}_2,\bm{L})$, we have the desired Markov chain,
\begin{equation}\label{Eqn:Mathis-Markov-Chain-2}
\tilde{W}\ \markov (R,L)\ \markov (S_1,S_2,T).
\end{equation}

It remains to show that the auxiliary random variable $\tilde{W}$, whose alphabet cardinality is unbounded in $n$, can be replaced by some $W$ with an alphabet satisfying~\eqref{Eqn:Mathis-Conditional-Card}. We now prove the existence of such using the convex cover method of, for example, \cite[App.~C]{El-Gamal-2011-B}. 

For each and every $\tilde{w}$ in the support set of $\tilde{W}$, let $q_{\tilde{w}}$ denote the conditional distribution of $(R,S_1,S_2,$ $T,L)$ given $\tilde{W} = \tilde{w}$. Let $\set{P}$ denote the set of all joint distributions on $\set{R} \times \set{S}_1 \times \set{S}_2 \times \set{T} \times \set{L}$. 

For each and every pair $(r,l)$ in $\set{R} \times \set{L}$ but one --- the omitted pair, say $(r^*,l^*)$, can be chosen arbitrarily --- define the functional $g_{r,l} : \set{P} \longrightarrow [0,1]$,
\begin{equation}\label{Eqn:Mathis-Lemma-Grs1s2t}
g_{r,l}(q) \triangleq \sum_{s_1 \in \set{S}_1} \sum_{s_2 \in \set{S}_2} \sum_{t \in \set{T}} q(r,s_1,s_2,t,l).
\end{equation}
The $(|\set{R}| |\set{L}| - 1)$-functionals defined in~\eqref{Eqn:Mathis-Lemma-Grs1s2t} will be used to preserve the joint distribution of $(R,S_1,S_2,$ $T,L)$ when the Support Lemma~\cite[Sec. App.~C]{El-Gamal-2011-B} is invoked shortly. Indeed, we notice that for each such pair $(r,l)$ the expectation 
\begin{align}
\mathbb{E}_{\tilde{W}} \big\{g_{r,l}\big(q_{\tilde{W}}\big) \big\} \equiv \sum_{\tilde{w} \in \tilde{\set{W}}} \mathbb{P}[\tilde{W}=\tilde{w}]\ g_{r,l}(q_{\tilde{w}})
\end{align}
is equal to the true probability $\mathbb{P}[ (R,L) = (r,l)]$. Moreover, this agreement extends over $\set{R} \times \set{S}_1 \times \set{S}_2 \times \set{T} \times \set{L}$ because
\begin{equation}\label{Eqn:Proof:Cond-Mathis-Lemma-MC-1}
\mathbb{E}\big\{ g_{r,l}(q_{\tilde{W}}) \big\} \cdot \mathbb{P}\big[S_1 = s_1,S_2=s_2,T=t | R=r,L=l \big]
\end{equation}
is equal to the true joint probability $\mathbb{P}[R=r,S_1=s_1,S_2=s_2,T=t,L=l]$.

If the joint distribution of $(R,L,S_1,S_2,T)$ is preserved, we can additionally preserve the difference
\begin{equation}
I(\tilde{W};S_2,L) - I(\tilde{W};S_1,L)
\end{equation}
by simply preserving $H(S_2,L|\tilde{W}) - H(S_1,L|\tilde{W})$. To this end, define 
\begin{equation}
g(q) \triangleq H(\sf{S}_2,\sf{L}) - H(\sf{S}_1,\sf{L}),
\end{equation}
where the joint distribution\footnote{We use sans serif font to emphasise that this joint distribution differs to that of $(R,S_1,S_2,T,L)$.} of $(\sf{R},\sf{S}_1,\sf{S}_2,\sf{T},\sf{L})$ is understood to be given by $q$.
%\begin{equation}
%g(q) \triangleq \sum_{r,s_1,s_2,t,l} q(r,s_1,s_2,t,l) 
%\log 
%\frac{ \sum\limits_{r',s'_2,t'} q(r',s_1,s'_2,t',l)}{\sum\limits_{r',s'_1,t'} q(r',s'_1,s_2,t',l)}, 
%\end{equation}
We also notice that
\begin{align}
\mathbb{E}_{\tilde{W}} \big\{ g(q_{\tilde{W}}) \big\} &\equiv \sum_{\tilde{w} \in \tilde{\set{W}}} \mathbb{P}[\tilde{W} = \tilde{w}] g(q_{\tilde{w}})\\
&= H(S_2,L|\tilde{W}) - H(S_1,L|\tilde{W}).
\end{align}

The Support Lemma asserts that there exists an auxiliary random variable $W$ defined on an alphabet $\set{W}$ with cardinality 
\begin{equation*}
|\set{W}| \leq |\set{R}| |\set{L}| 
\end{equation*}
and a collection of (conditional) joint distributions $\{q_w\}$ from $\set{P}$, indexed by the elements $w$ of $\set{W}$, such that 
\begin{enumerate}
\item[(i)] for all $(r,l)$ in $\set{R} \times \set{L}$ --- excluding the omitted pair $(r^*,l^*)$ --- we have
\begin{align}
\label{Eqn:Mathis-Lemma-Card-Cond-Dist}
\mathbb{E}_{W} \big\{ g_{r,l} (q_{W}) \big\}
&= \mathbb{E}_{\tilde{W}} \big\{ g_{r,l} (q_{\tilde{W}}) \big\}, 
\end{align}
\item[(ii)] and
\begin{align}
\label{Eqn:Mathis-Lemma-Card-MI}
\mathbb{E}_{W} \big\{ g (q_{W}) \big\}
&=\mathbb{E}_{\tilde{W}} \big\{ g (q_{\tilde{W}}) \big\}.
\end{align}
\end{enumerate}

The new auxiliary random variable $W$ and the distributions $\{q_w\}$ induce a joint distribution on $\set{W} \times \set{R} \times \set{L}$. The equality~\eqref{Eqn:Mathis-Lemma-Card-Cond-Dist} ensures that the $(R,L)$-marginal of this new distribution is equal to the true distribution of $(R,L)$. This agreement extends to the full joint distribution via~\eqref{Eqn:Proof:Cond-Mathis-Lemma-MC-1}; i.e., we impose the Markov chain
\begin{equation}
W\ \markov (X,L)\ \markov (S_1,S_2,T).
\end{equation} 
Finally, the equalities~\eqref{Eqn:Mathis-Lemma-Card-Cond-Dist}  and~\eqref{Eqn:Mathis-Lemma-Card-MI} imply 
\begin{equation}
I(W;S_2,L) - I(W;S_1,L) = I(\tilde{W};S_2,L) - I(\tilde{W}; S_1,L).
\end{equation}
\hfill $\blacksquare$

\medskip

\begin{remark}
\begin{enumerate}
\item[]
\item[(i)] A consequence of the telescoping identity~\eqref{Eqn:Telescope} is the classic \emph{Csisz\'{a}r sum} identity~\cite[Sec.~2.4]{El-Gamal-2011-B},
\begin{equation}\label{Eqn:CsiszarSum}
\sum_{i=1}^n I(A_i;B_{i+1}^n|A_1^{i-1}) = \sum_{i=1}^n I(B_i;A_1^{i-1}|B_{i+1}^n).
\end{equation}
The proof of Lemma~\ref{Lem:Conditional-Mathis-Lemma} can be manipulated so as to replace the \emph{telescoping sum identity} step~\eqref{Eqn:ConverseTelescope2} with a \emph{Csisz\'{a}r sum identity} step. We feel that the telescoping approach gives a cleaner proof. 

\item[(ii)] We note that steps (a) and (b) of~\eqref{Eqn:ConverseTelescope2} are reminiscent of those used in Kramer's converse for the \emph{Gelfand-Pinsker problem} (coding for channels with state), see~\cite[Sec.~F]{Kramer-Dec-2011-A} or \cite[Sec.~6.6]{Kramer-2008-A}. It is not clear, as yet, whether there is a deeper relationship between the two problems.
\end{enumerate}
\end{remark}

%%%%
%%%%
%%%%
%%%%

\section{Proof of Cardinality Bound~\eqref{Eqn:HB-Achievability-Card-Bounds} of Lemma~\ref{Lem:HB}}\label{App:HBcardinality}
\renewcommand{\d}{\textnormal{d}}

Suppose that we have auxiliary random variables $(A,B,C)$ as well as functions $\phi_1$ and $\phi_2$ that satisfy the Markov chain~\eqref{Eqn:HB-Achievability-Markov-Chain} and the average distortion condition~\eqref{Eqn:distortions}, but not the cardinality bounds~\eqref{Eqn:HB-Achievability-Card-Bounds}; i.e., the alphabets $\set{A}$, $\set{B}$ and $\set{C}$ are finite but otherwise arbitrary. %We now show that $(A,B,C)$, $\phi_1$ and $\phi_2$ can always be replaced by some $(A',B',C')$, $\phi_1'$ and $\phi_2'$ with alphabets $\set{A}'$, $\set{B}'$ and $\set{C}'$ satisfying~\eqref{Eqn:HB-Achievability-Markov-Chain}, \eqref{Eqn:HB-Achievability-Card-Bounds} and~\eqref{Eqn:distortions}. 

Consider the variable $C$. For each and every $c$ in the support set of $C$, let $q_c$ denote the conditional distribution of $(A,B,X)$ given $C = c$. Let $\set{P}_1$ denote the set of all joint distributions on $\set{A} \times \set{B} \times \set{X}$. 

For each and every $x$ in $\set{X}$ but one, say $x^*$, define $g_x : \set{P}_1 \longrightarrow [0,1]$ by setting 
\begin{equation}
g_{x}(q)\triangleq \sum_{a \in \set{A}} \sum_{b \in \set{B}} q(a,b,x).
\end{equation}
We notice that, for all $x$ except $x^*$,
\begin{equation}\label{gx2}
\mathbb{E}_C\big\{g_x(q_C)\big\} = \mathbb{P}[X=x]
\end{equation}
gives the true marginal distribution of $X$. Now define the following functionals --- each mapping $\set{P}_1$ to $[0,\infty]$ --- by setting
\begin{align}
\label{Eqn:HB-card-bound-C-part-1}
g_1(q)&\triangleq I(\sf{X};\sf{B}|\sf{Y}_2)-H(\sf{X}|\sf{A},\sf{Y}_1)\\
\label{Eqn:HB-card-bound-C-part-2}
g_2(q)&\triangleq I(\sf{X};\sf{A}|\sf{Y}_1)-H(\sf{X}|\sf{B},\sf{Y}_2)\\
g_3(q)&\triangleq \sum_{a \in \set{A}} \sum_{y_1 \in \set{Y}_1}
\min\limits_{\hat{x}\in\hat{\set{X}}_1}
\sum_{b \in \set{B}} \sum_{x \in \set{X}} \sum_{y_2 \in \set{Y}_2} 
q(a,b,x)p(y_1,y_2|x)\delta_1(\hat{x},x)\\
g_4(q)&\triangleq \sum_{b \in \set{B}} \sum_{y_2 \in \set{Y}_2}
\min_{\hat{x}\in\hat{\set{X}}_2} 
\sum_{a \in \set{A}} \sum_{x \in \set{X}} \sum_{y_1 \in \set{Y}_1}
q(a,b,x)p(y_1,y_2|x)\delta_2(\hat{x},x),
\end{align}
where the joint distribution of $(\sf{A},\sf{B},\sf{X},\sf{Y}_1,\sf{Y}_2)$ in~\eqref{Eqn:HB-card-bound-C-part-1} and~\eqref{Eqn:HB-card-bound-C-part-2} is understood as follows: $(\sf{A},\sf{B},\sf{X})$ is distributed according to $q$ and $(\sf{Y}_1,\sf{Y}_2)$ conditionally depends on $X$ via the true side information channel (i.e., the conditional distribution $\mathbb{P}[Y_1=y_1,Y_2=y_2|X=x]$); in particular, we have imposed the Markov chain $(\sf{A},\sf{B}) \markov \sf{X} \markov (\sf{Y}_1,\sf{Y}_2)$. We also notice that 
\begin{align}
\mathbb{E}_C\big\{g_1(q_{C})\big\}
&= I(X;B|Y_2,C)-H(X|A,C,Y_1) \label{g1}\\
\mathbb{E}_C\big\{g_2(q_{C})\big\}
&=I(X;A|Y_1,C)-H(X|B,C,Y_2)\label{g2}\\
\mathbb{E}_C\big\{g_3(q_{C})\big\}
&=\!\!\min_{\phi_1:\set{A}\times\set{C}\times\set{Y}_1\to\hat{\set{X}_1}}\!\!\mathbb{E}\,\delta_1\big(X,\phi_1(A,C,Y_1)\big)\label{g3}\\
\mathbb{E}_C\big\{g_4(q_{C})\big\}
&=\!\!\min_{\phi_2:\set{B}\times\set{C}\times\set{Y}_2\to\hat{\set{X}_2}}\!\!\mathbb{E}\,\delta_2\big(X,\phi_2(B,C,Y_2)\big)\label{g4}.
\end{align}

The Support Lemma asserts that there exists a new auxiliary random variable $C^\dag$ defined on an alphabet $\set{C}^\dag$ with cardinality 
\begin{equation}
|\set{C}^\dag|\leq |\set{X}|+3
\end{equation}
together with a collection of $|\set{C}^\dag|$ distributions $\{q^\dag_{c}\}$ from $\set{P}_1$ --- indexed by the elements $c$ of $\set{C}^\dag$ --- such that 
\begin{align}
\label{gx}
\mathbb{E}_C\big\{g_x(q_C)\big\}&=\mathbb{E}_{C^\dag}\big\{g_x(q^\dag_{C^\dag})\big\}, \quad \forall x \in \set{X}\ \text{except}\ x^*
\end{align}
and
\begin{equation}\label{rg}
\mathbb{E}_C\big\{g_j(q_{C})\big\} = \mathbb{E}_{C^\dag}\big\{g_j(q^\dag_{C^\dag})\big\},\quad \forall j = 1,2,3,4.
\end{equation}

The new variable $C^\dag$, the distributions $\{q_c^\dag\}$, and the true side information channel come together via the Markov chain 
\begin{equation}\label{Eqn:HB-Card-Proof-Step1-Markov}
(A^\dag,B^\dag,C^\dag) \markov X^\dag \markov (Y_1^\dag,Y_2^\dag)
\end{equation}
to specify a tuple $(A^\dag,B^\dag,C^\dag,X^\dag,Y_1^\dag,Y_2^\dag)$ on $\set{A} \times \set{B} \times \set{C}^\dag \times \set{X} \times \set{Y}_1 \times \set{Y}_2$. The equality~\eqref{gx} ensures that $(X^\dag,Y^\dag_1,Y^\dag_2)$ and $(X,Y_1,Y_2)$ have the same distribution, which also implies
\begin{equation}\label{rrg1}
H(X^\dag|Y^\dag_1) = H(X|Y_1) \quad \text{and}\quad H(X^\dag|Y^\dag_2) = H(X|Y_2).
\end{equation}
Similarly,~\eqref{rg} ensures  
\begin{subequations}\label{rrg}
\begin{align}
I(X^\dag;B^\dag|Y^\dag_2,C^\dag)-H(X^\dag|B^\dag,C^\dag,Y^\dag_1) &= I(X;B|Y_2,C)-H(X|B,C,Y_1) \label{g1}\\
I(X^\dag;A^\dag|Y^\dag_1,C^\dag)-H(X^\dag|A^\dag,C^\dag,Y^\dag_2)
&=I(X;A|Y_1,C)-H(X|A,C,Y_2)\label{g2};
\end{align}
\end{subequations}
and
\begin{subequations}\label{ddg}
\begin{align}
\min_{\phi_1^\dag:\set{A}\times\set{C}^\dag\times\set{Y}_1\to\hat{\set{X}_1}}\!\!\mathbb{E}\,\delta_1\big(X^\dag,\phi_1^\dag(A^\dag,C^\dag,Y^\dag_1)\big)
&= 
\min_{\phi_1:\set{A}\times\set{C}\times\set{Y}_1\to\hat{\set{X}_1}}
\mathbb{E}\,\delta_1\big(X,\phi_1(A,C,Y_1)\big)\label{g3}\\
\min_{\phi_2^\dag:\set{B}\times\set{C}^\dag\times\set{Y}_2\to\hat{\set{X}_2}} 
\mathbb{E}\,\delta_2\big(X^\dag,\phi_2^\dag(B^\dag,C^\dag,Y^\dag_2)\big)
&=
\min_{\phi_2:\set{B}\times\set{C}\times\set{Y}_2\to\hat{\set{X}_2}}
\mathbb{E}\,\delta_2\big(X,\phi_2(B,C,Y_2)\big)\label{g4}.
\end{align}
\end{subequations}
Finally, the equalities \eqref{rrg1} and~\eqref{rrg} together give 
\begin{multline}\label{Eqn:card-bound-sum-rate-1}
\max_{j=1,2} I(X^\dag;C^\dag|Y^\dag_j) + I(X^\dag;A^\dag|C^\dag,Y^\dag_1) + I(X^\dag;B^\dag|C^\dag,Y^\dag_2)\\ 
=
\max_{j=1,2} I(X;C|Y_j) + I(X;A|C,Y_1) + I(X;B|C,Y_2).
\end{multline}

Consider the tuple $(A^\dag,B^\dag,C^\dag,X^\dag,Y_1^\dag,Y_2^\dag)$. We have the Markov chain~\eqref{Eqn:HB-Card-Proof-Step1-Markov} by construction, and we notice that $A^\dag$ and $B^\dag$ always appear separately in~\eqref{rrg} and \eqref{ddg}. We may therefore replace the joint distribution of $(A^\dag,B^\dag,C^\dag,X^\dag,Y^\dag_1,Y^\dag_2)$ with another that shares the same Markov chain~\eqref{Eqn:HB-Card-Proof-Step1-Markov} and marginals $(A^\dag,C^\dag,X^\dag)$, $(B^\dag,C^\dag,X^\dag)$ and $(X^\dag,Y^\dag_1,Y^\dag_2)$, but imposes the new chain
\begin{equation}\label{g-Markov}
A^\dag\ \markov (C^\dag,X^\dag)\ \markov B^\dag.
\end{equation}
Or put another way, the Markov chain~\eqref{g-Markov} does not alter the left hand sides of~\eqref{rrg} or \eqref{ddg}. The chain~\eqref{g-Markov} will be important in the sequel because it allows the cardinalities of $\set{A}$ and $\set{B}$ to be bound independently. With a slight abuse of notation, we retain the same notation $(A^\dag,B^\dag,C^\dag,X^\dag,Y^\dag_1,Y^\dag_2)$ for this new distribution.  

Consider the variable $A^\dag$. For each and every $a$ in the support set of $A^\dag$, let $q_a$ denote the conditional distribution of $(C^\dag,X^\dag)$ given $A^\dag = a$. Let $\set{P}_2$ denote the set of all joint distributions on $\set{C}^\dag \times \set{X}$. For each and every $(c,x)$ in $\set{C}^\dag \times \set{X}$ but one, define $g_{c,x} : \set{P}_2 \longrightarrow [0,1]$ by setting
\begin{equation}
g_{c,x}(q) \triangleq  q(c,x).
\end{equation}
Here $\mathbb{E}_{A^\dag} \big\{g_{c,x}(q_{A^\dag})\big\} = \mathbb{P}[(C^\dag,X^\dag) = (c,x)]$ returns the desired probability for all $(c,x)$ in $\set{C}^\dag \times \set{X}$ but one. In addition, define
\begin{equation}
g_5(q)\triangleq H(\sf{X}|\sf{C},\sf{Y}_1)
\end{equation}
and
\begin{equation}
g_6(q)\triangleq \sum_{c \in \set{C}^\dag} \sum_{y_1 \in \set{Y}_1} 
\min\limits_{\hat{x}\in\hat{\set{X}}_1}
\sum_{x \in \set{X}}\sum_{y_2 \in \set{Y}_2}q(c,x)p(y_1,y_2|x)\delta_1(\hat{x},x),
\end{equation}
where the joint distribution of $(\sf{C},\sf{X},\sf{Y}_1,\sf{Y}_2)$ is understood as follows: $(\sf{C},\sf{X})$ is distributed according to $q$, and $(\sf{Y}_1,\sf{Y}_2)$ conditionally depends on $\sf{X}$ via the true side information channel. We have
\begin{equation}
\mathbb{E}_{A^\dag}\big\{g_5(q_{A^\dag})\big\} = H(X^\dag | A^\dag,C^\dag,Y_1^\dag).
\end{equation}
and 
\begin{equation}
\mathbb{E}_{A^\dag}\big\{g_6(q_{A^\dag})\big\} = \min_{\phi_1^\dag : \set{A} \times \set{C}^\dag \times \set{Y}_1 \to \hat{\set{X}}_1} \mathbb{E}\delta_1\big(X,\phi_1^\dag(A^\dag,C^\dag,Y_1^\dag)\big).
\end{equation}

The Support Lemma asserts that there exists a random variable $A^\ddag$ defined on an alphabet $\set{A}^\ddag$ with cardinality 
\begin{equation}
|\set{A}^\ddag| \leq |\set{C}^\dag||\set{X}| + 1
\end{equation}
together with a collection of $|\set{A}^\ddag|$ distributions $\{q_a^\ddag\}$ from $\set{P}_2$ --- indexed by the elements $a$ of $\set{A}^\ddag$ --- such that 
\begin{equation}\label{Eqn:HB-Card-Bounds-A}
\mathbb{E}_{A^\ddag} \big\{ g_{c,x}(q_{A^\ddag}) \big\} = \mathbb{E}_{A^\dag} \big\{ g_{c,x}(q_{A^\dag}) \big\} 
\end{equation}
and 
\begin{equation}\label{Eqn:HB-Card-Bounds-A1}
\mathbb{E}_{A^\ddag} \big\{g_j(q_{A^\ddag}) \big\} = \mathbb{E}_{A^\dag} \big\{g_j(q_{A^\dag})\big\},\quad j = 5,6. 
\end{equation}
The new variable $A^\ddag$, the distributions $\{q_a^\ddag\}$, the true side information channel, the conditional distribution $P(B^\dag|X^\dag, C^\dag)$,  and the Markov chains~\eqref{Eqn:HB-Card-Proof-Step1-Markov} and~\eqref{g-Markov} come together to specify a tuple $(A^\ddag,B^\ddag,C^\ddag,X^\ddag,Y_1^\ddag,Y_2^\ddag)$ on $\set{A}^\ddag \times \set{B} \times \set{C}^\dag \times \set{X} \times \set{Y}_1 \times \set{Y}_2$.

The equalities in~\eqref{Eqn:HB-Card-Bounds-A} ensure that $(C^\ddag,X^\ddag)$ and $(C^\dag,X^\dag)$ have the same distribution. By construction, we also have that $(B^\ddag,C^\ddag,X^\ddag,Y_1^\ddag,Y_2^\ddag)$ and $(B^\dag,C^\dag,X^\dag,Y_1^\dag,Y_2^\dag)$ have the same distribution, and therefore
\begin{multline}\label{Eqn:HB-card-bound-step-2-rates}
\max\Big\{ I(X^\ddag;C^\ddag|Y^\ddag_1), I(X^\ddag;C^\ddag|Y^\ddag_2) \Big\} + H(X^\ddag|C^\ddag,Y^\ddag_1) + I(X^\ddag;B^\ddag|C^\ddag,Y^\ddag_2) \\
= \max\Big\{ I(X^\dag;C^\dag|Y^\dag_1), I(X^\dag;C^\dag|Y^\dag_2) \Big\} + H(X^\dag|C^\dag,Y^\dag_1) + I(X^\dag;B^\dag|C^\dag,Y^\dag_2). 
\end{multline} 
In addition, \eqref{Eqn:HB-Card-Bounds-A1} ensures that 
\begin{equation}\label{Eqn:HB-card-bound-step-2-rates2}
H(X^\ddag|A^\ddag,C^\ddag,Y^\ddag_1) = H(X^\dag|A^\dag,C^\dag,Y^\dag_1)
\end{equation}
and 
\begin{equation}\label{Eqn:HB-card-bound-step-2-dist1}
\min_{\phi_1^\ddag:\set{A}^\ddag\times\set{C}^\dag\times\set{Y}_1\to\hat{\set{X}_1}}\!\!\mathbb{E}\,\delta_1\big(X^\ddag,\phi_1^\ddag(A^\ddag,C^\ddag,Y^\ddag_1)\big)
= 
\min_{\phi^\dag_1:\set{A}\times\set{C}^\dag\times\set{Y}_1\to\hat{\set{X}_1}}
\mathbb{E}\,\delta_1\big(X^\dag,\phi_1(A^\dag,C^\dag,Y^\dag_1)\big).
\end{equation}
Combining~\eqref{Eqn:card-bound-sum-rate-1}, \eqref{ddg}, \eqref{Eqn:HB-card-bound-step-2-rates}, \eqref{Eqn:HB-card-bound-step-2-rates2} and~\eqref{Eqn:HB-card-bound-step-2-dist1} gives 
\begin{multline}\label{Eqn:card-bound-cond1}
\max\Big\{ I(X^\ddag;C^\ddag|Y^\ddag_1), I(X^\ddag;C^\ddag|Y^\ddag_2) \Big\} + I(X^\ddag;A^\ddag|C^\ddag,Y^\ddag_1) + I(X^\ddag;B^\ddag|C^\ddag,Y^\ddag_2) \\
= \max\Big\{ I(X;C|Y_1), I(X;C|Y_2) \Big\} + I(X;A|C,Y_1) + I(X;B|C,Y_2). 
\end{multline}
and
\begin{subequations}\label{Eqn:card-bound-cond2}
\begin{align}
\min_{\phi_1^\ddag:\set{A}^\ddag\times\set{C}^\dag\times\set{Y}_1\to\hat{\set{X}_1}}\!\!\mathbb{E}\,\delta_1\big(X^\ddag,\phi_1^\ddag(A^\ddag,C^\ddag,Y^\ddag_1)\big)
&= 
\min_{\phi_1:\set{A}\times\set{C}\times\set{Y}_1\to\hat{\set{X}_1}}
\mathbb{E}\,\delta_1\big(X,\phi_1(A,C,Y_1)\big)\label{g3}\\
\min_{\phi_2^\ddag:\set{B}\times\set{C}^\dag\times\set{Y}_2\to\hat{\set{X}_2}} 
\mathbb{E}\,\delta_2\big(X^\ddag,\phi_2^\ddag(B^\ddag,C^\ddag,Y^\ddag_2)\big)
&=
\min_{\phi_2:\set{B}\times\set{C}\times\set{Y}_2\to\hat{\set{X}_2}}
\mathbb{E}\,\delta_2\big(X,\phi_2(B,C,Y_2)\big)\label{g4},
\end{align}
\end{subequations}
as desired.

Using analogous arguments as above, we can find a random vector $(A', B', C', X', Y_1', Y_2')$ over $\set{A}^\ddag \times \set{B}' \times \set{C}^\dag \times \set{X} \times \set{Y}_1 \times \set{Y}_2$, where the cardinality of the alphabet $\set{B}'$ satisfies \begin{equation}
|\set{B}'| \leq |\set{C}^{\dag}| |\set{X}| +1,
\end{equation}
and such that \eqref{Eqn:card-bound-cond1} and \eqref{Eqn:card-bound-cond2} are satisfied when the tuple $(A^\ddag, B^\ddag, C^\ddag, X^\ddag, Y_1^\ddag, Y_2^\ddag)$ is replaced by the new tuple $(A', B', C', X', Y_1', Y_2')$. This concludes the proof of the cardinality bounds.\hfill $\blacksquare$

%%%%
%%%%
%%%%
%%%%

\section{Proof of Lemma~\ref{Lem:LessNoisySpecialCases}}\label{App:LessNoisySpecialCases}

\subsection{Assertion (i)}
Consider any auxiliary random variable $W$ for which 
\begin{equation}\label{Eqn:LessNoisySpecialCases-MC}
W \markov (X,L) \markov (Y_1,Y_2)
\end{equation}
is a Markov chain. We have
\begin{align}
I(W;Y_2|L) &= H(W|L) - H(W|L,Y_2)\\
&\step{a}{=} H(W|L) - H(W|L,Y_2,Y_1)\\
&\geq H(W|L) - H(W|L,Y_1)\\
&= I(W;Y_1|L),
\end{align}
where (a) uses the fact that 
\begin{equation}
W\ \markov (Y_2,L)\ \markov Y_1,
\end{equation}
which follows from~\eqref{Eqn:LessNoisySpecialCases-MC}, the Markov chain~\eqref{Eqn:LessNoisySpecialCases-MCinLemma1}, and the fact that the side information is physically degraded. \hfill $\blacksquare$

\subsection{Assertion (ii)}
Take any auxiliary random variable $W$ for which 
\begin{equation}
W\ \markov (X_1,X_2)\ \markov (Y_1,Y_2).
\end{equation}
Consider Definition~\ref{Def:ConditionallyLessNoisy} with $L = X_1$. We have
\begin{align}
%
%\notag
0 & \leq I(W;Y_1|X_1)\\
%
%\notag
& = H(Y_1|X_1) - H(Y_1|W,X_1)\\
%
%\notag
& \step{a}{=} H(Y_1|X_1,X_2) - H(Y_1|W,X_1)\\
%
%\notag
& \step{b}{=} H(Y_1|X_1,X_2) - H(Y_1|W,X_1,X_2)\\
%
%\notag
& = I(W;Y_1|X_1,X_2)\\
& \step{c}{=} 0,
\end{align}
where the indicated steps apply the following Markov chains:
\begin{equation}
\begin{array}{rc}
\text{(a\hspace{0.2mm})} &  X_2 \markov X_1 \markov Y_1 \\
\text{(b)} &  X_2 \markov (W,X_1) \markov Y_1\\
\text{(c\hspace{0.2mm})} & W \markov (X_1,X_2) \markov (Y_1,Y_2).
\end{array}
\end{equation}
Thus, we have that 
\begin{equation}
I(W;Y_1|X_1)=0
\end{equation}
and therefore $I(W;Y_1|X_1)$ is no larger than $I(W;Y_2|X_1)$. \hfill $\blacksquare$

%%%%
%%%%
%%%%
%%%%

%%%%
%%%%
%%%%
%%%%

\section{Proof of Lemma~\ref{Lem:Converse}}\label{Sec:ConverseProof}

Let 
\begin{equation}\label{Eqn:tildeYProbError}
P_{\text{e},i} \triangleq \mathbb{P}\big[\hat{X}_{1,i} \neq \tilde{X}_i\big]
\end{equation}
denote the probability that the $i$-th symbol $\tilde{X}_i \equiv \psi(X_i)$ is reconstructed in error at Receiver~1. The probability $P_{\text{e},i}$ can also be expressed as $P_{\text{e},i} = \mathbb{E} \delta_1(X_i,\hat{X}_{1,i})$ and, therefore, we have
\begin{equation}\label{Eqn:tildeYAvgProbError}
\frac{1}{n} \sum_{i=1}^n P_{\text{e},i} \leq \epsilon
\end{equation}
 from the definition of achievability. Consider the conditional entropy $H(\bm{\tilde{X}}|M,\bm{Y_1})$. Starting from the fact that $\bm{\hat{X}_1}$ is determined by $(M,\bm{Y_1})$, we have
\begin{align}
%
%
%\notag
H(\bm{\tilde{X}}|M,\bm{Y_1}) 
&\step{a}{=} H(\bm{\tilde{X}}|M,\bm{Y_1},\bm{\hat{X}_1})\\
%
%
%\notag
&\leq H(\bm{\tilde{X}}|\bm{\hat{X}_1})\\
%
%
%\notag
&\step{b}{\leq} \sum_{i=1}^n H(\tilde{X}_i|\hat{X}_{1,i})\\
%
%\notag
&\step{c}{\leq} \sum_{i=1}^n \big( h(P_{\text{e},i}) + P_{\text{e},i} \log | \tilde{\set{X}} | \big)\\
%
%\notag
&\step{d}{\leq} h\left(\sum_{i=1}^n P_{\text{e},i}\right) + \left( \sum_{i=1}^n P_{\text{e},i} \right) \log | \tilde{\set{X}} | \\
%
%\notag
&\step{e}{\leq} n h(\epsilon) + n \epsilon \log |\tilde{\set{X}} |\\
\label{Eqn:tildeYFano}
&\step{f}{=} n\varepsilon(n,\epsilon),
\end{align}
where (a) applies the Markov chain
\begin{equation}
\bm{\tilde{X}} \markov (M,\bm{Y_1}) \markov \bm{\hat{X}_1};
\end{equation}
(b) invokes the chain rule for entropy and the fact that conditioning cannot increase entropy; (c) applies Fano's inequality; (d) combines the concavity of the binary entropy function with Jensen's inequality; (e)  invokes~\eqref{Eqn:tildeYAvgProbError}; and (f) substitutes 
\begin{equation}
\varepsilon(n,\epsilon) \triangleq h(\epsilon) + \epsilon \log |\tilde{\set{X}}|.
\end{equation}
Finally, we notice that $\varepsilon(n,\epsilon) \rightarrow 0$ as $\epsilon \rightarrow 0$.

Now consider the rate condition~\eqref{Eqn:Achievable-Rate}. We have
\begin{align}
%
%\notag
R + \epsilon &\geq \frac{1}{n} \log_2 |\set{M}|\\
%
%\notag
&\geq \frac{1}{n} H(M)\\
%
%\notag
&\geq \frac{1}{n} H(M|\bm{Y_1})\\
%
%\notag
&\geq \frac{1}{n} I(\bm{X},\bm{\tilde{X}};M|\bm{Y_1})\\
%
%\notag
&= \frac{1}{n} \Big(I(\bm{\tilde{X}};M|\bm{Y_1}) + I(\bm{X};M|\bm{\tilde{X}},\bm{Y_1}) \Big)\\
%
%\notag
&\step{a}{\geq} \frac{1}{n}\Big(H(\bm{\tilde{X}}|\bm{Y_1}) - n\varepsilon(n,\epsilon) 
+ I(\bm{X};M|\bm{\tilde{X}},\bm{Y_1}) \Big)\\
\label{Eqn:ConverseStep1}
&\step{b}{=} H(\tilde{X}|Y_1) - \varepsilon(n,\epsilon) + \frac{1}{n}I(\bm{X};M| \bm{\tilde{X}},\bm{Y_1}),
\end{align}
where (a) substitutes~\eqref{Eqn:tildeYFano} and (b) invokes the fact that $(\bm{X},\bm{\tilde{X}},\bm{Y_1})$ is i.i.d.

Consider the conditional mutual information term on the right hand side of~\eqref{Eqn:ConverseStep1}. Rearranging this term, with the intent of conditioning on $(\bm{\tilde{X}},\bm{Y_2})$ instead of $(\bm{\tilde{X}},\bm{Y_1})$, we obtain
\begin{align}
\notag
I(\bm{X} ; M | \bm{\tilde{X}},\bm{Y_1})
%
%
%\notag
&\step{a}{=} I(\bm{X};M|\bm{\tilde{X}}, \bm{Y_2})- H(M|\bm{\tilde{X}}, \bm{Y_2}) +H(M|\bm{\tilde{X}}, \bm{Y_1}) \\
\label{Eqn:ConverseStep2}
&= I(\bm{X};M|\bm{\tilde{X}},\bm{Y_2}) + I(M;\bm{Y_2}|\bm{\tilde{X}}) - I(M;\bm{Y_1}|\bm{\tilde{X}})
\end{align}
where (a) invokes that $M$ is a function of $\bm{X}$ or, in the more general case of stochastic encoders, that 
\begin{equation}
M \markov \bm{X} \markov (\bm{\tilde{X}},\bm{Y_1},\bm{Y_2}).
\end{equation}

Consider the first conditional mutual information on the right hand side of~\eqref{Eqn:ConverseStep2}. 
Expand this term using the method of Wyner and Ziv~\cite[Eqn.~(52)]{Wyner-Jan-1976-A} as follows:

\begin{align}
%
%
%\notag
I(\bm{X};M|\bm{\tilde{X}},\bm{Y_2}) &= \sum_{i=1}^n I(X_i ; M | \bm{\tilde{X}}, \bm{Y_2}, X_1^{i-1}) \\
%
%
%\notag
&\step{a}{=} \sum_{i=1}^n I(X_i ; M, \tilde{X}_1^{i-1}, \tilde{X}_{i+1}^n, Y_{2,1}^{i-1}, Y_{2,i+1}^n, X_1^{i-1} | \tilde{X}_i, Y_{2,i})\\
%
%
%\notag
&\geq \sum_{i=1}^n I(X_i ; M, Y_{2,1}^{i-1}, Y_{2,i+1}^n | \tilde{X}_i, Y_{2,i})\\
\label{Eqn:ConverseStep3}
&\step{b}{=} \sum_{i=1}^n I(X_i ; B_i | \tilde{X}_i, Y_{2,i}), 
\end{align}
where (a) follows because $(\bm{X},\bm{Y_2},\bm{\tilde{X}})$ i.i.d. and therefore 
\begin{equation}
H(X_i | \bm{\tilde{X}}, \bm{Y_2}, X_1^{i-1}) = H(X_i | \tilde{X}_i, Y_{2,i}), 
\end{equation}
and in (b) we define
\begin{equation}
B_i \triangleq (M,Y_{2,1}^{i-1},Y_{2,i+1}^n). 
\end{equation}

Continuing on from~\eqref{Eqn:ConverseStep3}, we have
\begin{align}
%
%
%\notag
\frac{1}{n} I(\bm{X} ; M | \bm{\tilde{X}}, \bm{Y_2}) 
&\geq \frac{1}{n} \sum_{i=1}^n I(X_i ; B_i | \tilde{X}_i, {Y_{2,i}}) \\
%
%
%\notag
&\step{a}{\geq} 
\frac{1}{n} \sum_{i=1}^n S\big( \mathbb{E} \delta_2(X_i , \hat{X}_{2,i}) \big)\\
%
%
%\notag
&\step{b}{\geq} 
S\left(\mathbb{E} \frac{1}{n} \sum_{i=1}^n \delta_2(X_i,\hat{X}_{2,i}) \right)\\
\label{Eqn:ConverseStep4}
&\step{c}{\geq} S(D_2 + \epsilon), 
\end{align}
where 
\begin{itemize}
\item[(a)] follows from the definition of $S(D_2)$ upon noticing that the $i$-th reconstructed symbol, {$\hat{X}_{2,i}$}, can be expressed as a deterministic function of $(B_i,Y_{2,i})$ and
\begin{equation}
B_i \markov X_i \markov (Y_{1,i},Y_{2,i});
\end{equation} 
\item[(b)] combines the convexity of $S(D_2)$ in $D_2$ with Jensen's inequality; and 
\item[(c)] $S(D_2)$ is non-increasing in $D_2$ and
\begin{equation}
D_2 + \epsilon \geq \mathbb{E} \frac{1}{n} \sum_{i=1}^n \delta_2(X_i,\hat{X}_{2,i}). 
\end{equation}
\end{itemize}

Consider~\eqref{Eqn:ConverseStep1}, \eqref{Eqn:ConverseStep2} and~\eqref{Eqn:ConverseStep4}. We have 
\begin{equation}
R + \epsilon \geq H(\tilde{X} | Y_1) - \varepsilon(n,\epsilon) + S(D_2 + \epsilon) + \frac{1}{n} \Big( I(M ; \bm{Y_2} | \bm{\tilde{X}} ) - I(M ; \bm{Y_1} | \bm{\tilde{X}} ) \Big).
\end{equation}

We now apply Lemma~\ref{Lem:Conditional-Mathis-Lemma} with 
\begin{equation}
R=X, \ S_1=Y_1,\ S_2=Y_2,\ T=\emptyset,\ L = \tilde{X}\ \text{and}\ J = M.
\end{equation}
There exists $W$, jointly distributed with $(X,$ $Y_1,Y_2,\tilde{X})$, such that 
\begin{equation}
W\ \markov X\ \markov (Y_1,Y_2), 
\end{equation}
$|\set{W}| \leq |\set{X}|$, and 
\begin{equation}
R + \epsilon \geq H(\tilde{X} | Y_1) - \varepsilon(n,\epsilon) + S(D_2 + \epsilon) + I(W ; Y_2 | \tilde{X} ) - I(W ; Y_1 | \tilde{X} ).
\end{equation}
The converse proof is completed by letting $\epsilon \rightarrow 0$ and invoking the
continuity of $S(D_2)$ in $D_2$. \hfill $\blacksquare$

%%%%
%%%%
%%%%
%%%%

\section{Proof of Corollary~\ref{Cor:DeterministicDegradedSI}}\label{App:DeterministicDegradedSI}

Choose $C = \tilde{X}$ in Theorem~\ref{Thm:DegradedSI} and apply the definition of $S(D_2)$ to obtain 
\begin{equation}
R(0,D_2) \leq H(\tilde{X}|Y_1) + S(D_2).
\end{equation}
The reverse inequality can be proved using a short converse; specifically, we have
\begin{align}
%
%\notag
H(M) &\geq I(\bm{X},\bm{\tilde{X}},\bm{Y_1},\bm{Y_2};M) \\[3pt]
%
%\notag
&\geq I(\bm{\tilde{X}};M |\bm{Y_1})  + I(\bm{X};M | \bm{\tilde{X}},\bm{Y_1},\bm{Y_2}) \\[3pt]
%
%\notag
&\step{a}{=} H(\bm{\tilde{X}} | \bm{Y_1}) - H(\bm{\tilde{X}} | M, \bm{Y_1})  + I(\bm{X};M | \bm{\tilde{X}},\bm{Y_2}) \\[1pt]
&\step{b}{\geq}  n\Big(H(\tilde{X}|Y_1) - \varepsilon(n,\epsilon) + S(D_2 + \epsilon)\Big),
\end{align}
where (a) applies $M \markov (\bm{\tilde{X}},\bm{Y_2}) \markov \bm{Y_1}$ and (b) repeats the steps in~\eqref{Eqn:tildeYFano}, \eqref{Eqn:ConverseStep4}, where $\varepsilon(n,\epsilon)$ can be chosen so that $\varepsilon(n,\epsilon) \rightarrow 0$ as $\epsilon \rightarrow 0$. \hfill $\blacksquare$

%%%%
%%%%
%%%%
%%%%

\section{Proof of Lemmas~\ref{Lem:SR-Achievability-Gen} and~\ref{Lem:SSC-Achievability-Gen}}\label{App:SR-SSC-Achievability-Gen}

Lemmas~\ref{Lem:SR-Achievability-Gen} and~\ref{Lem:SSC-Achievability-Gen} are both special cases of the next theorem.

\begin{theorem}[Thm.~1,~\cite{Timo-Aug-2011-A}]\label{Thm:TCG-Achievability}
Let $(U_{123},U_{12},U_{13},U_{23},U_1,U_2,U_3)$ be any tuple of auxiliary random variables, jointly distributed with the source $(X,Y_1,Y_2,Y_3)$, such that
\begin{enumerate}
\item[(i)] there is a Markov chain
\begin{equation}\label{Eqn:TCG-Achievability-Markov}
(Y_1,Y_2,Y_3)\ \markov X\ \markov (U_{123},U_{12},U_{13},U_{23},U_1,U_2,U_3);
\end{equation}
\item[(ii)] there exist three (deterministic) maps 
\begin{subequations}\label{Eqn:TCG-Achievability-Distortions}
\begin{equation}
\phi_j : \set{U}_j \times \set{Y}_j \longrightarrow \hat{\set{X}}_j,\quad j = 1,2,3, \\
\end{equation}
with
\begin{equation}
D_j \geq \mathbb{E}\ \delta_j\big(X,\phi_j(U_j,Y_j)\big).
\end{equation}
\end{subequations}
\end{enumerate}
Then, for each such tuple of auxiliary random variables, any rate tuple $(R_1,R_2,R_3)$ satisfying the following inequalities is achievable with distortions $(D_1,D_2,D_3)$: 
%
% the inequalities of~\eqref{Eqn:TCG-Achievability-Rates} is $(D_1,D_2,D_3)$-achievable.
%\begin{figure*}
\begin{subequations}\label{Eqn:TCG-Achievability-Rates}
% Rate inequalities for Theorem~\ref{Thm:TCG-Achievability}
% R1
\begin{align}
\notag
R_1 \geq\ &
I(X;U_{123}) - I(U_{123};Y_1) \\
\notag
& + I(X;U_{12}|U_{123}) - I(U_{12};Y_1|U_{123})\\
\notag
& + I(X,U_{12};U_{13}|U_{123}) - I(U_{13};U_{12}Y_1|U_{123})\\
& + I(X;U_1|U_{123},U_{12},U_{13}) - I(U_1;Y_1|U_{123},U_{12},U_{13})
\end{align}
%
% R1 + R2
\begin{align}
\notag
R_1 + R_2 \geq\ &
I(X;U_{123}) - \min\big\{I(U_{123};Y_1), I(U_{123};Y_2) \big\} \\
\notag
& + I(X;U_{12}|U_{123}) - \min\big\{I(U_{12};Y_1|U_{123}),I(U_{12};Y_2|U_{123})\big\} \\
\notag
& + I(X,U_{12};U_{13}|U_{123}) - I(U_{13};U_{12},Y_1|U_{123}) \\
\notag
& + I(X,U_{12},U_{13};U_{23}|U_{123}) - I(U_{23};U_{12},Y_2|U_{123}) \\
\notag
& + I(X;U_1|U_{123},U_{12},U_{13}) - I(U_1;Y_1|U_{123},U_{12},U_{13})\\
& + I(X;U_2|U_{123},U_{12},U_{23}) - I(U_2;Y_2|U_{123},U_{12},U_{23})
\end{align}
%
% R1 + R2 + R3
\begin{align}
\notag
R_1 + R_2 + R_3 \geq\ &
I(X;U_{123}) - \min\big\{I(U_{123};Y_1), I(U_{123};Y_2),I(U_{123};Y_3) \big\} \\
\notag
& + I(X;U_{12}|U_{123}) - \min\big\{I(U_{12};Y_1|U_{123}),I(U_{12};Y_2|U_{123})\big\} \\
\notag
& + I(X,U_{12};U_{13}|U_{123}) - \min\big\{ I(U_{13};U_{12},Y_1|U_{123}), I(U_{13};Y_3|U_{123}) \big\} \\
\notag
& + I(X,U_{12},U_{13};U_{23}|U_{123}) - \min\big\{ I(U_{23};U_{12},Y_2|U_{123}), I(U_{23};U_{13},Y_3|U_{123})  \big\} \\
\notag
& + I(X;U_1|U_{123},U_{12},U_{13}) - I(U_1;Y_1|U_{123},U_{12},U_{13})\\
\notag
& + I(X;U_2|U_{123},U_{12},U_{23}) - I(U_2;Y_2|U_{123},U_{12},U_{23})\\
& + I(X;U_3|U_{123},U_{13},U_{23}) - I(U_3;Y_3|U_{123},U_{13},U_{23}).
\end{align}
\end{subequations}
%\end{figure*}
\end{theorem}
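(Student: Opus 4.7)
The plan is a superposition random-coding achievability proof combined with Wyner-Ziv style joint binning, organised around the natural hierarchy of the seven auxiliary variables indexed by nonempty $S\subseteq\{1,2,3\}$. The outermost layer $U_{123}$ carries a description for all three receivers; the pairwise $U_{12}, U_{13}, U_{23}$ are superimposed on $U_{123}$; and the private $U_1, U_2, U_3$ are superimposed on the relevant outer layers. Each $U_S$ will be intended for exactly the receivers in $S$, so its binning rate is governed by the worst receiver in $S$ --- this is the source of the $\min$ terms in the sum-rate expressions.

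For a fixed joint law satisfying the Markov chain~\eqref{Eqn:TCG-Achievability-Markov}, I generate each codebook with size matching its conditional covering rate, using a Marton-style joint construction for the pairwise layers so that the triple $(U_{12},U_{13},U_{23})$ and its subfamilies can be jointly generated consistently with the source --- this is where the terms $I(X,U_{12};U_{13}\mid U_{123})$ and its cousins originate. Each codebook is randomly partitioned into bins, and the bin index of $U_S$ is further split across the messages $\{M_k:k\in S\}$ so that receiver~$j$'s cumulative messages $(M_1,\ldots,M_j)$ contain enough of the bin bits of every $U_S$ with $j\in S$ to let receiver~$j$ decode using its side information~$\bm{Y_j}$.

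Encoding proceeds layer-by-layer by joint-typicality covering; decoding at receiver~$j$ proceeds layer-by-layer by joint-typicality packing, using $\bm{Y_j}$ and the already-decoded outer codewords as context. The covering lemma delivers reliable encoding when each codebook size matches the stated conditional covering rate. The packing / Wyner-Ziv lemma delivers reliable decoding of $U_S$ at receiver~$j$ when the cumulative bin bits available to that receiver for $U_S$ exceed the codebook-size exponent minus $I(U_S;\text{context},Y_j\mid\ldots)$. Summing these constraints over all $S$ and collapsing telescopically recovers the three displayed inequalities; the $\min$ terms appear precisely because the cumulative bin bits present in $(M_1,\ldots,M_k)$ for a shared description $U_S$ must simultaneously satisfy the strictest receiver-$j$ constraint for every $j\in S$ with $j\leq k$. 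Finally the typical-average-distortion lemma shows that $\mathbb{E}\,\delta_j(X_i,\phi_j(U_{j,i},Y_{j,i}))$ concentrates at $\mathbb{E}\,\delta_j(X,\phi_j(U_j,Y_j))\leq D_j$, yielding the distortion bounds.

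The main obstacle is the rate-splitting bookkeeping: with seven descriptions and three rate layers, one must assign each description's bin bits to the messages so that every receiver-description pair is feasible and then verify that the resulting $(R_1,R_2,R_3)$ collapse to exactly the expressions in~\eqref{Eqn:TCG-Achievability-Rates}. This is a standard but tedious successive-refinement / Fourier-Motzkin style exercise, already carried out in~\cite{Timo-Aug-2011-A}; no new ideas are required once the hierarchy, codebook sizes, and bin allocations are set up correctly. A secondary but routine point is verifying that the Marton-style joint codebook constructions for the pairwise layers satisfy the covering lemma with the conditional-information exponents written above, and that the superposition Markov structure $(U_{123},U_{12},U_{13},U_{23},U_1,U_2,U_3)\markov X\markov (Y_1,Y_2,Y_3)$ is preserved by the chosen encoding rule.
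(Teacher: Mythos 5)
This theorem is not proved in the paper at all: it is imported verbatim from \cite{Timo-Aug-2011-A} (the appendix only specialises it to obtain Lemmas~\ref{Lem:SR-Achievability-Gen} and~\ref{Lem:SSC-Achievability-Gen}), so there is no in-paper argument to measure your attempt against. Your sketch does identify the right architecture --- a superposition hierarchy indexed by the subsets $S\subseteq\{1,2,3\}$, conditional covering at each layer (which is where terms such as $I(X,U_{12};U_{13}\mid U_{123})$ come from, since the $U_{13}$ codeword is drawn given only the $U_{123}$ codeword but must be jointly typical with $X$ \emph{and} the already-selected $U_{12}$ codeword), binning with the bin index of $U_S$ split across the messages $\{M_k : k\in S\}$, and the $\min$ terms arising because a shared description must be decodable by every receiver in $S$.

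However, as a proof the attempt has a genuine gap, and you name it yourself: the ``rate-splitting bookkeeping'' is waved off as ``standard but tedious \ldots already carried out in~\cite{Timo-Aug-2011-A}; no new ideas are required.'' Appealing to the reference whose theorem you are asked to establish is circular, and this particular bookkeeping is precisely where such arguments fail. The paper's own Remark~1(iv) points out that the analogous asserted upper bound in Heegard and Berger's Theorem~2 is \emph{incorrect} for three or more receivers --- an error of exactly this combinatorial kind --- so the claim that the bin-allocation and Fourier--Motzkin step is automatic cannot stand. A complete proof must exhibit an explicit assignment of covering rates and bin-bit splits to $(M_1,M_2,M_3)$, write down the full set of covering and packing constraints (one packing constraint per pair $(S,j)$ with $j\in S$, each involving the cumulative bins available in $(M_1,\ldots,M_j)$ and the context $Y_j$ plus previously decoded layers), and verify that eliminating the auxiliary split rates yields exactly the three inequalities in~\eqref{Eqn:TCG-Achievability-Rates}; none of that is done here. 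A secondary imprecision: the construction is sequential conditional covering given $U_{123}$, not a Marton mutual-covering of the pairwise triple, though the resulting rate expressions happen to coincide in form.
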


\subsection{Proof of Lemma~\ref{Lem:SR-Achievability-Gen}}

Suppose that the auxiliary random variables $(A_1,A_2,A_3)$ meet the conditions of Lemma~\ref{Lem:SR-Achievability-Gen}. Consider Theorem~\ref{Thm:TCG-Achievability} with $U_{12}$ and $U_{13}$ being constants and
\begin{subequations}
\begin{align}
U_{123} &= U_1 = A_1\\
U_{23}   &= U_2 = A_2\\
U_3 &= A_3.
\end{align}
\end{subequations}
The rate constraints of~\eqref{Eqn:TCG-Achievability-Rates} now simplify to those of Lemma~\ref{Lem:SR-Achievability-Gen}. 
\hfill $\blacksquare$

\subsection{Proof of Lemma~\ref{Lem:SSC-Achievability-Gen}}

Suppose that the auxiliary random variables $(A_{12},A_1,A_2)$ meet the conditions of Lemma~\ref{Lem:SSC-Achievability-Gen}. Consider Theorem~\ref{Thm:TCG-Achievability} with infinite $D_3$, set $U_{123}$, $U_{13}$, $U_{23}$ and $U_3$ to be constants, and $U_{12} = A_{12}$, $U_1 = A_1$ and $U_2 = A_2$. The rate constraints of~\eqref{Eqn:TCG-Achievability-Rates} now simplify to those of Lemma~\ref{Lem:SSC-Achievability-Gen}. \hfill $\blacksquare$

%%%%
%%%%
%%%%
%%%%

\section{Proof of Lemma~\ref{Lem:SR-Converse}}\label{App:SR-Converse}

We have
\begin{align}
%
%
%\notag
R_1 +\epsilon& \geq \frac{1}{n} H(M_1) \\
%
%
%\notag
&\geq \frac{1}{n} I(\bm{\tilde{X}_1}; M_1|\bm{Y_1})\\
%
%
%\notag
&\step{a}{\geq} \frac{1}{n} \big(H(\bm{\tilde{X}_1}|\bm{Y_1}) - n \varepsilon_1(n,\epsilon) \big)\\
&\step{b}{=} H(\tilde{X}_1|Y_1) - \varepsilon_1(n,\epsilon), \label{Eqn:R1bound}
\end{align}
where (a) applies Fano's inequality in the same way as~\eqref{Eqn:tildeYFano}, where $\varepsilon_1(n,\epsilon)$ can be chosen so that $\varepsilon_1(n,\epsilon) \rightarrow 0$ as $\epsilon \rightarrow 0$; and (b) follows because the pair $(\bm{\tilde{X}_1}, \bm{Y}_1)$ is i.i.d. Similarly, we have
\begin{align}
%
%
%\notag
R_1 + R_2+\epsilon &\geq \frac{1}{n} H(M_1,M_2)\\
%
%
%\notag
&\geq \frac{1}{n} I(\bm{\tilde{X}_1},\bm{X}; M_1 , M_2|\bm{Y_1} )\\
%
%
%\notag
&= \frac{1}{n} \Big(I(\bm{\tilde{X}_1};M_1,M_2|\bm{Y_1}) + I( \bm{X}; M_1,M_2 |\bm{\tilde{X}_1},\bm{Y_1}) \Big)\\
\notag
& \step{a}{=} \frac{1}{n} \Big(I(\bm{\tilde{X}_1};M_1,M_2|\bm{Y_1}) + I( \bm{X}; M_1,M_2 |\bm{\tilde{X}_1},\bm{Y_2}) +I( \bm{Y_2}; M_1,M_2 |\bm{\tilde{X}_1}) \\
%
%\notag
&\hspace{11mm} - I( \bm{Y_1}; M_1,M_2 |\bm{\tilde{X}_1})   \Big)\\
\notag
& \step{b}{=} \frac{1}{n} \Big(I(\bm{\tilde{X}_1};M_1,M_2|\bm{Y}_1) + I( \bm{\tilde{X}_2}; M_1,M_2 |\bm{\tilde{X}_1},\bm{Y_2}) +  I( \bm{X}; M_1,M_2 |\bm{\tilde{X}_1},\bm{\tilde{X}_2}, \bm{Y_2}) \\
%
%
%\notag
& \hspace{11mm} +I( \bm{Y_2}; M_1,M_2 |\bm{\tilde{X}_1}) - I( \bm{Y_1}; M_1,M_2 |\bm{\tilde{X}_1}) \Big)\\
\notag
& \step{c}{\geq}  H({\tilde{X}_1}|{Y_1}) + H({\tilde{X}_2} | {\tilde{X}_1},{Y_2}) - \varepsilon_1(n,\epsilon) - \varepsilon_2(n,\epsilon) + \frac{1}{n} \Big( I( \bm{X}; M_1,M_2 |\bm{\tilde{X}_1},\bm{\tilde{X}_2}, \bm{Y_2}) \\
&  \hspace{11mm}  + I( \bm{Y_2}; M_1,M_2 |\bm{\tilde{X}_1}) - I( \bm{Y_1}; M_1,M_2 |\bm{\tilde{X}_1}) \Big)\label{Eqn:R123}\\
\notag
& \step{d}{\geq}  H({\tilde{X}_1}|{Y_1}) + H({\tilde{X}_2} | {\tilde{X}_1},{Y_2}) - \varepsilon_1(n,\epsilon) - \varepsilon_2(n,\epsilon) \\
& \hspace{11mm} + \frac{1}{n} \Big(I( \bm{Y_2}; M_1,M_2 |\bm{\tilde{X}_1})   - I( \bm{Y_1}; M_1,M_2 |\bm{\tilde{X}_1})   \Big).
 \label{Eqn:R12bound}
\end{align}
The justification for the steps leading to \eqref{Eqn:R12bound} is: 
\begin{itemize}
\item[(a)] the Markov chain $(M_1,M_2) \markov (\bm{\tilde{X}_1}, \bm{X})\markov (\bm{Y_1}, \bm{Y_2})$; 
\item[(b)] $\bm{\tilde{X}_2}$ is determined by $\bm{X}$;  
\item[(c)] exploits the fact that $(\bm{\tilde{X}_1}, \bm{\tilde{X}_2}, \bm{Y_1}, \bm{Y_2})$ is i.i.d. and applies Fano's inequality twice, in a manner similar to~\eqref{Eqn:tildeYFano}, where $\varepsilon_1(n,\epsilon)$ and  $\varepsilon_2(n,\epsilon)$ can be chosen so that they tend to 0 as $\epsilon \rightarrow 0$; and
\item[(d)]  the nonnegativity of conditional mutual information.
\end{itemize}

We now bound the sum rate $R_1+R_2+R_3$.
Notice that the steps leading to \eqref{Eqn:R123} remain valid if we replace $R_1+R_2$ by $R_1+R_2+R_3$ and the pair of messages $(M_1, M_2)$ by the triple $(M_1,M_2,M_3)$. Indeed, we have
\begin{align}
\notag
R_1  + R_2 + R_3+\epsilon
&  \geq H({\tilde{X}_1}|{Y_1}) + H({\tilde{X}_2} | {\tilde{X}_1},{Y_2}) - \varepsilon_1(n,\epsilon) - \varepsilon_2(n,\epsilon)\\
\notag
& \hspace{11mm} +\frac{1}{n}\Big(  I( \bm{X}; M_1,M_2,M_3 |\bm{\tilde{X}_1},\bm{\tilde{X}_2}, \bm{Y_2}) \\
%
%
%\notag
& \hspace{11mm} + I( \bm{Y_2}; M_1,M_2,M_3 |\bm{\tilde{X}_1}) - I( \bm{Y_1}; M_1,M_2,M_3 |\bm{\tilde{X}_1})   \Big)\\%
\notag
& \step{a}{=}H({\tilde{X}_1}|{Y_1}) + H({\tilde{X}_2} | {\tilde{X}_1},{Y_2}) - \varepsilon_1(n,\epsilon) - \varepsilon_2(n,\epsilon)\\
\notag
& \hspace{11mm} +\frac{1}{n} \Big(  I( \bm{X}; M_1,M_2,M_3 |\bm{\tilde{X}_1},\bm{\tilde{X}_2}, \bm{Y_3}) \\
\notag &  \hspace{11mm} +  I( M_1,M_2,M_3;  \bm{Y_3} | \bm{\tilde{X}_1}, \bm{\tilde{X}_2}) -  I( M_1,M_2,M_3;  \bm{Y_2} | \bm{\tilde{X}_1},\bm{\tilde{X}_2})  \\
& \hspace{11mm} +I( \bm{Y_2}; M_1,M_2,M_3 |\bm{\tilde{X}_1})  - I( \bm{Y_1}; M_1,M_2,M_3 |\bm{\tilde{X}_1})   \Big)
\label{Eqn:SR-Converse-4}
\end{align}
where (a) invokes the Markov chain 
\begin{equation}
(M_1,M_2,M_3) \markov (\bm{\tilde{X}_1}, \bm{\tilde{X}_2}, \bm{X})\markov (\bm{Y_2}, \bm{Y_3}).
\end{equation}

Consider the first  conditional mutual information on the right hand side of~\eqref{Eqn:SR-Converse-4}. We have
\begin{align}
%
%\notag
\frac{1}{n} I(\bm{X};M_1,M_2,M_3|\bm{\tilde{X}_1},\bm{\tilde{X}_2},\bm{Y_3}) 
&\step{a}{\geq} 
\frac{1}{n} \sum_{i=1}^n I(X_i ; M_1,M_2,M_3,Y_{3,1}^{i-1},Y_{3,i+1}^n | \tilde{X}_{1,i},\tilde{X}_{2,i},Y_{3,i})\\
%
%
%\notag
&\step{b}{=} 
\frac{1}{n} \sum_{i=1}^n I(X_i ; C_i | \tilde{X}_{1,i}, \tilde{X}_{2,i}, Y_{3,i}) \\
%
%
%\notag
&\step{c}{\geq} \sum_{i=1}^n S'\big(\mathbb{E} \delta_3(X_i,\hat{X}_{3,i})\big)\\
%
%
%\notag
&\step{d}{\geq} S'\left(\mathbb{E} \frac{1}{n} \sum_{i=1}^n \delta_3(X_i,\tilde{X}_{3,i}) \right) \\
\label{Eqn:SR-Converse-3}
&\step{e}{\geq} S'(D_3 + \epsilon),
\end{align}
where (a) follows from the same reasoning as step (a) of~\eqref{Eqn:ConverseStep3};
in (b), we define 
\begin{equation}
C_i \triangleq \big(M_1,M_2,M_3,Y_{3,1}^{i-1},Y_{3,i+1}^n\big);
\end{equation}
and (c), (d) and (e) each follow the same reasoning as steps (a), (b) and (c) of~\eqref{Eqn:ConverseStep4} respectively. 

From~\eqref{Eqn:SR-Converse-4} and \eqref{Eqn:SR-Converse-3} we obtain:
\begin{align}
\notag
R_1 + R_2 + R_3+\epsilon &\geq H(\tilde{X}_1|Y_1) + H(\tilde{X}_2 | \tilde{X}_1, Y_2) + S'(D_3 + \epsilon) + \frac{1}{n} \Big( I(M_1,M_2,M_3;\bm{Y_3}|\bm{\tilde{X}_1},\bm{\tilde{X}_2}) \\
\notag
&\hspace{13mm} - I(M_1,M_2,M_3;\bm{Y_2}|\bm{\tilde{X}_1},\bm{\tilde{X}_2}) \Big) + \frac{1}{n} \Big( I(M_1,M_2,M_3;\bm{Y_2}|\bm{\tilde{X}_1})\\
&\hspace{13mm}  - I(M_1,M_2,M_3;\bm{Y_1}|\bm{\tilde{X}_1}) \Big) - \varepsilon_1(n,\epsilon) - \varepsilon_2(n,\epsilon). \label{Eqn:R123bound} 
\end{align}

Consider~\eqref{Eqn:R12bound} and \eqref{Eqn:R123bound}, and apply Lemma~\ref{Lem:Conditional-Mathis-Lemma} twice: once for 
\begin{equation}
R= X, \ S_1=Y_1,\ S_2=Y_2,\ T=Y_3\ \text{and}\ L =\tilde{X}_1,
\end{equation}
and once for 
\begin{equation}
R=X, \ S_1=Y_2,\ S_2=Y_3,\ T=Y_1\  \text{and}\ L =(\tilde{X}_1, \tilde{X}_2).
\end{equation}
We conclude that there exist auxiliary random variables $W_1$, $W_2$ and $W_3$ with 
\begin{equation}
|\set{W}_1|,|\set{W}_2|, |\set{W}_3| \leq |\set{X}|,
\end{equation}
and 
\begin{equation}
W_j \markov X\ \markov (Y_1, Y_2,Y_3), \qquad j = 1,2,3,
\end{equation} 
such that the rate tuple $(R_1,R_2,R_3)$ satisfies 
\begin{multline}
R_1+R_2 +\epsilon \geq H(\tilde{X}_1|Y_1)+ H(\tilde{X}_2|\tilde{X}_1,Y_2) +I(W_1; Y_2|\tilde{X}_1) - I(W_1;Y_1|\tilde{X}_1) \\ -  \varepsilon_1(n,\epsilon) - \varepsilon_2(n,\epsilon)\label{Eqn:R12bound*}
\end{multline} 
and
\begin{align}
\notag
R_1 + R_2 + R_3+\epsilon 
&\geq H(\tilde{X}_1|Y_1) + H(\tilde{X}_2 | \tilde{X}_1, Y_2) + S'(D_3 + \epsilon) - \varepsilon_2(n,\epsilon) - \varepsilon_1(n,\epsilon)  \\
&\hspace{3mm} + I(W_3;Y_3|\tilde{X}_1,\tilde{X}_2)  - I(W_3;Y_2|\tilde{X}_1,\tilde{X}_2) + I(W_2;Y_2|\tilde{X}_1) - I(W_2;Y_1|\tilde{X}_1).\label{Eqn:R123bound*}
\end{align}
The converse proof follows by \eqref{Eqn:R1bound},~\eqref{Eqn:R12bound*}, and \eqref{Eqn:R123bound*}, by letting $\epsilon \to 0$, and by the continuity of $S'(D_3)$ in $D_3$. 
\hfill $\blacksquare$

%%%%
%%%%
%%%%
%%%%

\section{Proofs of Theorem~\ref{Thm:SSC-Det-Deg-1}}\label{Proof:Thm:SSC-Det-Deg-1}

\subsection{Assertion (i)}

\emph{Achievability:} The rate constraints~\eqref{Eqn:SSC-Ach-1} reduce to~\eqref{Eqn:Proof-SSC-Det-Deg-1} upon setting $A_1 = \tilde{X}_1$ and $A_{12} = A_2 = \tilde{X}_2$ and invoking the assumptions $\tilde{X}_2 = \psi'(\tilde{X}_1)$ and $H(\tilde{X}_2|Y_1) \leq H(\tilde{X}_2|Y_2)$. 

\emph{Converse:} The lower bound on $R_1$ in~\eqref{Eqn:Proof-SSC-Det-Deg-1a} is trivial. The lower bound on the sum rate $R_1 + R_2$ in~\eqref{Eqn:Proof-SSC-Det-Deg-1b} follows by, now familiar, arguments: 
\begin{align}
R_1 + R_2 + \epsilon 
&\geq \frac{1}{n} H(M_1,M_2) \\
&\geq \frac{1}{n}I(\bm{X},\bm{\tilde{X}_2};M_1,M_2 | \bm{Y_2}) \\
&= \frac{1}{n} \Big( I(\bm{\tilde{X}_2};M_1,M_2 | \bm{Y_2}) + I(\bm{{X}};M_1,M_2 | \bm{\tilde{X}_2},\bm{Y_2}) \Big)\\
\notag
&= \frac{1}{n} \Big( I(\bm{\tilde{X}_2};M_1,M_2 | \bm{Y_2}) + I(\bm{{X}};M_1,M_2 | \bm{\tilde{X}_2},\bm{Y_1})  \\
&\hspace{10mm} + I(M_1,M_2;\bm{Y_1}|\bm{\tilde{X}_2}) - I(M_1,M_2;\bm{Y_2}|\bm{\tilde{X}_2})\Big)\\
\notag
&\step{a}{\geq}  H(\tilde{X}_2 | Y_2) + H(\tilde{X}_1 | \tilde{X}_2,Y_1) - \varepsilon(n,\epsilon)  \\
&\hspace{10mm} + \frac{1}{n} \Big(I(M_1,M_2;\bm{Y_1}|\bm{\tilde{X}_2}) - I(M_1,M_2;\bm{Y_2}|\bm{\tilde{X}_2})\Big)\\
&\step{b}{=}  H(\tilde{X}_2 | Y_2) + H(\tilde{X}_1 | \tilde{X}_2,Y_1) - \varepsilon(n,\epsilon) + I(W;Y_1|\tilde{X}_2) - I(W;Y_2|\tilde{X}_2) \\
&\step{c}{\geq}  H(\tilde{X}_2 | Y_2) + H(\tilde{X}_1 | \tilde{X}_2,Y_1) - \varepsilon(n,\epsilon),
\end{align}
where (a) applies Fano's inequality and that $\tilde{X}_1$ can be computed as a function of $X$ and  $\varepsilon(n,\epsilon) \rightarrow 0$ as $\epsilon \rightarrow 0$; (b) uses Lemma~\ref{Lem:Conditional-Mathis-Lemma}; and (c) invokes the assumption $\cln{Y_1}{Y_2}{\tilde{X}_2}$.
\hfill $\blacksquare$

\subsection{Assertion (ii)}

\emph{Achievability:} The rate constraints~\eqref{Eqn:SSC-Ach-1} reduce to~\eqref{Eqn:Proof-SSC-Det-Deg-2} upon setting $A_{12} = \tilde{X}_1$, $A_2 = \tilde{X}_2$ and $A_1 = $ constant and invoking the assumptions $\tilde{X}_1 = \psi'(\tilde{X}_2)$ and $H(\tilde{X}_1|Y_1) \leq H(\tilde{X}_1|Y_2)$. 

\emph{Converse:} The converse holds because for $j=1,2$, we have $R_j\geq H(\tilde{X}_j|Y_j)\geq 0$. 
\hfill $\blacksquare$

\bibliographystyle{IEEEtran}
\bibliography{Refs}

\end{document}